\documentclass[11pt,english]{article}

\newcommand{\ARXIV}[1]{}

\usepackage{amsmath,amssymb,amsthm}
\usepackage{multicol}
\usepackage[table]{xcolor}
\usepackage[margin=1in]{geometry}
\usepackage{graphicx,color}
\usepackage{enumitem}
\usepackage{fullpage}
\usepackage[noblocks]{authblk}
\usepackage{tcolorbox}
\usepackage{babel}
\usepackage{wrapfig}
\usepackage{MnSymbol}
\usepackage{mdframed}
\usepackage{mathtools}
\usepackage{floatrow}
\usepackage{multirow}
\usepackage{booktabs}
\usepackage{palatino}
%\usepackage{kpfonts}

%\usepackage{lineno}
%\linenumbers

\usepackage[ruled,linesnumbered,vlined]{algorithm2e}
\usepackage{tablefootnote}
\usepackage{thm-restate}
\usepackage{bbding}
\usepackage{pifont}
\usepackage{nicefrac}
\DeclareMathOperator{\polylog}{polylog}

\newcommand{\R}{\mathbb{R}}
\newcommand{\OO}{\widetilde{O}}
\newcommand{\CCC}{{\cal C}}
\newcommand{\OPT}{\mbox{\rm OPT}}
\newcommand{\diam}{\mbox{\rm diam}}

\newcounter{magicrownumbers}

\graphicspath{{./figures/}}

\usepackage{tablefootnote}

\definecolor{Darkblue}{rgb}{0,0,.8}
\definecolor{Brown}{cmyk}{0,0.61,1.,0.60}
\definecolor{Purple}{cmyk}{0.45,0.86,0,0}
\definecolor{Darkgreen}{rgb}{0.133,0.500,0.133}

\definecolor{MyGreen}{rgb}{0.200,0.500,0.200}
\renewcommand{\emph}[1]{{\color{MyGreen}{\em #1}}}

\usepackage[colorlinks,linkcolor=Darkblue,filecolor=blue,citecolor=blue,urlcolor=Darkblue,pagebackref]{hyperref}
\usepackage[nameinlink]{cleveref}

\usepackage[colorinlistoftodos,prependcaption,textsize=tiny]{todonotes}

\newcommand{\namedref}[2]{\hyperref[#2]{#1~\ref*{#2}}}
\newcommand{\propref}[1]{\hyperref[#1]{property~(\ref*{#1})}}

\newcommand{\eps}{\varepsilon}
\newcommand{\CC}{\mathcal{C}}
\newcommand{\DS}{\mathcal{DS}}
\newcommand{\AAA}{\mathcal{A}}
\newcommand{\IGNORE}[1]{}
\newcommand{\Ex}{\mathbb{E}}
\newcommand{\Var}{\textrm{Var}}
\newcommand{\XX}{X}

\newtheorem*{theorem*}{Theorem}
\newtheorem{theorem}{Theorem}[section]
\newtheorem{lemma}[theorem]{Lemma}
\newtheorem{definition}[theorem]{Definition}

\newtheorem{corollary}[theorem]{Corollary}

\newtheorem*{question*}{Question}

\newtheorem{remark}[theorem]{Remark}

\newtheorem*{conjecture*}{Conjecture}

\newtheorem{fact}[theorem]{Fact}

\newcommand{\old}[1]{{}}

\hypersetup{
  %  colorlinks=true,
   % linkcolor=blue,
    %filecolor=violet,  
    %citecolor=violet,
    %urlcolor=cyan,
  colorlinks=true,
  linkcolor=red!70!black,
  linktoc=all,
  citecolor=blue,
    pdfpagemode=FullScreen,
}

\title{Fast Static and Dynamic Approximation Algorithms for Geometric Optimization Problems:\\ Piercing, Independent Set, Vertex Cover, and Matching}

\date{}

 \author{Sujoy Bhore\thanks{Department of Computer Science \& Engineering, Indian Institute of Technology Bombay, Mumbai, India.\\ Email: \href{sujoy@cse.iitb.ac.in}{sujoy@cse.iitb.ac.in}}
 \quad
 Timothy M. Chan\thanks{Department of Computer Science, University of Illinois at Urbana-Champaign. Email: \href{tmc@illinois.edu}{tmc@illinois.edu}.\\  
 Work supported in part by
NSF Grant CCF-2224271.}
}

\begin{document}
\maketitle

\begin{abstract}
We develop simple and general techniques to obtain faster (near-linear time) static approximation algorithms, as well as efficient dynamic data structures, for four fundamental geometric optimization problems: \emph{minimum piercing set} (\textsc{MPS}), \emph{maximum independent set} (\textsc{MIS}), \emph{minimum vertex cover} (\textsc{MVC}), and \emph{maximum-cardinality matching} (\textsc{MCM}).  
%A non-exhaustive list of our new results includes the following:
Highlights of our results include the following:
%\footnote{Due to paucity, we do not include the exhaustive list of results in the abstract.} - 

\begin{itemize}
    \item For $n$ axis-aligned boxes in any constant dimension $d$, we give an $O(\log \log n)$-approximation algorithm for \textsc{MPS} that runs in $O(n^{1+\delta})$ time for an arbitrarily small constant $\delta>0$. This significantly improves the previous $O(\log\log n)$-approximation algorithm by Agarwal, Har-Peled, Raychaudhury, and Sintos (SODA~2024), which ran in $O(n^{d/2}\mathop{\rm polylog} n)$ time.
    %\todo{S: they always state in terms of min. piercing no. Should we state like that?}
    %Our result significantly improves the running time while adding a small constant to the approximation factor. 

    \item Furthermore, we show that our algorithm can be made fully dynamic with $O(n^{\delta})$ amortized update time.  Previously, Agarwal et al.~(SODA~2024) obtained dynamic results only in $\mathbb{R}^2$ and achieved only $O(\sqrt{n}\mathop{\rm polylog} n)$ amortized expected update time. 

    \item For $n$ axis-aligned rectangles in $\mathbb{R}^2$, we give an $O(1)$-approximation
    algorithm for \textsc{MIS} that runs in $O(n^{1+\delta})$ time. Our result significantly improves the running time of the celebrated algorithm by Mitchell (FOCS~2021) (which was about $O(n^{21})$), and answers one of his open questions.  Our algorithm can also be made fully dynamic with $O(n^{\delta})$ amortized update time.
    
%    and G\'alvez et al. (SODA~2022). Our result answers an open question of Mitchell (FOCS~2021), who asked whether it is possible to improve the running time of a constant-factor approximation algorithm significantly. 
    
%    \item We give the first fully dynamic $O(1)$-approximation algorithm for \textsc{MIS} with sublinear update time ($O(n^{\delta})$) for axis-aligned rectangles in $\mathbb{R}^2$. Earlier work of Henzinger, Neumann, and Wiese (SoCG 2020) maintained only an $O(\log n)$-approximation.  % with $\polylog$-update time with aspect ratio assumption for the weighted setting. 

%    If the rectangles are weighted, for both static and dynamic case, we achieve $O((1/\delta)^2\log \log n)$-approximation with the running time $O(n^{1+O(\delta)})$ and $O(n^{O(\delta)})$, respectively. 
    
    \item For $n$ (unweighted or weighted) fat objects in any constant dimension, we give a dynamic $O(1)$-approximation algorithm for \textsc{MIS} with $O(n^{\delta})$ amortized update time. 
    %We can maintain the same dynamically with $O(n^{O(\delta)})$ amortized time\todo{S: Should we mention the unweighted result explicitly?}. Our unweighted result for 
    %unweighted balls is the first $O(1)$-aproximate solution that can maintained in sublinear ($n^{\delta}$) update time any constant dimension. 
    Previously, Bhore, N\"ollenburg, T\'oth, and Wulms (SoCG 2024) obtained efficient dynamic $O(1)$-approximation algorithms only for disks in $\mathbb{R}^2$ and  only in the unweighted setting.

    \item For $n$ axis-aligned rectangles in $\mathbb{R}^2$, we give 
    a dynamic $(\frac{3}{2}+\varepsilon)$-approximation algorithm for \textsc{MVC} with $O(\mathop{\rm polylog} n)$ amortized update time for any constant $\varepsilon>0$. Our static result improves the running time of Bar-Yehuda, Hermelin, and Rawitz (2011). For disks in $\mathbb{R}^2$ or hypercubes in any constant dimension, we give the first fully dynamic $(1+\varepsilon)$-approximation algorithm for \textsc{MVC} with $O(\mathop{\rm polylog}n)$  amortized update time. 
    %These results also hold in the bipartite case. 

    \item For (monochromatic or bichromatic) disks in $\mathbb{R}^2$ or hypercubes in any constant dimension, we give the first fully dynamic $(1+\varepsilon)$-approximation algorithm for \textsc{MCM} with $O(\mathop{\rm polylog}n)$ amortized update time. 
    %Previously Sariel et al. (SoCG 2022) static algo. but nothing was known in the dynamic setting.  
\end{itemize}

%Our results on MPS and MIS are 
%A line about the technique...

\end{abstract}

\newpage
\setcounter{tocdepth}{2}
	\tableofcontents

\newpage	

\section{Introduction}

%\begin{itemize}
 %   \item Near-linear time algorithms for geometric optimization problems. 
 %   \item Important due to need for fast algorithms for large scale data etc. 
 %   \item Pre-requisites for dynamic algorithm for sublinear update time. 
%\end{itemize}

In this work, we study geometric versions of four fundamental optimization problems: \emph{minimum piercing set} (\textsc{MPS}), \emph{maximum independent set} (\textsc{MIS}), 
\emph{minimum vertex cover} (\textsc{MVC}), and 
\emph{maximum-cardinality matching} (\textsc{MCM}). 
The first three problems are NP-hard for most types of geometric objects, and polynomial-time approximation algorithms for these problems have been extensively studied in the computational geometry literature.  Recently, researchers have started exploring techniques to improve the running time of such approximation algorithms, e.g., for geometric set cover \cite{AgarwalP20, ChekuriHQ20,  ChanH20}, MPS \cite{AHRS23},
and MCM \cite{Har-PeledY22}.  Given the advent of large-scale datasets in today's world, algorithms with \emph{near-linear running time} are especially desirable.  Furthermore, many real-world problems require efficient processing of geometric data which undergo updates.  This has also prompted researchers to explore efficient
\emph{dynamic} approximation algorithms for these problems \cite{Henzinger0W20,  AgarwalCSXX22, ChanH21, ChanHSX22, AHRS23, BNTW24}.  (Note that the existence of efficient static algorithms with subquadratic running time is a prerequisite to the existence of efficient dynamic algorithms with sublinear update time.)  In this paper, we continue investigating these two research directions.

\paragraph{Minimum piercing set (\textsc{MPS}).}
Given a set $S$ of $n$ geometric objects in $\mathbb{R}^d$, a subset $P\subset \mathbb{R}^d$ is a \emph{piercing set} of $S$ 
if every object of $S$ contains at least one point of $P$. The \emph{minimum piercing set} (\textsc{MPS}) problem asks for a piercing set $P$ of the smallest size.
%, which is denoted as the \emph{Piercing number} of $S$.   
%The \textsc{MPS} problem can be defined over various geometric objects such as disks, boxes, halfspaces, rectangles, etc. The \textsc{MPS} problem is a fundamental problem in Computational Geometry and 
The problem has numerous applications, in facility location, wireless sensor networks, etc.~\cite{sharir1996rectilinear, huang2002approximation, ben2000obnoxious, nielsen2003maintenance}.
The problem may be viewed as a ``continuous''  version of geometric hitting set (where $P$ is constrained to be a subset of a given discrete point set rather than $\R^d$), and geometric hitting set in turn corresponds to geometric set cover in the dual range space.
In particular, by the standard greedy algorithm for set cover,
one can compute an $O(\log n)$-approximation to the minimum piercing set in polynomial time for any family of piercing set with constant description complexity (since it suffices to work with a discrete set of $O(n^d)$ candidate points).

%The \textsc{MPS} problem is closely related to the classical geometric Hitting set problem. In the geometric Hitting set, we are given a geometric range space $\Sigma = (\mathcal{X}, \mathcal{R})$, where $\mathcal{X}$ is a (finite or infinite) set of points in 
%$\mathbb{R}^d$ and $\mathcal{R} \subseteq 2^\mathcal{X}$ is a finite family of ranges, defined by geometric regions, such as rectangles, balls, hyperplanes etc. A subset $\mathcal{H}\subseteq \mathcal{X}$ is a \emph{hitting set} of $\Sigma$, if $\mathcal{H}\cap R \ne \emptyset$ for all $R\in \mathcal{R}$. The hitting set problem asks to compute a minimum-size hitting set of $\Sigma$.
%When $\Sigma=(\mathbb{R}^d,S)$, the \textsc{MPS} problem is exactly the same as the Hitting set problem, where instead of considering the points to be entire $\mathbb{R}^d$ we can choose from the set of vertices of the arrangement of $S$. Hitting set for general range spaces is one of Karp's classical \textsf{NP}-Complete problems~\cite{HopcroftK73}. The \textsc{MPS} problem is \textsf{NP}-Complete already when the objects are set of boxes in $\mathbb{R}^2$~\cite{fowler1981optimal}. 

For unit squares/hypercubes, unit disks/balls, or more generally, near-equal-sized \emph{fat} objects in $\R^d$ for any constant $d$, there are simple $O(n)$-time $O(1)$-approximation algorithms, whereas the well-known shifted grid strategy of Hochbaum and Maass~\cite{hochbaum1985approximation} gives a \textsf{PTAS}, computing a $(1+\eps)$-approximation in $n^{O(1/\eps^d)}$ time.
For fat objects of arbitrary sizes, a simple greedy algorithm yields an $O(1)$-approximation; the running time is naively quadratic, but  can be improved to $O(n\polylog n)$ in the case of fat objects in $\R^2$ by using range-searching data structures~\cite{efrat2000dynamic}.
%gave an $O(1)$-factor approximation algorithm that runs in linear time. 
Chan~\cite{Chan03} gave a separator-based
\textsf{PTAS} for arbitrary fat objects (in particular, arbitrary hypercubes and balls), running in $n^{O(1/\eps^d)}$ time
(see also~\cite{chan2005approximating} for another PTAS for the case of unit-height rectangles in $\R^2$).
%gave a \textsf{PTAS} for a set of boxes with unit height and arbitrary width. 

For arbitrary boxes\footnote{Throughout this paper, all rectangles, boxes, squares, and hypercubes  are axis-aligned by default.} in $\mathbb{R}^d$ (which may not be fat), the current best polynomial-time approximation algorithm achieves approximation ratio\footnote{Throughout this paper, $\OPT$ denotes the optimal value to an optimization problem.}
$O(\log\log\OPT)$.  The approach is to solve the standard linear program (LP) relaxation for piercing/hitting set (either exactly, or approximately by a multiplicative weight update method~\cite{BronnimannG95}), and then round the LP solution via \emph{$\eps$-nets}---the $\log\log$ approximation ratio comes from combinatorial bounds  by Aronov, Ezra, and Sharir~\cite{aronov2009small}  on $\eps$-nets for $d\in\{2,3\}$ and by
Ezra~\cite{ezra2010note} on weak $\eps$-nets for $d\ge 4$.

All these methods have high polynomial running time, prompting the following questions:

%it is possible to obtain $O(\log \mathcal{P})$\footnote{$\mathcal{P}$ is the optimum piercing number.}-factor approximate solution in $O(n^{d})$ time by running by using \textsf{MWU} greedy mechanism. This factor can be further improved to $O(\log \log \mathcal{P})$, by using weak $\eps$-net (see~\cite{ezra2010note, aronov2009small}). Due to the high running time of these algorithms, it is an interesting question in this area is what is the best-possible factor one can achieve in near-linear time?

\begin{center}
\emph{Do there exist near-linear-time sublogarithmic-approximation algorithms for \textsc{MPS} for various families of geometric objects? Do there exist similar dynamic algorithms with sublinear update time?}
\end{center}

In SODA 2024, Agarwal, Har-Peled, Raychaudhury, and Sintos~\cite{AHRS23} presented a faster randomized $O(\log \log \OPT)$-approximation algorithm with expected running time\footnote{
Throughout this paper, the $\OO$ notation hides polylogarithmic factors in $n$.
} $\OO(n^{d/2})$ for boxes in $\mathbb{R}^d$. Moreover, they showed the expected running time can be improved to near-linear but only when
$\OPT$ is smaller than $n^{1/(d-1)}$.
%using a sampling-based technique for $\mathcal{P} = O(n^{1/(d-1)})$. In the plane, this gives the improved running time $O(n \log \mathcal{P})$ (assuming $\mathcal{P}<n/\log^{\Omega(1)}n$). 
Furthermore, they studied the problem in the dynamic setting. For rectangles  in 
$\mathbb{R}^2$ they obtained $\OO(\sqrt{n})$ amortized expected time per insertion/deletion.  There has been no other prior work on dynamic MPS (ignoring the easy case of near-equal-sized fat objects, where a straightforward grid strategy yields $O(1)$-approximation in $O(1)$ update time). 
%and update time can be further improved to $\OO(n^{1/3})$ for squares. 

\paragraph{Our contributions to \textsc{MPS}.}
\begin{itemize}
\item For boxes in $\R^d$ for any constant $d$,
we present an $O(\log\log\OPT)$-approximation algorithm running in $O(n^{1+\delta})$ time (Theorem~\ref{thm:pierce:box}) for an arbitrarily small constant $\delta>0$.
The running time is a dramatic improvement over Agarwal et al.'s previous
$\OO(n^{d/2})$ bound~\cite{AHRS23} for all $d\ge 3$.
\item Furthermore, our $O(\log\log\OPT)$-approximation algorithm for boxes can be made dynamic with $O(n^\delta)$ amortized update time (Theorem~\ref{thm:pierce:box}) for any $d\ge 2$.  This is a significant  improvement over
Agarwal et al.'s previous $\OO(\sqrt{n})$ bound, which addressed only the $\R^2$ case.
\item For fat objects in $\R^d$ for any constant $d$ (assuming constant description complexity), 
we present an $O(1)$-approximation algorithm running in $O(n^{1+\delta})$ time (Theorem~\ref{thm:pierce:fat}).  Recall that a near-linear-time implementation of the $O(1)$-approximation greedy algorithm~\cite{efrat2000dynamic} was known only in $\R^2$; the exponent in the previous time bound converges to 2 as $d$ increases, due to the use of range searching.
\item Furthermore, our $O(1)$-approximation algorithm for fat objects can be made dynamic with $O(n^\delta)$ amortized update time (Theorem~\ref{thm:pierce:fat}) for any constant $d$.  No previous dynamic algorithms for fat objects were known even for the case of disks in $\R^2$
(Agarwal et al.~\cite{AHRS23} did consider the case of squares in $\R^2$ but obtained a weaker $\OO(n^{1/3})$ update time bound).

\end{itemize}

%\begin{table}[ht]
%\begin{tabular}{llll}
%\toprule

%\multicolumn{1}{c}{Objects}  & \multicolumn{1}{c}{Approx. Ratio} & \multicolumn{1}{c}{Running time} & \multicolumn{1}{c}{Reference} \\ 

%\midrule

%\rowcolor[gray]{.9} Axis-aligned boxes in $\mathbb{R}^d$ & $O((1/\delta)^d\log \log n)$ & $O(n^{1+O(\delta)})$
%)  & Theorem~\ref{thm:pierce:box} \\ %\hline

%Fat objects in $\mathbb{R}^d$ & $O(1/\delta)$ & $O(n^{(1+O(\delta))})$ & Theorem~\ref{thm:pierce:fat} \\ %\hline

%\bottomrule
%\end{tabular}
%\caption{\small Summary of results on static \textsc{MPS}.\label{}}
%\end{table}

%%dynamic MPS. 
%\begin{table}[ht]
%\begin{tabular}{llll}
%\toprule

%\multicolumn{1}{c}{Objects}  & \multicolumn{1}{c}{Approx. Ratio} & \multicolumn{1}{c}{Amortized Update time} & \multicolumn{1}{c}{Reference} \\ 

%\midrule

%\rowcolor[gray]{.9} Axis-aligned boxes in $\mathbb{R}^d$ & $O((1/\delta)^d\log \log n)$ & $O(n^{O(\delta)})$
%)  & Theorem~\ref{thm:pierce:box} \\ %\hline

%Fat objects in $\mathbb{R}^d$ & $O(1/\delta)$ & $O(n^{O(\delta))})$ & Theorem~\ref{thm:pierce:fat} \\ %\hline

%\bottomrule
%\end{tabular}
%\caption{\small Summary of results on dynamic \textsc{MPS}.\label{}}
%\end{table}

%\textcolor{red}{I think we could not have the tables for dynamic MPS and MIS.}

\paragraph{Maximum independent set (\textsc{MIS}).} 
Given a set $S$ of objects in $\R^d$,
the geometric \textsc{MIS} problem is to choose a maximum-cardinality subset $I\subseteq S$ of \emph{independent} (i.e., pairwise-disjoint) objects. 
The problem is among the most popular geometric optimization problems studied.
It is related to MPS: the
size of the MIS is always at most the size of the MPS; in fact, a standard LP for MIS is dually equivalent to the standard LP for MPS.

%Stronger theoretical results are known for the \textsc{MIS} problem in the geometric setting, in comparison to general graphs. For instance, even for unit disks in the plane,  the problem remains \textsf{NP}-hard~\cite{ClarkCJ90} and \textsf{W[1]}-hard~\cite{Marx05}, but 

For near-equal-sized fat objects for any constant $d$,
Hochbaum and Maass's shifted grid method yields a \textsf{PTAS}~\cite{hochbaum1985approximation}. 
For fat objects of arbitrary sizes, a simple greedy algorithm yields an $O(1)$-approximation~\cite{efrat2000dynamic}, but several \textsf{PTAS}s with running time $n^{O(1/\eps^d)}$ or $n^{O(1/\eps^{d-1})}$ have been found, e.g.,
via shifted quadtrees, geometric separators, or local search
\cite{erlebach2005polynomial,Chan03,ChanH12}.

The case of arbitrary rectangles in $\R^2$ has especially garnered considerable attention.
A $(\log n)$-approximation $O(n\log n)$-time algorithm via straightforward binary divide-and-conquer has long been known~\cite{AgarwalKS98} (see also~\cite{KhannaMP98, Nielsen00}); by increasing the branching factor, the approximation ratio can be lowered to $\eps\log n$ with running time $n^{O(1/\eps)}$~\cite{Chan04, BermanDMR01}.  The first substantial progress was made by 
Chalermsook and Chuzhoy~\cite{ChalermsookC09} in SODA 2009, who obtained an $O(\log\log n)$-approximation polynomial-time algorithm for rectangles, by rounding the LP solution using an intricate analysis.
In a different direction, Adamaszek and Wiese~\cite{AdamaszekHW19} in SODA 2014 obtained a quasi-\textsf{PTAS}, i.e., a $(1+\eps)$-approximation algorithm running in $n^{\polylog n}$ time, by using separators and dynamic programming; their approach works more generally for polygons, in particular, arbitrary line segments (see also~\cite{FoxP11, CslovjecsekPW24} for other related results).  For rectangles,
Chuzhoy and Ene~\cite{ChuzhoyE16} improved this to a ``quasi-quasi-\textsf{PTAS}'', running in
$n^{\mathop{\rm polyloglog} n}$ time, with a more complicated algorithm.
In a remarkable breakthrough, Mitchell~\cite{mitchell2022approximating} obtained the first
polynomial-time $O(1)$-approximation algorithm for rectangles, by proving a variant of a combinatorial conjecture of Pach and Tardos~\cite{PachT00} and applying straightforward dynamic programming; the approximation ratio was 10 in the original paper, but was subsequently lowered to $2+\eps$ by G\'alvez, Khan, Mari, M\"omke, Pittu, and Wiese~\cite{galvez20223,GalvezIMPROVED}.  

The running time of Mitchell's algorithm is high:
the original paper stated a (loose) upper bound of $O(n^{21})$, and although the exponent is likely improvable somewhat with more effort, it is not clear how to get a more practical polynomial bound.
For example, even if the original version of Pach and Tardos's conjecture were proven, the dynamic program would still require at least $n^4$ table entries.  (The running time of G\'alvez et al.'s algorithm is even higher.)
%
%also developed for arbitrary disks, squares, and more generally hypercubes and fat objects in constant dimensions~\cite{HuntMRRRS98,Chan03,AlberF04,erlebach2005polynomial}. 
%In their seminal work, Chan and Har-Peled~\cite{ChanH12} showed that a local-search-based approach yields a \textsf{PTAS} for an arrangement of pseudo-disks. 
%However, for non-fat objects, the scenario is quite different. For instance, finding a constant-factor approximation algorithm for the \textsc{MIS} problem on axis-aligned rectangles had been a long-standing open problem. In recent breakthroughs~\cite{mitchell2022approximating, galvez20223}, constant factor approximation algorithms. However, 
%the running time of these algorithms is high, and hence 
At the end of his paper, Mitchell~\cite{mitchell2022approximating} specifically asked the following question:

\begin{center}
\emph{``Can the running time of a constant-factor approximation algorithm be improved significantly?''}
\end{center}

The dynamic version of geometric MIS was first studied by Henzinger, Neumann, and Wiese\\~\cite{Henzinger0W20}, who gave dynamic $O(1)$-approximation algorithms for 
squares/hypercubes with amortized polylogarithmic update time (see also \cite{bhore2020dynamic}).
%(Note that it is not possible to maintain a $(1+\eps)$-approximation in sublinear time for a sufficiently small $\eps$, even in the case of unit squares in $\R^2$, since the static problem has a lower bound of $n^{\Omega(1/\eps)}$ under ETH~\cite{marx2007optimality}.)
For arbitrary boxes, Henzinger et al.'s algorithm requires approximation ratio $O(\log^{d-1}n)$.
Cardinal, Iacono, and Koumoutsos~\cite{CardinalIK21} designed dynamic algorithms for fat objects with sublinear worst-case update time, using range searching data structures, but the exponent in the update bound converges to 1 as $d$ increases (even for squares or disks in $\R^2$, their update bound is large, near $\OO(n^{3/4})$). Very recently, Bhore, N\"ollenburg, T\'oth, and Wulms~\cite{BNTW24} showed that for disks in $\mathbb{R}^2$, an $O(1)$-approximation can be maintained in expected polylogarithmic update time.  Their algorithm for disks used certain dynamic data structures for range/intersection searching (requiring generalizations of dynamic 3-dimensional convex hulls~\cite{Chan10,Chan20,KaplanMRSS20}); even if it could be extended to balls and fat objects in higher dimensions, the exponent would also converge to 1 for larger $d$ (since convex hulls have much larger combinatorial complexity as dimension exceeds 3).

\begin{center}
\emph{Do there exist dynamic $O(1)$-approximation algorithms with sublinear update time for rectangles in $\mathbb{R}^2$, or with (say) $O(n^{0.1})$ update time for fat objects in $\R^d$ for $d\ge 3$?
%What about the weighted settings? 
}
\end{center}

\paragraph{Our contributions to \textsc{MIS}.}
\begin{itemize}
\item For rectangles in $\R^2$, we present an $O(1)$-approximation algorithm running in $O(n^{1+\delta})$ time (Theorem~\ref{thm:indep:box}) for an arbitrarily small constant $\delta>0$.  The running time is a dramatic improvement over Mitchell's previous algorithm~\cite{mitchell2022approximating}.
\item Furthermore, our $O(1)$-approximation algorithms for rectangles can be made dynamic with $O(n^\delta)$ amortized update time (Theorem~\ref{thm:indep:box}).  In contrast, Henzinger et al.'s previous dynamic algorithm~\cite{Henzinger0W20} had $O(\log n)$ approximation ratio.
\item For fat objects in $\R^d$ for any constant $d$ (assuming constant description complexity), we present an $O(1)$-approximation algorithm running in $O(n^{1+\delta})$ time (Theorem~\ref{thm:indep:fat}).
\item Furthermore, our $O(1)$-approximation algorithm for fat objects can be made dynamic with $O(n^\delta)$ amortized update time  (Theorem~\ref{thm:indep:fat}).  The update time is a significant improvement over Cardinal et al.'s~\cite{CardinalIK21}.  Surprisingly, range searching structures are not needed.
\item Our approach extends to the \emph{weighted} case (computing  the maximum-weight independent set for a given set of weighted objects).  For weighted rectangles in $\R^2$, our static and dynamic algorithms have the same running time but with $O(\log\log n)$ approximation ratio---this matches the current best approximation ratio for polynomial-time algorithms due to
Chaler-msook and Walczak~\cite{ChalermsookW21} (a generalization of the LP-rounding approach by Chalermsook and Chuzhoy~\cite{ChalermsookC09}).  The previous dynamic algorithm for weighted rectangles was due to Henzinger et al.~\cite{Henzinger0W20} and had $O(\log n)$ approximation ratio.  For weighted fat objects in $\R^d$, we obtain the same result with $O(1)$ approximation ratio.  In contrast, Cardinal et al.'s and Bhore et al.'s previous dynamic algorithms~\cite{CardinalIK21,BNTW24} inherently do not work in the weighted setting.  We obtain the \emph{first} efficient data structures for MIS for weighted disks in $\R^2$ and other types of weighted fat objects.
\end{itemize}

\paragraph{Minimum vertex cover (MVC).} 
Given an undirected graph $G=(V, E)$, a subset of vertices $\XX\subseteq V$ is a \emph{vertex cover} if each edge has at least one endpoint in $\XX$. The \emph{minimum vertex cover} (MVC) problem asks for a vertex cover with the minimum cardinality.
In the geometric version of the problem, we are given a set $S$ of $n$ geometric objects in $\R^d$ and want the MVC of the intersection graph of $S$.
In the exact setting, MVC is equivalent to MIS, as the complement of an MIS
is an MVC\@. However, from the approximation perspective, a sharp dichotomy exists between the two problems; for instance,  MIS on general graphs cannot be approximated with ratio $n^{1-\eps}$ under standard hypotheses~\cite{Zuckerman07},
%\footnote{No polynomial time algorithm can achieve an approximation factor $n^{1-\varepsilon}$ for any constant $\varepsilon>0$,
%unless $\textsf{P}=\textsf{ZPP}$.},
but there is a simple greedy 2-approximation algorithm for MVC: namely, just take a maximal matching\footnote{A \emph{matching} in a graph $G=(V,E)$ is a subset of edges $M\subseteq E$ such that no two edges in $M$ share a common endpoint. A matching $M$ is \emph{maximal} if for every edge $uv\in E\setminus M$,  either $u$ or $v$ is matched in $M$.} and output its vertices.
MVC has not received as much attention as MPS and MIS in the geometry literature, but is just as natural to study in geometric settings:
if we are given a set of geometric objects that are almost non-overlapping and we want to remove the fewest number of objects to eliminate all the intersections, this is precisely the MVC problem in the intersection graph.

Erlebach, Jansen, and Seidel~\cite{erlebach2005polynomial} gave the first \textsf{PTAS} for MVC for fat objects in $\R^d$ running in 
$n^{O(1/\eps^d)}$ time.
For rectangles in the plane,
Bar-Yehuda, Hermelin, and Rawitz~\cite{bar2011minimum}
obtained a $(\frac{3}{2}+\varepsilon)$-approximation algorithm.
Bar-Yehuda et al.'s work
made use of Nemhauser and Trotter's standard LP-based kernelization for MVC~\cite{nemhauser1975vertex}, which allows us to approximate the MVC by flipping to an MIS instance.
%Since the dual of MVC is MIS, by applying the Nemhauser and Trotter Theorem (see~\cite{nemhauser1975vertex}) as a preprocessing step, any \textsf{PTAS} for MIS can be converted into a \textsf{PTAS} for MVC~\cite{SarielVC23, bar2011minimum}. Chan and Har-peled~\cite{ChanH12} obtained a \textsf{PTAS} for MIS problem for pseudo-disks\footnote{Pseudo-disks are families of regions on the plane such that the boundaries of every pair of regions intersect at most twice}. %This implies a \textsf{PTAS} for MVC in non-crossing rectangle families (also noted by Bar-Yehuda et al.~\cite{bar2011minimum}). 
%For rectangles, they have obtained a $(\frac{3}{2}+\varepsilon)$-approximation algorithm, and an \textsf{EPTAS} for non-crossing rectangle families, which also extends for pseudo-disks. In a recent work,
Following the same kernelization approach,
Har-Peled~\cite{SarielVC23} noted that 
the known quasi-\textsf{PTAS} for MIS for polygons~\cite{AdamaszekHW19} and quasi-quasi-\textsf{PTAS}
%footnote{We refer to a $(1 + \varepsilon)$-approximation algorithm with running time of the form $n^{O(\polylog n)}$ or $n^{O(\poly(\log \log n))}$, for any fixed $\eps>0$, as a \textsf{QPTAS} or \textsf{QQPTAS} respectively.} 
for MIS for rectangles~\cite{ChuzhoyE16} imply a quasi-\textsf{PTAS} for MVC for polygons and quasi-quasi-\textsf{PTAS} for MVC for rectangles.
Recently, Lokshtanov, Panloan, Saurabh, Xue, and Zehavi~\cite{LokshtanovP00Z24} gave the first polynomial-time algorithm for strings (which include arbitrary line segments and polygons) that achieves a constant approximation ratio strictly below 2, using the Nemhauser--Trotter kernel and a number of new ideas.
%Similarly, by using the known quasi-\textsf{PTAS}
%for MIS for polygons in the plane~\cite{AdamaszekHW19}, one can obtain 
%a quasi- {\sf PTAS} for MVC for  polygons.

%for MVC for
%unweighted axis-aligned rectangles in the plane, 
%and \textsf{QPTAS} for weighted polygons in
%the plane. However, no \textbf{\emph{near-linear}} algorithms are known for MVC for disks/rectangles in $\mathbb{R}^2$, or hypercubes in $\mathbb{R}^d$ (for constant $d$).  

We are interested in improving the running time of  static algorithms as well as the dynamic version of geometric MVC\@.  
Dynamic MVC is a well-studied problem in the dynamic graph algorithms literature, under edge updates; e.g., see \cite{OnakR10, BhattacharyaHI18, BhattacharyaK19}.
%
%matching: \cite{OnakR10, GuptaP13, BernsteinS15, BernsteinS16, PelegS16, Solomon16, BhattacharyaHI18,  Behnezhad23, BhattacharyaKSW23, AzarmehrBR24} 
%
%~\cite{BhattacharyaK19, Solomon16, OnakR10, GuptaP13, stubbs2017metatheorems} 
%for an overview of the existing literature. 
However, these graph results are not directly applicable to the geometric setting, since the insertion/deletion of a single object may require as many as $\Omega(n)$ edge updates in the intersection graph in the worst case.
%we consider vertex updates, which is more natural for intersection graphs. 
%A natural barrier for designing dynamic graph algorithms under vertex updates is the maximum degree of the input graph, which may be $\Omega(n)$. For geometric intersection graphs, however, the challenge is to break such barriers by not explicitly maintaining the graph itself, which may have $m=\Omega(n^2)$ size in the worst case.
%with implicit maintenance. 
%
%Although geometric versions of dynamic set cover, dynamic hitting set, and dynamic independent set have received much attention recently (as we have mentioned), 
There has been no prior work on dynamic geometric MVC (ignoring the case of monochromatic, nearly-equal-sized fat objects, where it is not difficult to adapt the standard shifted grid strategy~\cite{hochbaum1985approximation} to maintain a $(1+\eps)$-approximation with $O(1)$ update time).

%the dynamic version of MVC, for geometric intersection graphs, subject to insertions and deletions of objects in $\mathcal{L}$. The goal is to obtain a data structure that can efficiently maintain an approximate MVC in $G_{\mathcal{L}}$ at any point in time, with truly sublinear update times for both insertions and deletions. 

\begin{center}
\emph{Do there exist near-linear-time better-than-2-approximation algorithms for \textsc{MVC} for various families of geometric objects? Do there exist similar dynamic  algorithms with sublinear update time?}
\end{center}

%\begin{center}
%    \emph{Given a dynamic collection of geometric objects in  $\R^d$, is it possible to maintain a $(1+\varepsilon)$-approximate MVC in truly sublinear update time?}
%\end{center}

% a wide class of geometric objects in $d$-dimensional Euclidean space. 

\paragraph{Our contributions to \textsc{MVC}.} 
\begin{itemize}
\item For rectangles in $\R^2$, we  speed up Bar-Yehuda et al.'s $(\frac{3}{2}+\eps)$-approximation polynomial-time algorithm~\cite{bar2011minimum} to run in $O(n\polylog n)$ time, and at the same time obtain a dynamic $(\frac{3}{2}+\eps)$-approximation algorithm with $O(\polylog n)$ amortized update time (Corollary~\ref{cor:rect}).

\item For disks in $\R^2$ and fat boxes (e.g., hypercubes) in $\R^d$ for any constant $d$, we speed up the previous \textsf{PTAS}~\cite{erlebach2005polynomial} to run in $O(n\polylog n)$ time (ignoring dependence on $\eps$), and  at the same time obtain a dynamic $(1+\eps)$-approximation algorithm with $O(\polylog n)$ amortized update time (Corollaries~\ref{disk:MVC} and~\ref{cor:fat:MVC}).
The fact that the approximation ratio is $1+\eps$ is notable and interesting: 
none of the known dynamic algorithms has approximation ratio $1+\eps$ for the other geometric optimization problems such as MPS and MIS\footnote{
This is with good reason: for MIS, it is not possible to maintain a $(1+\eps)$-approximation in sublinear time for a sufficiently small $\eps$, even in the case of unit squares in $\R^2$, since the static problem has a lower bound of $n^{\Omega(1/\eps)}$ under ETH~\cite{marx2007optimality}.
}, except in 1-dimensional special cases for intervals~\cite{Henzinger0W20,bhore2020dynamic,compton2020new}.

\item Similar results hold for bichromatic disks and fat boxes, for MVC in their \emph{bipartite} intersection graph
(Corollaries \ref{MVC:bipartite:disk} and~\ref{MVC:bipartite:box}).
\end{itemize}

Our results on MVC can be generalized to other types of fat objects in $\R^d$ (e.g., balls in $\R^3$) but with a larger time bound, dependent on range searching (with exponent converging to 1 as $d$ increases).  However, this is unavoidable for MVC (in contrast to our results on MIS): 
any dynamic approximation algorithm for MVC needs to recognize whether the MVC size is zero, and so must
know whether the intersection graph is empty.  Dynamic range emptiness (maintaining a dynamic set of input points so that we can quickly decide whether a query object contains any input point) can be reduced to this problem, by inserting all the input points, and repeatedly inserting a query object and deleting it.

\paragraph{Maximum-cardinality matching (MCM).} 

Another closely related classical optimization problem on graphs is maximum-cardinality matching (MCM), where the objective is to find a \emph{matching} with the largest number of edges in a given undirected graph $G$. The problem is related to MVC: the size of an MVC is always at least the size of an MCM; for bipartite graphs, they are well-known to be equal.
In fact, the standard LP for MVC is the dual to the LP for MCM\@.
MCM is polynomial-time solvable:
the classical algorithm by Hopcroft and Karp~\cite{HopcroftK73} runs in $O(m\sqrt{n})$ time
for bipartite graphs  with $n$ vertices and $m$ edges,
and Vazirani's algorithm~\cite{Vazirani94} achieves
the same run time for general graphs. By recent breakthrough results~\cite{ChenKLPGS22}, MCM can be solved in $m^{1+o(1)}$ time for bipartite graphs. 
Earlier, Duan and Pettie~\cite{DuanP10} obtained $O(m)$-time $(1+\eps)$-approximation algorithms for general graphs.

MCM on geometric intersection graphs has received some attention.
%Bonnet et al.~\cite{bonnet2023maximum} showed that for an intersection graph of $n$ geometric objects, a maximum matching can be found in $O(\rho^{3\omega/2}n^{\omega/2})$ time with high probability, where $\rho$ is the density of the geometric objects and $\omega>2$ where $\omega$ is the exponent of the best-known matrix multiplication algorithm.  
%Previously, 
Efrat, Itai, and Katz~\cite{efrat2001geometry} showed how to compute the (exact) MCM in bipartite unit disk graphs in $O(n^{3/2} \log n)$ time. Their algorithm works for other geometric objects; for example, it runs in $O(n^{3/2} \polylog n)$ time
for bipartite intersection graphs of arbitrary disks,
by using known dynamic data structures for disk intersection searching~\cite{KaplanKKKMRS22}.
Cabello, Cheng, Cheong, and Knauer~\cite{CabelloCCK24} improved the time bound to $O(n^{4/3+\eps})$ for unit disks, and also gave further exact subquadratic-time algorithms for other types of objects, using biclique covers in combination with the recent graph results~\cite{ChenKLPGS22}.  See also~\cite{bonnet2023maximum} for other special-case results.
%CabelloCCK24
%Bonnet, Cabello, and Mulzer~\cite{bonnet2023maximum} improved
%the running time to $O(n^{\omega/2})$ for certain cases, e.g., translates of a convex object, or
%geometric objects with low ``density'',
%where $\omega<2.38$ is the matrix multiplication exponent.
Recently, Har-Peled and Yang~\cite{Har-PeledY22} presented near-linear time 
$(1+\varepsilon)$-approximation algorithms for MCM in (bipartite or non-bipartite) intersection graphs of arbitrary disks, among other things.

Dynamic MCM is a well-studied problem in the dynamic graph algorithms literature, under edge updates; e.g., see \cite{OnakR10, GuptaP13, BernsteinS15, BernsteinS16, PelegS16, Solomon16, BhattacharyaHI18,  Behnezhad23, BhattacharyaKSW23, AzarmehrBR24}.
%
%~\cite{BhattacharyaK19, Solomon16, OnakR10, GuptaP13, stubbs2017metatheorems} 
%for an overview of the existing literature. 
However, these graph results are not directly applicable to the geometric setting, again since the insertion/deletion of a single object may cause many edge updates.
There has been no prior work on dynamic geometric MCM
(again ignoring the easier case of monochromatic, nearly-equal-sized fat objects).

\begin{center}
    \emph{Do there exist dynamic $(1+\varepsilon)$-approximation algorithm for MCM for various families of geometric objects with sublinear update time?}
\end{center}

\paragraph{Our contributions to \textsc{MCM}.} 
\begin{itemize}
\item For disks in $\R^2$ and fat boxes (e.g., hypercubes) in $\R^d$ for any constant $d$, we obtain a dynamic $(1+\eps)$-approximation algorithm with $O(\polylog n)$ amortized update time (Corollaries~\ref{cor:mcm:disk}, \ref{cor:mcm:box}, \ref{cor:mcm:disk2}, and~\ref{cor:mcm:box2}), in both the monochromatic and bichromatic (bipartite) cases.  This can be viewed as a dynamization of Har-Peled and Yang's static algorithms~\cite{Har-PeledY22}.
\end{itemize}

\paragraph{Our techniques for MPS and MIS.} 

A natural approach to get faster static algorithms or efficient dynamic algorithms is to take a known polynomial-time static algorithm and modify it.  This was indeed the approach taken originally by Chan and He~\cite{ChanH21} on dynamic geometric set cover, and more recently by Agarwal et al.~\cite{AHRS23} on dynamic MPS\@.  In these works, the previous algorithms were LP-based, and the bottleneck was in solving the LP, which was done using a multiplicative weight update (MWU) method.  The idea was to apply geometric data structuring (range searching) techniques to speed up each iteration of the MWU; if $\OPT$ is small, the number of iterations is small, but if $\OPT$ is large, we can switch to a different strategy (since we can tolerate a larger additive error).

We started our research by following the above strategy but end up discovering a different, better, and \emph{simpler} approach: namely, we directly reduce our problem to smaller instances, and just solve these subproblems by invoking the known polynomial-time static algorithm \emph{as a black box}!  This way, we do not even need to know how the previous static algorithm internally works.  (This is advantageous for MIS for rectangles, for example, since Mitchell's previous algorithm was not based on LP/MWU, and it is not clear how it can be sped up with data structures.)

More precisely: for MPS/MIS for rectangles, we first use a standard divide-and-conquer (similar to \cite{AgarwalKS98}) to reduce the problem to less complex instances where the rectangles are stabbable by a small number of horizontal lines and by a small number of vertical lines.  The divide-and-conquer causes the approximation ratio to increase, but by using a larger branching factor $n^\delta$, the increase is only by a constant factor.  For each such instance, we ``round'' the input rectangles to reduce the number of rectangles to $n^{O(\delta)}$ (more formally, we form $n^{O(\delta)}$ ``classes'' and map each rectangle to a ``representative'' element in its class); we can then solve each such subproblem in $n^{O(\delta)}$ time by the black box.  The key step is to show that input rounding increases the approximation ratio by only a constant factor; this combinatorial fact has simple proofs.  This idea of reducing the input size by rounding is somewhat reminiscent to the familiar notion of \emph{coresets}~\cite{AgarwalHV}, though we have not seen coresets used in the context of geometric MPS/MIS before.

For fat objects, we proceed similarly, except that we use a divide-and-conquer based on shifted quadtrees~\cite{Chan98}.

Surprisingly, this (embarrassingly) simple approach is sufficient to yield all our new results by MPS and MIS---for example, the description of our method for MPS for rectangles fits in under two pages, in contrast to the much lengthier solution by Agarwal et al.~\cite{AHRS23}.  A virtue of this approach is that dynamization now becomes almost trivial.

\paragraph{Our techniques for MVC and MCM.}

For MVC, we return to the approach of speeding up MWU using geometric data structures.
There have been previous works~\cite{ChekuriHQ20, ChanH21} on speeding up MWU for static and dynamic geometric set cover,
but thus far not for geometric MVC\@.  We show that geometric MVC is well-suited to this approach (in some ways, even more so than geometric set cover): interestingly, the right data structure for an efficient implementation of MWU turns out to involve a type of 
a \emph{dynamic generalized closest pair} problem, which Eppstein~\cite{Eppstein95} (see also \cite{Chan20}) has conveniently developed a technique for.  For MVC, the purpose of using MWU to solve the LP (approximately) is in computing a  Nemhauser--Trotter-style kernel~\cite{nemhauser1975vertex}, which allows us to reduce $n$ to $\le 2\,\OPT$ (roughly), after which we can flip to a MIS instance.  As another technical ingredient, we show that an approximate LP solution is sufficient for the kernelization.
To solve geometric MVC in the dynamic setting, we additionally use a standard trick: \emph{periodic rebuilding}.
As mentioned, when $\OPT$ is small, the number of iterations of the MWU is small, but when $\OPT$ is large,
we only need to rebuild the solution after a long stretch of $\eps\,\OPT$ updates.

For MCM, kernels for matching seem harder to compute.  Instead, 
in the bipartite case, we adapt an approach based on Hopcroft and Karp's classical matching algorithm~\cite{HopcroftK73}, which is known to
yield good approximation after a constant number of iterations.  We show how to implement the approximate version of Hopcroft and Karp's algorithm using range searching data structures.  Previously, Efrat, Itai, and Katz~\cite{efrat2001geometry} have already applied geometric data structures to speed up Hopcroft and Karp's exact algorithm, but their work was on the static case.
The dynamic setting is trickier and requires a more delicate approach.
In the general non-bipartite case, we need one more idea by Lotker, Patt-Shamir, and Pettie~\cite{LotkerPP15}  (also used in~\cite{Har-PeledY22}) to reduce the non-bipartite to the bipartite case in the approximate setting; we give a reinterpretation
of this technique in terms of \emph{color-coding}~\cite{AlonYZ95}, of independent interest.

Both our methods for geometric MVC and MCM are quite general, and work for any family of objects satisfying certain requirements (see Theorems \ref{thm:main} and~\ref{thm:main:mcm} for the general framework).

\IGNORE{%%%%%%%%%%%%%%%%%%%

*********************

\begin{itemize}
    \item LP, MWU, geometric data structures. For Piercing and Independent Set. 
    \item Highlight the issues regarding continuous and discrete version of the problems. 
    \item Shortened version of VC and Matching. 
\end{itemize}

\paragraph{Our framework and techniques for MVC.}
Our framework for MVC has only 2 requirements on the geometric object family: 
\begin{enumerate}
    \item[(i)]~the existence of an efficient data structure for \emph{dynamic intersection detection} (maintaining a set of input objects under insertions and deletions so that we can quickly decide whether a query object intersects any of the input objects), and 
    \item[(ii)]~the existence of an efficient \emph{static approximation algorithm}.
\end{enumerate}
If the data structure in (i) takes polylogarithmic query and update time and the algorithm in (ii) takes near linear time ignoring polylogarithmic factors, then the resulting dynamic algorithm achieves polylogarithmic (amortized) update time, and its approximation ratio
is the same as the static algorithm, up to $1+\eps$ factors.

In some sense, the framework is as general as possible:  Requirement (ii) is an obvious prerequisite (as we have already noted); in fact, we only require a static algorithm for the case when the optimal value is promised to be at least $n/2$ roughly.
Requirement (i) is also natural, as some kind of geometric range searching data structures is provably necessary.   For example, consider the case of disks in $\R^2$.  Any dynamic approximate MVC algorithm needs to recognize whether the MVC size is zero, and so must
know whether the intersection graph is empty.   Dynamic disk range emptiness (maintaining a set of input points under insertions and deletions so that we can quickly decide whether a query disk contains any of the input points) can be reduced to this problem, by inserting all the input points, and repeatedly inserting a query disk and deleting it.  (This explains why for our result on disks, we need multiple logarithmic factors in the update time, since the best algorithm for dynamic disk range emptiness requires multiple logarithmic factors~\cite{Chan10}, although the current best algorithm for dynamic disk intersection detection requires still more logarithmic factors~\cite{KaplanKKKMRS22}.)

Our framework for MVC is established by combining a number of ideas.  The initial idea is a standard one: \emph{periodic rebuilding}.  The intuition is that when the optimal value is large, we can afford to do nothing for a long period of time, without hurting the approximation ratio by much.  This idea has been used before in a number of recent works on dynamic geometric independent set (e.g., \cite{CardinalIK21}),
dynamic piercing (e.g., \cite{AHRS23}), and 
dynamic graph algorithms (e.g., \cite{GuptaP13}), although the general approach of periodic rebuilding is commonplace and appeared in much earlier works in dynamic computational geometry (e.g., \cite{Chan01,Chan03b}).  

The main challenge now lies in the case when the optimal value $\OPT$ is small. 
Our idea is to compute Nemhauser and Trotter's standard LP-based kernel for MVC, and then run a static algorithm on the kernel.
Unfortunately, we are unable to solve the LP exactly in sublinear time.  However, by using the \emph{multiplicative weight update} (MWU) method,
we show that it is possible to solve the LP approximately in time roughly linear in $\OPT$ (and thus sublinear in $n$ in the small $\OPT$ case), by using the intersection detection data structure provided by (i).  We further show that an approximate solution to the LP is still sufficient to yield a kernel with approximately the same size.
The idea of using MWU in the design of dynamic geometric algorithms was pioneered in Chan and He's 
recent work on dynamic geometric set cover \cite{ChanH21}, and also appeared in subsequent work on dynamic piercing \cite{AHRS23}; 
however, these works solved the LP in time superlinear in $\OPT$, which is why their final update time bounds
were significantly worse.

In geometric approximation algorithms, the best-known application of MWU is geometric set cover or hitting set \cite{BronnimannG95}.
Our work highlights its usefulness also to geometric vertex cover---ironically, the application to vertex cover is even simpler (and so
our work might have additional pedagogical value).  Furthermore, the efficient implementation of MWU for dynamic geometric vertex cover
turns out to lead to a nice, unexpected application of another known technique: namely, Eppstein's data structure technique for \emph{generalized dynamic closest pair} problems~\cite{Eppstein95} (see also \cite{Chan20}).

\paragraph{Our framework and techniques for MCM.}
For MCM, our framework only needs requirement~(i).  Here, we need different ideas, since kernels for matching do not seem to work as efficiently.  In the bipartite case, we use an approach based on Hopcroft and Karp's classical matching algorithm~\cite{HopcroftK73}, which is known to
yield good approximation after a constant number of iterations.  We show how to implement the approximate version of Hopcroft and Karp's algorithm in time roughly linear in $\OPT$, by using the data structure from (i).  Previously, Efrat, Itai, and Katz~\cite{efrat2001geometry} showed how to
implement Hopcroft and Karp's algorithm faster using geometric data structures, but their focus was on static algorithms whereas our goal is in getting sublinear time.  Thus, our adaptation of Hopcroft--Karp will be a little more delicate.

In the general non-bipartite case, we apply a technique by Lotker, Patt-Shamir, and Pettie~\cite{LotkerPP15}. which reduces non-bipartite to the bipartite case in the approximate setting.  Har-Peled and Yang~\cite{Har-PeledY22} also applied the same technique to derive their static approximation algorithms for geometric non-bipartite MCM\@.  We reinterpret this technique in terms of \emph{color-coding}~\cite{AlonYZ95}, which allows for efficient derandomization and dynamization, as well as a simpler analysis.

}%%%%%%%%%%%%%%%%%%%%%%%%%%%%

%\documentclass[11pt]{article}
%\usepackage{fullpage,times,amsmath,amsthm,amssymb}

%\newtheorem{theorem}{Theorem}[section]
%\newtheorem{lemma}[theorem]{Lemma}
%\newtheorem{corollary}[theorem]{Corollary}
%\newtheorem{definition}[theorem]{Definition}
%\newtheorem{fact}[theorem]{Fact}
%\newtheorem{remark}[theorem]{Remark}

%\newcommand{\R}{\mathbb{R}}
%\newcommand{\OO}{\widetilde{O}}
%\newcommand{\CCC}{{\cal C}}
%\newcommand{\OPT}{\mbox{\rm OPT}}
%\newcommand{\diam}{\mbox{\rm diam}}

%\begin{document}
%}

\section{Minimum Piercing Set (MPS)}\label{sec:mps}

%We use $\OPT$ to denote the optimal value of an optimization problem.

In this section, we present our static and dynamic approximation algorithms for MPS for boxes and fat objects.

\subsection{Boxes}

To solve the MPS problem for boxes, we first consider a special case that  can be solved by ``rounding'' the input boxes---this simple idea will be the key:

\begin{lemma}\label{lem:pierce:box}
Let $d$ be a constant.
Let $\Gamma$ be a set of $O(b)$ axis-aligned hyperplanes in $\R^d$.
Let $S$ be a set of $n$ axis-aligned boxes in $\R^d$ with the property that each box in $S$ is stabbed by at least one hyperplane in $\Gamma$ orthogonal to the $k$-th axis for every $k\in\{1,\ldots,d\}$.
We can compute an $O(\log\log \OPT)$-approximation to the minimum piercing set for $S$ in $\OO(n+b^{O(1)})$ time.
Furthermore, we can support insertions and deletions in $S$ (assuming the property) in $\OO(b^{O(1)})$ time.
\end{lemma}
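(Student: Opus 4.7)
The plan is a reduction by \emph{input rounding} to a subproblem of size $b^{O(1)}$, which is then solved by a known polynomial-time $O(\log\log\OPT)$-approximation for MPS on boxes (for example, the LP/$\eps$-net algorithm of Aronov--Ezra--Sharir and Ezra) invoked as a black box. The combinatorial claim justifying the rounding is the step I expect to require the most care.

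For each box $B\in S$ and each axis $k$, let $h_k^-(B)$ (resp.\ $h_k^+(B)$) be the smallest- (resp.\ largest-) coordinate hyperplane of $\Gamma$ orthogonal to axis $k$ that stabs $B$; by hypothesis both exist. Two boxes belong to the same \emph{class} iff their $2d$-tuples $(h_k^-,h_k^+)_{k=1}^{d}$ agree, so there are only $O(b^{2d})=b^{O(1)}$ classes. The \emph{core} of a class is the box $\prod_{k=1}^{d}[h_k^-,h_k^+]$; it is contained in every member of the class, so any piercing set of the cores automatically pierces $S$. The key claim is $\OPT(\text{cores})\le 2^d\cdot\OPT(S)$. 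To prove it, I would take an optimal piercing set $P$ of $S$ and replace each $p\in P$ by at most $2^d$ \emph{snapped} points: for each axis $k$, let $h_k^{i_k}\le p_k\le h_k^{i_k+1}$ be the consecutive hyperplanes of $\Gamma$ bracketing $p_k$, and take every combination choosing $h_k^{i_k}$ or $h_k^{i_k+1}$ independently in each axis (dropping any infinite coordinate). A short case analysis shows that for every box $B$ pierced by $p$ and every axis $k$, at least one of $\{h_k^{i_k},h_k^{i_k+1}\}$ lies in $[h_k^-(B),h_k^+(B)]$: if $p_k$ already lies in this interval, either bracket works; if $p_k<h_k^-(B)$, then $h_k^{i_k+1}$ is forced to coincide with $h_k^-(B)$ (being the smallest $\Gamma$-hyperplane above $p_k$ that still lies in $[a_k,b_k]$), and the case $p_k>h_k^+(B)$ is symmetric. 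Consequently at least one snapped point lands in the core of $B$.

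Given the claim, the static algorithm classifies each box via $2d$ binary searches on $\Gamma$ in $O(\log b)$ time, collects the $O(b^{2d})$ distinct non-empty cores, and runs the black-box $O(\log\log\OPT)$-approximation on them in $b^{O(1)}$ time; the resulting piercing set pierces $S$ and has size $O(\log\log(2^d\OPT(S)))\cdot 2^d\cdot\OPT(S)=O(\log\log\OPT)\cdot\OPT(S)$, for a total running time of $\OO(n+b^{O(1)})$. For the dynamic version I would maintain, for each of the $O(b^{2d})$ classes, a counter of how many current boxes of $S$ lie in it; an insertion or deletion computes the class in $O(\log b)$ time and updates one counter, after which simply rerunning the static subroutine on the currently non-empty cores in $b^{O(1)}$ time rebuilds the solution, giving the claimed $\OO(b^{O(1)})$ (in fact worst-case) update time.
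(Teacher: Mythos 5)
Your proof is correct, and it is built on the same core idea as the paper's: partition $S$ into $b^{O(1)}$ classes determined by the grid formed by $\Gamma$, solve a $b^{O(1)}$-size instance with the known polynomial-time $O(\log\log\OPT)$-approximation as a black box, and absorb a $2^d$ factor from snapping to grid coordinates. The one genuine (if small) difference is the direction of the rounding. The paper keeps an actual input box as the representative of each class (so $\hat S\subseteq S$ and $\OPT(\hat S)\le\OPT(S)$ is immediate) and pays the $2^d$ factor at the output stage, by replacing each returned piercing point with the corners of its grid cell. You instead round each class down to its ``core'' $\prod_k[h_k^-,h_k^+]$, which is not an input box but is contained in every member of the class, so the black box's output already pierces $S$ with no post-processing; the $2^d$ factor then migrates into the analysis, where you must prove $\OPT(\mathrm{cores})\le 2^d\,\OPT(S)$ by snapping an optimal piercing point to its bracketing hyperplanes. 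Your case analysis for that claim (in particular that when $p_k<h_k^-(B)$ the upper bracket is forced to equal $h_k^-(B)$) is right, and the two arguments are essentially dual to one another; neither buys anything substantive over the other for this lemma or for the divide-and-conquer built on top of it in Theorem~\ref{thm:pierce:box}.
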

\begin{proof}
The hyperplanes in $\Gamma$ form a (non-uniform) grid with $O(b^d)$ grid cells.
Place two boxes of $S$ in the same  \emph{class} if they intersect the same subset of grid cells (see Figure~\ref{fig:class}(a) for a depiction of one class).
There are $O(b^{2d})$ classes (as each class can be specified by $2d$ hyperplanes in $\Gamma$).
Let $\hat{S}$ be a subset of $S$ where we keep one ``representative'' element from each class.
Then $|\hat{S}|=O(b^{2d})$.
We apply the known result by Aronov, Ezra, and Sharir~\cite{aronov2009small} for $d\in \{2,3\}$ or Ezra~\cite{ezra2010note} for $d\ge 4$
(see also~\cite{AHRS23}) to compute an $O(\log\log\OPT)$-approximation
to the minimum piercing set for the boxes in $\hat{S}$.  This takes time polynomial in $|\hat{S}|$, i.e., $b^{O(1)}$ time.
Let $P$ be the returned piercing set for~$\hat{S}$.
For each point $p\in P$, add the $2^d$ corners of the grid cell containing $p$ to a set $P'$.
Then $|P'|\le 2^d|P|\le O(\log\log \OPT)\cdot\OPT$.  We output $P'$.

To show correctness, it suffices to show that $P'$ is a piercing set for $S$.
This follows because if $\hat{s}$ is the representative element of $s$'s class, and
$\hat{s}$ is pierced by $p$, then $s$ intersects the grid cell containing $p$ and so $s$ must be
pierced by one of the corners of the grid cell (because of the stated property), as illustrated in Figure~\ref{fig:class}(b).

Insertions and deletions are straightforward, by just maintaining a linked list per class, and re-running Agarwal et al.~\cite{AHRS23}'s
algorithm on $\hat{S}$ from scratch each time.
\end{proof}

\begin{figure}
    \centering
    \includegraphics{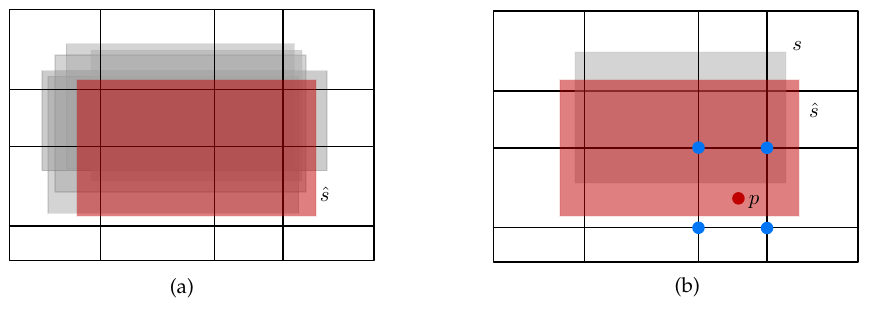}
    \caption{Proof of Lemma~\ref{lem:pierce:box}.}
    \label{fig:class}
\end{figure}

By combining the lemma with (a $b$-ary version of) a standard divide-and-conquer method~\cite{AgarwalKS98}, we obtain our main result for boxes:

\begin{theorem}\label{thm:pierce:box}
Let $d$ be a constant, and $\delta>0$ be a parameter.
Given a set $S$ of $n$ axis-aligned boxes in $\R^d$, 
we can compute an $O((1/\delta^d) \log\log \OPT)$-approximation to the minimum piercing set for $S$ in $O(n^{1+O(\delta)})$ time.
Furthermore, we can support insertions and deletions in $S$ in $O(n^{O(\delta)})$ amortized time.
%
%If we only want an approximation of the optimal size, we can compute an $O(\log\log n)$-approximation in $\OO(n)$ time w.h.p.\ by a static Monte Carlo randomized algorithm.
\end{theorem}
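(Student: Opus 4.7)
The plan is to apply a $b$-ary divide-and-conquer over the $d$ coordinate axes with branching factor $b=n^{\delta}$, invoking Lemma~\ref{lem:pierce:box} at the leaves. Structurally, this is $d$ nested segment-tree-like decompositions. At a node of the dim-$k$ tree covering a slab in dimension $k$, we pick $b-1$ axis-$k$-orthogonal hyperplanes that partition the current boxes into $b$ sub-slabs of roughly equal size, and classify each box as \emph{in-slab} (entirely inside one sub-slab) or \emph{crossing} (stabbed by one of the $b-1$ hyperplanes). In-slab boxes recurse in dim-$k$ at the next depth; crossing boxes, now certifiably stabbed in dimensions $1,\ldots,k$, are handed off to a fresh dim-$(k+1)$ tree. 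Each dim tree has depth $\log_b n = 1/\delta$, so the overall recursion has $O(d/\delta)=O(1/\delta)$ levels. Every input box reaches exactly one dim-$d$ leaf, where it is stabbed by the hyperplanes accumulated on its root-to-leaf path --- at most $b-1$ per dimension (from the single dim-$k$ node at which it became crossing), hence $|\Gamma|=O(db)=O(b)$ in total. Lemma~\ref{lem:pierce:box} then returns an $O(\log\log\OPT)$-approximate local piercing set, and we output the union.

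Correctness follows because every input box is pierced at its leaf. For the approximation ratio, let $\alpha(k,j)$ denote the worst-case ratio of our output to OPT on a dim-$k$, depth-$j$ subproblem. Any global optimal piercing set restricts to a piercing set on each sub-slab, so $\sum_i \OPT_i \le \OPT$ for the children of a node, while the crossing set's OPT is $\le\OPT$ trivially. Hence $\alpha(k,j)\le \alpha(k,j+1)+\alpha(k+1,0)$, which telescopes over the $1/\delta$ depths to $\alpha(k,0)\le (1/\delta)\,\alpha(k+1,0)$. With the base case $\alpha(d+1,0)=O(\log\log\OPT)$ from Lemma~\ref{lem:pierce:box}, we conclude $\alpha(1,0)=O((1/\delta^d)\log\log\OPT)$.

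For the static running time, classification work sums to $O(n/\delta)$ per dim-$k$ tree, and across all depths the crossing subproblems obey $T(n,k)\le O(n/\delta)+\sum_{j=0}^{1/\delta} b^j\cdot T(n/b^j,k+1)$ with base case $T(n,d+1)=\OO(n+b^{O(1)})$; with $b=n^{\delta}$, a straightforward induction on $d-k$ yields $T(n,1)=O(n^{1+O(\delta)})$. For the dynamic version, maintain the nested decomposition using weight-balanced $b$-ary trees (one per axis) so that sub-slab sizes stay within constant factors of their targets via standard partial rebuilding. An update traces its unique root-to-leaf path of length $O(1/\delta)$, and at the destination leaf performs one insert/delete into the Lemma~\ref{lem:pierce:box} instance in $\OO(b^{O(1)})=O(n^{O(\delta)})$ time; rebalancing is amortized into the same bound. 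The main subtlety will be keeping the approximation analysis valid between rebalances: we need sub-slab sizes to remain $\Theta(n/b)$ and the local ``$b$'' to vary by only a constant factor, so that both Lemma~\ref{lem:pierce:box}'s guarantees and the telescoping bound on $\alpha(1,0)$ continue to apply --- both routinely handled by the weight-balanced B-tree machinery.
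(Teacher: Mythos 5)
Your proposal is correct and follows essentially the same approach as the paper's proof: nested $b$-ary interval-tree decompositions over the $d$ axes with $b=n^{\delta}$, handing off "crossing" boxes to the next-dimension structure until they satisfy the stabbing hypothesis of Lemma~\ref{lem:pierce:box}, a telescoping recurrence giving a factor $O(\log_b n)=O(1/\delta)$ loss per dimension, and weight-balanced partial rebuilding for the dynamic version. The only differences are cosmetic (your per-node recurrence $\alpha(k,j)\le\alpha(k,j+1)+\alpha(k+1,0)$ versus the paper's level-by-level grouping $f_j\le O(\log_b n)\cdot f_{j+1}$, and a slightly looser but still sufficient running-time recurrence).
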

\begin{proof}
In a \emph{type-$j$} problem ($j\in\{0,\ldots,d\}$), we assume that the given set $S$ is stabbable by $O(b)$ hyperplanes orthogonal to the $k$-th axis for every $k\in\{1,\ldots,j\}$.
The original problem is a type-0 problem.
A type-$d$ problem can be solved directly by Lemma~\ref{lem:pierce:box}.

To solve a type-$j$ problem with $j<d$, we build a $b$-ary tree\footnote{
This is basically a $b$-ary version of the \emph{interval tree}~\cite{preparata2012computational, de2000computational} applied to the projections of the boxes along the $(j+1)$-th axis.
} 
for $S$ as follows.  Pick $b-1$ hyperplanes orthogonal to the $(j+1)$-th axis to divide space into $b$ slabs, each containing
 $O(n/b)$ corner points.  Store the boxes in $S$ that are stabbed by the $b-1$ dividing hyperplanes at the root.  Recursively build subtrees for the subset of the $O(n/b)$ boxes in $S$ that are contained in each of the $b$ slabs.

For each of the $O(\log_b n)$ levels of the tree, we compute a piercing set for the subset of all boxes stored at that level;
we then return the union of these $O(\log_b n)$ piercing sets.  The approximation ratio consequently increases by an $O(\log_b n)$ factor.
To compute a piercing set at one level, since the boxes at different nodes lie in disjoint slabs, we can just
compute a piercing set for the boxes at each node separately and return the union.  
Computing a piercing set at each node reduces to a type-$(j+1)$ problem.

The approximation ratio for a type-$j$ problem satisfies the recurrence
$f_j \le O(\log_b n)\cdot f_{j+1}$, with $f_d=O(\log\log \OPT)$.
Thus, the overall approximation ratio is $f_0=O((\log_b n)^d \log\log \OPT)$.

To analyze the running time, observe that in a type-$j$ problem, each box is stored in one node of the tree and  is thus
assigned to one type-$(j+1)$ problem.
Since the running time for a type-$d$ problem is $\OO(b^{O(1)})$ per box by Lemma~\ref{lem:pierce:box}, the total running time is $\OO(b^{O(1)}n)$.

When we insert/delete a box in a type-$j$ problem, we insert/delete the box in one type-$(j+1)$ problem.
The update time for a type-$d$ problem is $\OO(b^{O(1)})$ by Lemma~\ref{lem:pierce:box}.
The update time for the original type-0 problem is thus $\OO(b^{O(1)})$.
One technical issue is tree balancing: for each a subtree with $n$ boxes, each child's slab should have $O(n/b)$ corner points.  We can use a standard weight-balancing scheme, rebuilding the subtree after it encounters $n/b$ updates.
The amortized cost for rebuilding is still $\OO(\frac{b^{O(1)}n}{n/b})=\OO(b^{O(1)})$ per level.

Finally, we set $b=n^\delta$ to get the bounds in the theorem.
%We can speed up the static algorithm if we only want an approximation to the optimal size: observe that the size of the returned piercing set is a sum of the sizes of piercing sets for a number of type-$d$ subproblems handled by Lemma~\ref{lem:pierce:box}.  The size of the piercing set for each type-$d$ subproblem lies in the range from 1 to $B := b^{O(1)}$ (after removing empty subproblems).  We can approximate the sum to within a constant factor by summing over a random sample of $O(B)$ terms (see Fact~\ref{fact:samp}), with constant error probability (strictly less than $1/2$).  Thus, the number of calls to Lemma~\ref{lem:pierce:box} is $b^{O(1)}$. The error probability can be lowered by repeating for logarithmically many trials and returning the median. The total running time is now lowered to $\OO(n+b^{O(1)})$, which is $\OO(n)$ by setting $b=n^\delta$ for a sufficiently small constant $\delta$.
\end{proof}

\begin{remark}\rm 
In Lemma~\ref{lem:pierce:box}, we know that $\OPT\le O(b^d)$ under the stated property, and an alternative proof of the lemma is to apply the MWU-based method by Agarwal et al.~\cite{AHRS23}, which is known to be efficient in the small $\OPT$ case.  However, our input rounding approach is more general and powerful (not limited to LP/MWU-based algorithms), and will be essential later, for our results on fat objects and on MIS.    
\end{remark}

\begin{remark}\rm \label{rmk:pierce}
Any improvement to the $O(\log\log\OPT)$ approximation ratio for polynomial-time algorithms for piercing boxes would automatically improve the approximation ratio for our static and dynamic algorithms.

For the static algorithm, the $1/\delta^d$ factor in the running time can be
lowered to $\log^d (1/\delta)$ by setting $b$ differently (as a function of the local input size $n$).  In particular, when setting $\delta=1/\log n$, we can obtain $O(n\polylog n)$ running time with approximation ratio $O(\log\log\OPT\cdot (\log\log n)^d)$ for boxes in $\R^d$.  In Appendix~\ref{app:samp}, we note another variant of the static algorithm with
$O(n\polylog n)$ running time, while keeping the approximation ratio at $O(\log\log\OPT)$; however, this variant uses Monte Carlo randomization, and works only in the setting when we want to output an approximation to the optimal value rather than an actual piercing set.  

For the dynamic algorithm, we have stated amortized bounds for simplicity; worst-case bounds seem plausible by standard deamortization techniques for weight-balanced trees~\cite{Overmars83}.
\end{remark}

\subsection{Fat objects}

We next turn to the case of fat objects.  We begin with some definitions and preliminary facts.
In what follows, the \emph{diameter} of an object $s$, denoted $\diam(s)$, refers to its $L_\infty$-diameter (i.e., the side length of its smallest enclosing axis-aligned hypercube).  We use the following definition of fatness~\cite{Chan03}, which is the most convenient here:

\begin{definition}\rm
A collection $\CCC$ of objects in $\R^d$ is \emph{$c$-fat} if the following property holds:
for every hypercube $B$, there exist $c$ points piercing all objects in $\CCC$ that intersect
$B$ and have diameter at least $\diam(B)$.
\end{definition}

\begin{definition}\rm
A \emph{quadtree box} is a hypercube of the form $[\frac{i_1}{2^\ell},\frac{i_1+1}{2^\ell})\times\cdots\times [\frac{i_d}{2^\ell},
\frac{i_d+1}{2^\ell})$
for some integers $i_1,\ldots,i_d,\ell\in \mathbb{Z}$.  (We also consider $\R^d$ to be a quadtree box.)
\end{definition}

The fact below follows by applying standard tree partitioning schemes, e.g., Frederickson~\cite{Frederickson85},\footnote{
Any constant-degree tree with $n$ nodes can be partitioned into $b$ connected pieces of $O(n/b)$ nodes each, such that each non-singleton
piece is adjacent to at most two other pieces~\cite{Frederickson85}.  When applied to the quadtree, each piece corresponds to a quadtree box
or the difference between two quadtree boxes.
}
to the (compressed) quadtree.

\begin{fact}\label{fact:part}
Let $d$ be a constant and $b$ be a parameter.
Let $P$ be a set of $n$ points in $\R^d$.  In $O(n)$ time,
we can partition of $\R^d$ into $b$ interior-disjoint cells, which each cell is either a quadtree box or the difference of two quadtree boxes, such that each cell contains at most $O(n/b)$ points of $P$.
\end{fact}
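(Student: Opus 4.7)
I would first construct a compressed quadtree $T$ on $P$. In constant dimension $d$, $T$ has $O(n)$ nodes, constant maximum degree, and can be built in $O(n)$ time in the standard word-RAM model (for instance via $Z$-order sorting). Every node $v$ of $T$ is associated with a quadtree box $B_v$; the children's boxes partition $B_v$, and each leaf of $T$ stores exactly one point of $P$.

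Next I would apply Frederickson's scheme from the footnote to $T$, producing a partition of its node set into $b$ connected pieces of $O(n/b)$ nodes each, with every non-singleton piece adjacent to at most two other pieces. For a piece $\pi$ with top node $v$, the associated region $R_\pi$ equals $B_v$ minus the boxes $B_u$ of the ``holes'' of $\pi$---the children of $\pi$-vertices that lie in other pieces. Each hole corresponds to one of $\pi$'s adjacencies to other pieces, so a non-singleton non-root piece has at most one hole (its other adjacency being to the parent piece) and hence $R_\pi \in \{B_v,\, B_v\setminus B_u\}$, i.e., a quadtree box or the difference of two quadtree boxes. A singleton piece $\{v\}$ trivially contributes the single quadtree box $B_v$. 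Since every point of $P$ sits at a leaf and each piece has $O(n/b)$ nodes, the region $R_\pi$ contains $O(n/b)$ points.

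The one thing I would need to handle carefully---and the main obstacle---is the root piece, which has no parent adjacency and could therefore contain up to two holes $u_1, u_2$, yielding $R_\pi = \R^d\setminus(B_{u_1}\cup B_{u_2})$, which is not of the prescribed form. I would split this root piece into $O(1)$ sub-pieces by walking from the root of $T$ down to $w = \mathrm{LCA}(u_1,u_2)$: the ``above-$w$'' sub-piece has region $\R^d\setminus B_w$ (a difference of two quadtree boxes, using the convention that $\R^d$ itself is a quadtree box), and the remaining region $B_w\setminus(B_{u_1}\cup B_{u_2})$ is split along the distinct sub-quadrants of $w$, with the two sub-quadrants containing $u_1$ and $u_2$ each yielding a difference of two quadtree boxes and the remaining sub-quadrants being plain quadtree boxes. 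This splitting multiplies the piece count by only a constant factor, absorbed into $O(b)$, and the total running time is $O(n)$ for $T$ plus $O(n)$ for Frederickson's scheme.
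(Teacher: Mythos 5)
Your proposal is exactly the paper's intended argument: the paper proves this fact only by a footnote citing Frederickson's tree partition applied to the compressed quadtree, and your writeup fleshes out that sketch, correctly identifying the piece regions as $B_v$ minus the boxes of the ``holes'' and even patching the root piece (which can have two holes since it lacks a parent adjacency) --- a case the paper's footnote silently glosses over. One small slip: a \emph{singleton} piece $\{v\}$ with $v$ an internal node does not ``trivially contribute the box $B_v$''; its region is $B_v$ minus the boxes of all of $v$'s children lying in other pieces, and since $v$ may have up to $2^d$ such children this is not a priori a box or a difference of two boxes (taking the cell to be $B_v$ itself would also destroy interior-disjointness with the descendant pieces). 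The fix is routine and of the same flavor as your root-piece fix: for an uncompressed internal singleton the leftover region is a union of at most $2^d$ point-free quadrants, each itself a quadtree box, and for a compressed singleton it is a single difference of two quadtree boxes; this again only multiplies the number of cells by $O(1)$. Also note that in a compressed quadtree the children's boxes need not partition $B_v$, but your argument never actually uses that claim.
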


\begin{definition}\rm
An object $s$ is \emph{$c_0$-good} if it is contained in a quadtree box $B$ with $\diam(B)\le c_0\cdot \diam(s)$.
\end{definition}

\begin{fact}[Shifting Lemma~\cite{Chan98,Chan03}]\label{fact:shift}
Suppose $d$ is even.
Let $v_j=(\frac{j}{d+1},\ldots,\frac{j}{d+1})\in \R^d$.
For every object $s\subset [0,1)^d$, there exists $j\in\{0,\ldots,d\}$ such that $s+v_j$ is $O(d)$-good.
\end{fact}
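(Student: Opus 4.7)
The plan is first to observe that it suffices to treat the case in which $s$ is itself an axis-aligned hypercube of side $\Delta:=\diam(s)$, since any object is contained in such a hypercube, and an $O(d)$-good enclosing quadtree box for the cube will also be $O(d)$-good for $s$. I then fix the scale $\ell$ to be the unique integer such that $2^{-\ell}$ is the smallest power of $2$ strictly exceeding $(d+1)\Delta$; thus $2^{-\ell}\le 2(d+1)\Delta=O(d)\Delta$, which is the target cell diameter for an $O(d)$-good box.

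My goal becomes: show that there is some $j\in\{0,\ldots,d\}$ for which $s+v_j$ lies entirely inside a single quadtree cell at level $\ell$. Working axis-by-axis, I will call $j$ \emph{bad for axis $k$} when the projection of $s+v_j$ onto the $k$-th axis crosses a grid line at level $\ell$. Since there are $d+1$ candidate shifts but only $d$ axes, it suffices (by a trivial pigeonhole) to prove that each axis has at most \emph{one} bad shift, which leaves some $j$ that is good on every axis simultaneously.

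Reducing this to a simple number-theoretic counting question, write the axis-$k$ projection of $s$ as $[a_k,a_k+\Delta_k]$ with $\Delta_k\le\Delta$, and set $x_j:=(a_k+j/(d+1))\cdot 2^\ell\bmod 1$. The projection of $s+v_j$ fits in a level-$\ell$ cell iff $x_j<1-\Delta_k\cdot 2^\ell$, so the bad shifts on axis $k$ correspond exactly to those $x_j$'s that lie in an arc of length $\Delta_k\cdot 2^\ell<1/(d+1)$ on the unit circle (by the choice of $\ell$). The values $x_j$ form an arithmetic progression $x_0,\,x_0+\beta,\,\ldots,\,x_0+d\beta\pmod 1$ with common difference $\beta=2^\ell/(d+1)$.

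The crux of the argument — and the only place the hypothesis that $d$ is even enters — is showing that these $d+1$ points are in fact equispaced at gap exactly $1/(d+1)$ on the unit circle. Observe that $j\beta\bmod 1 = (j\cdot 2^\ell\bmod(d+1))/(d+1)$. Since $d$ is even, $d+1$ is odd, hence coprime to $2^\ell$; therefore the map $j\mapsto j\cdot 2^\ell\bmod(d+1)$ is a bijection on $\{0,1,\ldots,d\}$, so $\{x_j:j\}=\{x_0+k/(d+1)\bmod 1:k=0,\ldots,d\}$. An arc of length strictly less than $1/(d+1)$ therefore contains at most one $x_j$, confirming that each axis has at most one bad shift. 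The remaining good $j$ yields an enclosing level-$\ell$ cell of side $2^{-\ell}=O(d)\Delta$, as required. The main obstacle to anticipate is exactly this spacing argument: without the coprimality of $2^\ell$ and $d+1$, the $x_j$'s could cluster and multiple shifts could be simultaneously bad on a single axis, breaking the pigeonhole step — which is precisely what the parity assumption averts.
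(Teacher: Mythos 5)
The paper does not actually prove Fact~\ref{fact:shift}; it imports it from \cite{Chan98,Chan03}, and your argument is essentially the standard proof from those references: pick the grid level just above $(d+1)\diam(s)$, note that along each axis the ``bad'' shifts occupy a circular arc of length less than $\frac{1}{d+1}$, use the oddness of $d+1$ (this is where evenness of $d$ enters) to show the $d+1$ shifted positions are equispaced at gap exactly $\frac{1}{d+1}$, conclude that at most one shift is bad per axis, and finish by pigeonhole over the $d$ axes. These steps all check out, and the bound $2^{-\ell}\le 2(d+1)\Delta$ correctly yields the $O(d)$ constant in the definition of goodness.

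There is one boundary case you do not handle. Your coprimality step treats $2^\ell$ as a positive integer, but with your choice of $\ell$ this holds only when $(d+1)\Delta<1$ (giving $\ell\ge 0$; the case $\ell=0$, i.e.\ $\tfrac12\le(d+1)\Delta<1$, is fine since $2^0=1$). When $(d+1)\Delta\ge 1$ you get $\ell\le -1$, so $2^\ell=2^{-|\ell|}$ is not an integer, the step ``$j\mapsto j\cdot 2^\ell\bmod(d+1)$ is a bijection'' is meaningless, and indeed the common difference $\beta=2^\ell/(d+1)$ is then smaller than $\frac{1}{d+1}$, so the points $x_j$ cluster in a short arc and several shifts can be simultaneously bad on one axis---exactly the failure mode you warn about. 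Fortunately this regime is trivial: there $\Delta\ge\frac{1}{d+1}$, every $s+v_j$ is contained in the level-$(-1)$ quadtree box $[0,2)^d$, and that box has diameter $2\le 2(d+1)\Delta=O(d)\cdot\diam(s)$, so every shift is $O(d)$-good. Adding this one sentence closes the gap; otherwise your proof is correct.
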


We present our key lemma addressing a special case that can be solved by ``rounding'' the input objects:

\begin{lemma}\label{lem:pierce:fat}
Let $d,c,c_0$ be constants.
Let $\Gamma$ be a partition of $\R^d$ into $b$ disjoint cells, where each cell is either a quadtree box or the difference of two quadtree boxes.
Let $S$ be a set of $n$ $c_0$-good objects in $\R^d$ of constant description complexity from a $c$-fat collection $\CCC$,
with the property that each object in $S$ intersects the boundary of at least one cell of $\Gamma$.
We can compute an $O(1)$-approximation to the minimum piercing set for $S$ in $\OO(n+b^{O(1)})$ time.
Furthermore, we can support insertions and deletions in $S$ (assuming the property) in $\OO(b^{O(1)})$ time.
\end{lemma}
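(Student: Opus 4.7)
The plan is to mirror the proof of \lemmaref{lem:pierce:box}: reduce the instance to a small subset $\hat S\subseteq S$ of $b^{O(1)}$ representatives via an appropriate rounding/classification, invoke a known polynomial-time $O(1)$-approximation algorithm for MPS on $c$-fat objects (e.g., the standard greedy algorithm of Efrat et al.~\cite{efrat2000dynamic}) as a black box on $\hat S$, and then extend the returned piercing set to cover all of $S$ by appending $O(1)$ canonical piercing points near each piercer, where the canonical points come from the $c$-fatness of $\CCC$.

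For the classification, I associate to each $s\in S$ a canonical quadtree box $B_s$ with $s\subseteq B_s$ and $\diam(B_s)\le c_0\cdot\diam(s)$, which exists by $c_0$-goodness. Since $s$ intersects the boundary of some cell of $\Gamma$, the box $B_s$ meets the set $\bigcup_{C\in\Gamma}\partial C$, which is a union of $O(b)$ quadtree-box faces. I would then declare two objects equivalent if they share the same canonical box $B_s$ and cross the same face of the partition boundary. Using the combinatorial structure of quadtree boxes straddling a given face---at each level only $O(1)$ of them straddle any fixed face---one obtains a bound of $b^{O(1)}$ on the number of classes (possibly with a $\polylog n$ factor that is absorbed into the $\OO$ notation). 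Taking one representative per class yields $\hat S$ with $|\hat S|=b^{O(1)}$.

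Next, I would apply a known polynomial-time $O(1)$-approximation algorithm for MPS on $c$-fat objects as a black box to $\hat S$, in $b^{O(1)}$ time, obtaining a piercing set $P$ with $|P|\le O(1)\cdot\OPT(\hat S)\le O(1)\cdot\OPT(S)$. To convert $P$ into a piercing set $P'$ for all of $S$, I iterate over each $p\in P$, determine every class $[B]$ whose representative is pierced by $p$, and append to $P'$ an $O(1)$-sized canonical piercing set for $[B]$. Such a canonical piercing set is obtained by covering the quadtree box $B$ by $O(c_0^d)$ sub-hypercubes $B'$ of diameter $\diam(B)/c_0$ and invoking $c$-fatness on each $B'$: every $s'\in[B]$ lies in $B$ and has $\diam(s')\ge\diam(B)/c_0=\diam(B')$, so $s'$ meets some $B'$ with $\diam(s')\ge\diam(B')$ and is therefore pierced by one of the $c$ canonical points associated with that $B'$. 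Insertions and deletions are handled as in \lemmaref{lem:pierce:box}, by maintaining a linked list per class and rebuilding the solution on $\hat S$ from scratch in $\OO(b^{O(1)})$ time per update.

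The main obstacle is guaranteeing that the extended set $P'$ has size $O(|P|)$ rather than, say, $O(|P|\log n)$ (which would only yield an $O(\log n)$-approximation). The essential combinatorial content is that a single piercer $p\in P$ cannot be charged for too many distinct classes: $p$ lies in $O(\log n)$ nested quadtree boxes, but among these only $O(1)$ are ``realized'' as $B_s$ for some $s$ crossing a fixed face of the partition boundary near $p$. Formalizing this charging argument---perhaps by refining the classification so that within each class the canonical piercing points can be shared across piercers, or by stratifying and processing scales one at a time---is the crux of the proof, and is the place where the quadtree structure of $\Gamma$ and the $c_0$-goodness of the objects must be used together.
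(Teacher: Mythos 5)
Your overall architecture (round $S$ down to $b^{O(1)}$ representatives, run a known static $O(1)$-approximation as a black box on $\hat S$, then inflate each piercer by $O(1)$ canonical points obtained from $c$-fatness) is the same as the paper's, and your observation that all objects in a class with canonical box $B$ have diameter at least $\diam(B)/c_0$ and are therefore pierced by $O(1)$ points is sound. But the step you yourself flag as the crux is a genuine gap, not a formality. With your equivalence relation (same canonical box $B_s$ plus same crossed face), a single piercer $p\in P$ can pierce the representatives of many classes: the quadtree boxes containing $p$ that are realized as canonical boxes form a nested chain whose length is bounded only by the number of distinct scales present, so the naive charging gives $|P'|=\Omega(|P|\log n)$ or worse, and no constant-factor bound follows. (Your class count is also shaky: for a face $F$ of a level-$m$ quadtree box, the number of level-$\ell$ quadtree boxes with $\ell>m$ whose closure meets $F$ grows like $2^{(\ell-m)(d-1)}$, not $O(1)$ per level, so ``same canonical box and same face'' does not obviously yield $b^{O(1)}$ classes.)

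The paper resolves both issues by changing what a class is and where the canonical points live. Two objects are placed in the same class iff they intersect the same \emph{subset of cells} of $\Gamma$; the number of classes is $b^{O(1)}$ because each object of constant description complexity corresponds to a point in a constant-dimensional parameter space, each cell of $\Gamma$ induces a semialgebraic set there, and the classes are the faces of the arrangement of these $b$ sets (with point location giving the class of an object in $\OO(1)$ time). The $O(1)$ canonical points $\Lambda(\gamma)$ are attached to each cell $\gamma\in\Gamma$ rather than to canonical boxes of objects: cover $\partial B^{+}\cup\partial B^{-}$ by $O(1)$ hypercubes of diameter $\diam(B)/c_0$ and apply $c$-fatness, using that every $c_0$-good object meeting $\partial B$ has diameter more than $\diam(B)/c_0$. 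Then for each $p\in P$ one adds only $\Lambda(\gamma)$ for the \emph{unique} cell $\gamma$ containing $p$, so $|P'|\le O(1)\cdot|P|$ holds trivially, with no charging argument needed. Correctness is exactly where the new classification pays off: if $\hat s$ is pierced by $p\in\gamma$, then $s$ intersects $\gamma$ as well (same class), and since $s$ meets the boundary of some cell and the cells are disjoint, $s$ must meet $\partial\gamma$ itself, hence is pierced by $\Lambda(\gamma)$. Replacing your equivalence relation by ``same set of intersected cells'' and your per-class canonical points by per-cell ones is the repair your proposal needs.
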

\begin{proof}
For each quadtree box $B$, define $\Lambda(B)$ to be a set of points piercing all $c_0$-good objects in $\CCC$ that intersect $\partial B$.   We can ensure that $|\Lambda(B)|=O(1)$ by fatness, since all $c_0$-good objects $s$ intersecting $\partial B$ have diameter more than $\diam(B)/c_0$, and we can cover $\partial B$ by
$O(1)$ hypercubes with diameter $\diam(B)/c_0$.

For each cell $\gamma$ which is the difference of an outer quadtree box $B^+$ with an inner quadtree box~$B^-$, define $\Lambda(\gamma)$ to be $\Lambda(B^+)\cup \Lambda(B^-)$.  Then $|\Lambda(\gamma)|=O(1)$.

It follows that $\OPT\le \sum_{\gamma\in\Gamma}|\Lambda(\gamma)|\le O(b)$.
In the static case, we could use the known greedy algorithm (e.g., see~\cite{Chan03, efrat2000dynamic})
to compute an $O(1)$-approximation to the minimum piercing set for the fat objects in $S$, which runs in time $\OO(n\cdot\OPT)=\OO(bn)$.  We propose a better approach which is dynamizable.

Place two objects of $S$ in the same  \emph{class} if they intersect the same subset of cells in $\Gamma$.
There are at most $b^{O(1)}$ classes: since the objects have constant description complexity, each object maps
to a point in a constant-dimensional space; the objects intersecting a cell map to a semialgebraic set in this space; 
a class corresponds to a cell in the arrangement of these $b$ semialgebraic sets; there are $b^{O(1)}$ cells in the arrangement.
We can determine the class of an object in $\OO(1)$ time by point location in the arrangement~\cite{AgarwalS00a}, after preprocessing in $b^{O(1)}$ time.
Let $\hat{S}$ be a subset of $S$ where we keep one ``representative'' element from each class.
Then $|\hat{S}|\le b^{O(1)}$.
We apply the known greedy algorithm to compute an $O(1)$-approximation to the minimum piercing set for 
the fat objects in $\hat{S}$.  This takes time polynomial in $|\hat{S}|$, i.e., $b^{O(1)}$ time.
Let $P$ be the returned piercing set for $\hat{S}$.
For each point $p\in P$, find the cell $\gamma\in\Gamma$ containing $p$ and add $\Lambda(\gamma)$ to a set $P'$.
Then $|P'|\le O(1)\cdot |P|\le O(1)\cdot\OPT$.  We output~$P'$.

To show correctness, it suffices to show that $P'$ is a piercing set for $S$.
This follows because if $\hat{s}$ is the representative element of $s$'s class, and
$\hat{s}$ is pierced by $p$, then $s$ intersects the cell $\gamma\in\Gamma$ containing $p$, and thus $s$ intersect
$\partial\gamma$ (because of the stated property), and so $s$ must be
pierced by one of the points in $\Lambda(\gamma)$.

Insertions and deletions are now straightforward, by just maintaining a linked list per class, and re-running Agarwal et al.'s
algorithm on $\hat{S}$ from scratch each time. 
\end{proof}

Combining with a quadtree-based divide-and-conquer, we obtain our main result for fat objects:

\begin{theorem}\label{thm:pierce:fat}
Let $d$ and $c$ be constants, and $\delta>0$ be a parameter.
Given a set $S$ of $n$ objects in $\R^d$ of constant description complexity from a $c$-fat collection $\CCC$, 
we can compute an $O(1/\delta)$-approximation to the minimum piercing set for $S$ in $O(n^{1+O(\delta)})$ time.
Furthermore, we can support insertions and deletions in $S$ in $O(n^{O(\delta)})$ amortized time.
%
%If we only want an approximation of the optimal size, we can compute an $O(1)$-approximation in $\OO(n)$ time w.h.p.\ by a static Monte Carlo randomized algorithm.
\end{theorem}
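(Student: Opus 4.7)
The plan is to follow the structure of the proof of Theorem~\ref{thm:pierce:box}, replacing the slab/interval-tree recursion with a compressed quadtree-based divide-and-conquer and using Lemma~\ref{lem:pierce:fat} as the base case in place of Lemma~\ref{lem:pierce:box}. Two wrinkles not present in the box case need to be handled: Lemma~\ref{lem:pierce:fat} requires its objects to be $c_0$-good, and the partition cells at each level must be quadtree boxes (or differences of such). These are addressed by the Shifting Lemma (Fact~\ref{fact:shift}) and Fact~\ref{fact:part} respectively.

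First I would reduce to the $c_0$-good case. After rescaling so all objects lie in $[0,1)^d$, run $d+1$ parallel copies of the algorithm, one per shift vector $v_0,\ldots,v_d$, and in the $j$-th copy only process shifted objects $s+v_j$ that are $O(d)$-good under that shift. By Fact~\ref{fact:shift}, every object of $S$ appears in at least one copy. Taking the union of the piercing sets (unshifted appropriately) from the $d+1$ copies gives a piercing set for $S$; the approximation ratio degrades by only the constant factor $d+1$, since the optimum for each copy is at most $\OPT$.

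For each such copy of $c_0$-good objects, I would build a $b$-ary tree analogous to the interval tree in Theorem~\ref{thm:pierce:box}: at each node holding $n_v$ objects, apply Fact~\ref{fact:part} to the bounding-box corners to partition $\R^d$ into $b$ cells (each a quadtree box or the difference of two), each containing $O(n_v/b)$ corners. Objects that cross a cell boundary are ``stored at this node'' and a piercing set for them is computed by a single call to Lemma~\ref{lem:pierce:fat} (with the $b$ cells playing the role of~$\Gamma$); objects contained inside one cell (which must then contain all $2^d$ of the object's corners, hence at most $O(n_v/b)$ such objects per cell) are passed to the corresponding child and processed recursively. The output is the union of the piercing sets returned at all $O(\log_b n)$ levels.

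The approximation analysis mirrors Theorem~\ref{thm:pierce:box}: each level contributes an $O(1)$ factor from Lemma~\ref{lem:pierce:fat}, the depth is $O(\log_b n)$, and with the $d+1$ factor from shifting and $b=n^{\delta}$ this gives the claimed $O(1/\delta)$. Per-level time summed over nodes is $\OO(n + b^{O(1)}\cdot(\text{nodes at level}))$, so telescoped over the tree this comes to $O(n^{1+O(\delta)})$. Dynamization uses the same weight-balanced rebuilding as in Theorem~\ref{thm:pierce:box}: an update touches one root-to-leaf path and invokes the dynamic part of Lemma~\ref{lem:pierce:fat} at each of the $O(\log_b n)$ levels, and subtrees are rebuilt after $\Theta(n_v/b)$ updates, yielding amortized $O(n^{O(\delta)})$ per update. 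The main thing I would need to verify carefully is that both the $c_0$-goodness property and the quadtree-box structure of the cells are genuinely preserved throughout the recursion: goodness is a per-object property independent of the partition, and Fact~\ref{fact:part} applied inside a cell that is a quadtree box (or difference of two) produces subcells of the same form, so both invariants go through without further work.
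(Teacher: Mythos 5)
Your proposal is correct and follows essentially the same route as the paper: shifting to reduce to $O(d)$-good objects, a $b$-ary tree built via Fact~\ref{fact:part}, Lemma~\ref{lem:pierce:fat} at each node, and the same balancing/rebuilding scheme (the only immaterial differences are partitioning on corners rather than one center point per object, and that an update actually invokes Lemma~\ref{lem:pierce:fat} at only the single node storing the object, which only strengthens your bound).
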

\begin{proof}
We assume that all objects of $S$ are in $[0,1)^d$ and are $O(d)$-good.
This is without loss of generality by the shifting lemma (Fact~\ref{fact:shift}): for each of the $d+1$ shifts $v_j\ (j\in\{0,\ldots,d\})$, we can solve the problem for the good objects of $S+v_j$, and return the union of the piercing sets found (after shifting back by $-v_j$).
The approximation ratio increases by a factor of $d+1=O(1)$.

We build a $b$-ary tree\footnote{
This is basically a $b$-ary variant of Arya et al.'s \emph{balanced box decomposition (BBD) tree}~\cite{AryaMNSW98}.
} 
for $S$ as follows.  Arbitrarily pick one ``center'' point from each object of $S$ and apply
Fact~\ref{fact:part} to the $n$ center points of $S$, to obtain a partition $\Gamma$ into $b$ cells.
Store the objects that intersect the boundaries of these cells at the root.  Recursively build subtrees for the subset of the $O(n/b)$ boxes that are contained in each of the $b$ cells.

For each of the $O(\log_b n)$ levels of the tree, we compute a piercing set for the subset of all objects stored at that level;
we then return the union of these $O(\log_b n)$ piercing sets.  The approximation ratio consequently increases by an $O(\log_b n)$ factor.
To compute a piercing set at one level, since the objects at different nodes lie in disjoint cells, we can just
compute a piercing set for the boxes at each node separately and return the union.  
Computing a piercing set at each node reduces to the case handled by Lemma~\ref{lem:pierce:fat}.
The overall approximation ratio is thus $O(\log_b n)$.

To analyze the running time, observe that each object is stored in one node of the tree and  is thus
assigned to one subproblem handled by Lemma~\ref{lem:pierce:fat}.
The total running time is $\OO(b^{O(1)}n)$.

When we insert/delete an object, we insert/delete the object in one subproblem handled by Lemma~\ref{lem:pierce:fat}.
The update time is $\OO(b^{O(1)})$.
One technical issue is tree balancing: for each subtree with $n$ objects, each child's cell should have $O(n/b)$ center points.  We can use a standard weight-balancing scheme, rebuilding the subtree after encountering $n/b$ updates.
The amortized cost for rebuilding is still $\OO(\frac{b^{O(1)}n}{n/b})=\OO(b^{O(1)})$ per level.

Finally, we set $b=n^\delta$ to get the bounds in the theorem.
%The speedup for the case when we only want an approximation of the optimal size is similar to before, via random sampling.
\end{proof}

Observations similar to Remark~\ref{rmk:pierce} hold here as well.

\section{Maximum Independent Set (MIS)}\label{sec:mis}

In this section, we present our static and dynamic approximation algorithms for MIS for (unweighted or weighted) rectangles and fat objects.  The approach is very similar to our algorithms for MPS in the previous section.  The main difference is in the justification that rounding the input objects increases the approximation ratio by  at most an $O(1)$ factor: the proofs are trickier, but still short.

\subsection{Rectangles}

\begin{lemma}\label{lem:indep:box}
Let $\Gamma$ be a set of $O(b)$ horizontal/vertical lines in $\R^2$.
Let $S$ be a set of $n$ axis-aligned rectangles in $\R^2$ with the property that each rectangle in $S$ is stabbed by at least one horizontal line 
and at least one vertical line in $\Gamma$.
We can compute an $O(1)$-approximation to the maximum independent set for $S$ in $\OO(n+b^{O(1)})$ time.
Furthermore, we can support insertions and deletions in $S$ (assuming the property) in $\OO(b^{O(1)})$ time.

If the rectangles in $S$ are weighted, we can do the same for an $O(\log\log b)$-approximation to the
maximum-weight independent set.
\end{lemma}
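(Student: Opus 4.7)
The plan mirrors \lemmaref{lem:pierce:box}. The $O(b)$ lines in $\Gamma$ partition the plane into $O(b^2)$ cells. Place two rectangles of $S$ in the same \emph{class} iff they intersect the same subset of grid cells; since every rectangle is stabbed by at least one horizontal and one vertical line, each cell set is a rectangular block of at least $2\times 2$ cells, determined by the extremal columns and rows the rectangle spans, giving $O(b^4)$ classes. Let $\hat S\subseteq S$ contain one representative per non-empty class---an arbitrary rectangle in the unweighted case, and the maximum-weight rectangle in the weighted case---so $|\hat S|=O(b^4)$.

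For both the static and dynamic settings, I would simply invoke a polynomial-time black-box approximation algorithm on $\hat S$: Mitchell's $O(1)$-approximation~\cite{mitchell2022approximating} in the unweighted case, or Chalermsook--Walczak's $O(\log\log|\hat S|)=O(\log\log b)$-approximation~\cite{ChalermsookW21} in the weighted case. Since $\hat S\subseteq S$, whatever IS is returned is already an IS in $S$. Insertions and deletions are handled by maintaining one linked list (or priority queue, for the weighted case) per class and rerunning the black box on $\hat S$ after each update, giving $\OO(b^{O(1)})$ amortized update time.

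The only nontrivial step is to show that this rounding preserves the optimum up to a constant factor: $\OPT(\hat S)=\Omega(\OPT(S))$ in the unweighted case, and the analogous weighted bound. The key observation is that any rectangle in a class with cell block $\{i_1,\ldots,i_2\}\times\{j_1,\ldots,j_2\}$ must have its $x$-extent contained in $[V_{j_1-1},V_{j_2}]$ and its $y$-extent in $[H_{i_1-1},H_{i_2}]$---the bounding box of its cells---for otherwise it would intersect a cell outside the block. Given any IS $I^*\subseteq S$ and its image $\hat I=\{\hat r_{c(r)}:r\in I^*\}$, most representative pairs are automatically disjoint because their cell blocks are separated by at least one empty column or row, in which case the bounding boxes are disjoint and so are the representatives. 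In the weighted case, choosing the maximum-weight representative per class ensures that $w(\hat r_{c(r)})\ge w(r)$ for every $r\in I^*$, so any extracted IS in $\hat I$ of size $\Omega(|I^*|)$ also has weight $\Omega(w(I^*))$.

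The main obstacle is handling the ``adjacent class'' cases where the cell blocks of $c(r_1)$ and $c(r_2)$ share a common boundary column (or row) and overlap in the orthogonal direction, so that representatives may still conflict even though $r_1,r_2$ are disjoint. My plan is to argue that, after bucketing the representatives by their shared boundary cells, each representative in $\hat I$ can conflict with only $O(1)$ others: the original disjointness of $r_1,r_2$ forces their slices inside the shared boundary cell to lie in disjoint sub-rectangles, and each class can be a ``boundary occupant'' of only $O(1)$ cells outside its strict interior. A greedy extraction in this bounded-degree conflict graph then yields an IS in $\hat S$ of size $\Omega(|I^*|)$, and composing with the black-box approximation gives the claimed $O(1)$ and $O(\log\log b)$ ratios.
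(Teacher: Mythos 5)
Your reduction (classes by the set of grid cells met, one representative per class, max-weight in the weighted case, black-box on $\hat S$, linked lists/priority queues for updates) is exactly the paper's, and that part is fine. The gap is in the combinatorial core. Your plan rests on the claim that in the conflict graph on $\{\hat s : s\in I^*\}$ each representative conflicts with only $O(1)$ others, after which you extract a large independent set greedily. That claim is false. Take grid lines $x=0,\dots,11$ and $y=0,\dots,4$, let $s_0=[0.5,9.5]\times[0.9,1.1]$ and $s_k=[k-0.4,k+0.4]\times[1.2,2.8]$ for $k=1,\dots,9$; these are pairwise disjoint and each is stabbed by a vertical and a horizontal grid line. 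Choose as representatives $\hat s_0=[0.01,9.99]\times[0.01,1.99]$ (same class as $s_0$: columns $1..10$, rows $1..2$) and $\hat s_k=[k-0.5,k+0.5]\times[1.05,2.9]$ (same class as $s_k$: columns $\{k,k+1\}$, rows $\{2,3\}$). Then $\hat s_0$ intersects all nine $\hat s_k$, and the construction scales to degree $\Omega(b)$. Note also that the cell blocks of disjoint rectangles can genuinely overlap in a $1\times k$ strip of cells (as here), not merely "share a boundary column," so the case analysis you sketch does not cover the actual failure mode.

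What is true, and what the paper proves, is a bound on the \emph{depth} (ply) of $\{\hat s: s\in I^*\}$: if $p\in\hat s$ then $s$ meets the cell containing $p$ and, by the double-stabbing hypothesis, must contain one of that cell's four corners, so disjointness of $I^*$ forces depth at most $4$. Bounded depth does not give bounded degree or an obvious greedy extraction; one needs a coloring theorem for rectangle families of bounded depth --- Asplund--Gr\"unbaum's $O(\Delta^2)$-colorability --- and then takes the largest-cardinality (or largest-weight) color class. Alternatively, the paper gives an elementary argument that sidesteps coloring entirely: pick a random subset $Z$ of grid rows and columns and keep only those $s\in I^*$ whose left edge's column and bottom edge's row lie in $Z$ while the right edge's column and top edge's row do not; in expectation a $\tfrac{1}{16}$ fraction of the cardinality/weight survives, and the representatives of the survivors are pairwise disjoint. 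Either of these would repair your argument; as written, the greedy step has no valid justification.
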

\begin{proof}
Define classes as in the proof of Lemma~\ref{lem:pierce:box}.
Let $\hat{S}$ be a subset of $S$ where we keep one ``representative'' element from each class---in the weighted case,
we keep the largest-weight element of the class.
Then $|\hat{S}|=O(b^4)$.
We apply Mitchell's result~\cite{mitchell2022approximating} (or its subsequent improvement~\cite{galvez20223,GalvezIMPROVED}) to compute an $O(1)$-approximation to the
maximum independent set for the rectangles in $\hat{S}$ in the unweighted case, or Chalermsook and Walczak's result~\cite{ChalermsookW21}
to compute an $O(\log\log|\hat{S}|)$-approximation
in the weighted case.  This takes time polynomial in $|\hat{S}|$, i.e., $b^{O(1)}$ time.
We output the returned independent set $\hat{I}$ for $\hat{S}$.

To analyze the approximation ratio,  let $I^*$ be the optimal independent set.
For each rectangle $s$, let $\hat{s}$ denote the representative element of $s$'s class.
We claim that $\{\hat{s}: s\in I^*\}$ contains an independent set of cardinality $\Omega(1)\cdot |I^*|$ in the unweighted case, or of weight $\Omega(1)$ times the weight of $I^*$ in the weighted case.  From the claim, it would follow that the overall approximation ratio is $O(1)$ in the unweighted case or $O(\log\log b)$ in the weighted case.

To prove the claim, we first show that $\{\hat{s}: s\in I^*\}$ has maximum depth\footnote{The \emph{depth} of a point is the number of objects containing the point.  For rectangles/boxes (but not necessarily other objects), the maximum depth (also called \emph{ply}) is equal to the maximum clique size in the intersection graph.}  $\Delta\le 4$.
To see this, observe that if a point $p$ lies inside $\hat{s}$, then $s$ intersects the
grid cell containing $p$, and so $s$ contains one of the 4 corners of this cell (because of the stated property), as illustrated in Figure~\ref{fig:class}(b), but there are at most 4
rectangles $s\in I^*$ satisfying this condition for a fixed $p$ (because of disjointness of $I^*$).
A classical result of Asplund and Gr\"unbaum~\cite{AG60} states that every arrangement of axis-aligned rectangles with maximum depth $\Delta$ is  
$O(\Delta^2)$-colorable.\footnote{
This has been improved to $O(\Delta\log\Delta)$ by Chalermsook and Walczak~\cite{ChalermsookW21}.
In our case, the rectangles are pseudo-disks, and the bound can be improved further to $O(\Delta)$.
But all this is not too important, since $\Delta=O(1)$ in our application.
}
Thus,  $\{\hat{s}: s\in I^*\}$ can be $O(1)$-colored, and the
largest-cardinality/weight color class, which is an independent set of $\hat{S}$,
must have an $\Omega(1)$ fraction of the cardinality/weight of $\{\hat{s}: s\in I^*\}$.
%It follows that $|\hat{I}|$ is at least $\Omega(1)\cdot |I^*|$ in the unweighted case by Mitchell's algorithm,
%or the weight of $\hat{I}$ is at least $\Omega(1/\log\log b)$ times the weight of $I^*$ in the weighted case by Chalermsook and Walczak's algorithm.

%\begin{figure}
%    \centering
%    \includegraphics{SODA 2024/Rectangles.pdf}
%    \caption{Proof of Lemma~\ref{lem:indep:box}.}
%    \label{fig:enter-label}
%\end{figure}

Insertions and deletions are straightforward, by just maintaining a linked list per class in the unweighted case, or a priority queue (to maintain the largest-weight element) per class in the weighted case, and re-running Mitchell's or Chalermsook and Walczak's
algorithm on $\hat{S}$ from scratch each time.
\end{proof}

\begin{proof}[Alternative Proof]
We describe an interesting, alternative proof of the above claim, which does not rely on the known coloring results.
Consider the grid formed by $\Gamma$.
For each rectangle $s$, let $\xi^-(s)$ and $\xi^+(s)$ be the grid columns containing the left and right edge of $s$ respectively, and let $\eta^-(s)$ and $\eta^+(s)$ be the grid rows containing the bottom and top edge of $s$ respectively.
Observe that there exists a subset $Z$ of the grid columns and rows, such that
$I^*_Z=\{s\in I^*: \xi^-(s)\in Z\ \wedge\ \xi^+(s)\not\in Z\ \wedge\ \eta^-(s)\in Z\ \wedge\ \eta^+(s)\not\in Z\}$ has at least $\frac{1}{16}$ of the cardinality/weight of $I^*$.
This can be proved in several ways\footnote{
This is similar to the well-known fact that in any undirected graph, the \emph{maximum cut} contains at least half of the edges (this has multiple proofs, including the simple probabilistic one).
}; for example, a standard, simple probabilistic argument is to pick $Z$ randomly and just note that the expected cardinality/weight of $I^*_Z$ is equal to $\frac{1}{16}$ times that of $I^*$.

To finish, observe that $\{\hat{s}: s\in I^*_Z\}$ is independent: if $s$ and $s'$ do not intersect but $\hat{s}$ and $\hat{s'}$ intersect, then $\xi^+(s)=\xi^-(s')$ or
$\xi^+(s')=\xi^-(s)$ or $\eta^+(s)=\eta^-(s')$ or $\eta^+(s')=\eta^-(s)$; but this can't happen when $s,s'\in I^*_Z$ by our definition of $I^*_Z$.
\end{proof}

\begin{theorem}\label{thm:indep:box}
Let $\delta>0$ be a parameter.
Given a set $S$ of $n$ axis-aligned rectangles in $\R^2$, 
we can compute an $O(1/\delta^2)$-approximation to the maximum independent set for $S$ in $O(n^{1+O(\delta)})$ time.
Furthermore, we can support insertions and deletions in $S$ in $O(n^{O(\delta)})$ amortized time.

If the rectangles in $S$ are weighted, we can do the same for an $O((1/\delta^2)\log\log n)$-approximation to the
maximum-weight independent set.
%
%If we only want an approximation of the optimal size/weight, we can compute an 
%$O(1)$-approximation in the unweighted case or
%$O(\log\log n)$-approximation in the weighted case, in $\OO(n)$ time w.h.p.\ by a static Monte Carlo randomized algorithm.
\end{theorem}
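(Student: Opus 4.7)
The plan is to apply a two-level $b$-ary divide-and-conquer that mirrors the structure of \theoremref{thm:pierce:box}, invoking \lemmaref{lem:indep:box} as a black box at the bottom of the recursion. The crucial conceptual change compared to MPS is that we must take the \emph{max} of the independent sets produced across levels instead of their union, since independent sets from different levels of the tree need not combine into an independent set.

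Concretely, I would define \emph{type-$j$} subproblems for $j\in\{0,1,2\}$: a type-$j$ instance consists of rectangles stabbed by at least one line from an $O(b)$-size set of vertical lines (when $j\ge 1$) and at least one line from an $O(b)$-size set of horizontal lines (when $j=2$). A type-$0$ instance is the original problem; a type-$2$ instance is exactly the setting of \lemmaref{lem:indep:box}. To solve a type-$j$ instance with $j<2$, I would introduce $b-1$ dividing lines orthogonal to the $(j+1)$-th axis, weight-balanced so each of the $b$ slabs contains $O(n/b)$ rectangle corners. Rectangles crossing at least one dividing line are stored at the root and together form a type-$(j+1)$ instance; the remaining rectangles lie entirely inside slabs and form $b$ smaller type-$j$ instances. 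I solve the type-$(j+1)$ instance at the root, recurse in each slab, and return whichever of the $O(\log_b n)$ resulting independent sets is largest in cardinality (or in total weight, in the weighted case).

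For the approximation analysis, every rectangle is stored at exactly one node of the combined tree, so the optimal independent set $I^*$ partitions across the $O(\log_b n)$ levels of a type-$j$ instance; by pigeonhole, some level accounts for at least an $\Omega(1/\log_b n)$ fraction of $|I^*|$ (respectively, of its weight). At any fixed level the rectangles stored at distinct nodes live in pairwise disjoint slabs of the parent division, so their per-node independent sets can be freely unioned, and solving each per-node type-$(j+1)$ instance yields an $f_{j+1}$-approximation for that level as a whole. Hence $f_j \le O(\log_b n)\cdot f_{j+1}$; combined with $f_2=O(1)$ unweighted or $f_2=O(\log\log b)$ weighted from \lemmaref{lem:indep:box}, setting $b=n^\delta$ gives $f_0=O((\log_b n)^2)=O(1/\delta^2)$ unweighted and $O((1/\delta^2)\log\log n)$ weighted, matching the claim.

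The running-time and dynamic analyses are essentially identical to \theoremref{thm:pierce:box}: each rectangle is stored at exactly one node per type, each type-$2$ call costs $\OO(b^{O(1)})$ by \lemmaref{lem:indep:box} (simply re-running the static black-box from scratch), giving $\OO(b^{O(1)}n)$ total static time and $\OO(b^{O(1)})$ amortized update time per level; a standard weight-balanced rebuild scheme absorbs the rebalancing cost as in \theoremref{thm:pierce:box}. I do not anticipate a serious obstacle; the only subtle point, compared to MPS, is confirming that the max-based combination across levels produces the correct $O(\log_b n)$-factor loss in the approximation ratio (rather than the additive union bound used for MPS), which the pigeonhole argument above handles cleanly.
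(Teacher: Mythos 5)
Your proposal is correct and follows essentially the same route as the paper: the same $b$-ary interval-tree decomposition from Theorem~\ref{thm:pierce:box}, with Lemma~\ref{lem:indep:box} as the black box at type-$2$ nodes, unioning per-node solutions within a level and taking the largest across the $O(\log_b n)$ levels, yielding the recurrence $f_j\le O(\log_b n)\cdot f_{j+1}$. Your explicit pigeonhole justification for the max-over-levels step is exactly the (implicit) reason the paper's ``approximation ratio increases by an $O(\log_b n)$ factor'' claim holds.
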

\begin{proof}
We proceed as in the proof of Theorem~\ref{thm:pierce:box}.  A type-$d$ problem is now
solved by Lemma~\ref{lem:indep:box} with $d=2$.

To solve a type-$j$ problem with $j<d$, we build the same $b$-ary tree as in the proof of Theorem~\ref{thm:pierce:box}.
For each of the $O(\log_b n)$ levels of the tree, we compute an independent set for the subset of all rectangles stored at that level;
we then return the largest-cardinality/weight of these $O(\log_b n)$ independent sets.  The approximation ratio consequently increases by an $O(\log_b n)$ factor.
To compute an independent set at a level, since the rectangles at different nodes lie in disjoint slabs, we can just
compute an independent set for the rectangles at each node separately and return the union.  
Computing an independent set at each node reduces to a type-$(j+1)$ problem.

The approximation ratio for a type-$j$ problem satisfies the recurrence
$f_j \le O(\log_b n)\cdot f_{j+1}$, with $f_d=O(1)$ in the unweighted case or $f_d=O(\log\log b)$ in the weighted case.
Thus, the overall approximation ratio is $f_0=O((\log_b n)^d)$ in the unweighted case or $f_0=O((\log_b n)^d \log\log b)$ with $d=2$.

The analysis of the running time and update time is as before.
%
%We can similarly speed up the static algorithm if we only want an approximation to the optimal size: observe that the size/weight of returned independent set is the maximum of the sizes/weights of $O((\log_b n)^d)$ independent sets, each of which is a sum of the sizes/weights of independent sets for a number of type-$d$ subproblems handled by Lemma~\ref{lem:indep:box}.  In the unweighted case, the size of the independent set of each type-$d$ subproblem lies in the range from 1 to $B := b^{O(1)}$ (after removing empty subproblems).  In the weighted case, the weight of the independent set for the $i$-th type-$d$ subproblem lies between $w_i$ and $Bw_i$ where $w_i$ is the largest rectangle weight in the $i$-th subproblem.  We can approximate the sum to within a constant factor by summing over $O(B)$ terms, via basic random sampling in the unweighted case or importance sampling in the weighted case (see Fact~\ref{fact:samp}), with constant error probability (strictly less than $1/2$).  Thus, the number of calls to Lemma~\ref{lem:indep:box} is $b^{O(1)}$. The error probability can be lowered by repeating for logarithmically many trials and returning the median.
%%The total running time is now lowered to $\OO(n+b^{O(1)})$, which is $\OO(n)$ by setting $b=n^\delta$ for a sufficiently small constant $\delta$.
\end{proof}

By standard binary divide-and-conquer~\cite{AgarwalKS98}, we can extend the result to higher-dimensional boxes, with the approximation ratio increased by one logarithmic factor per dimension:

\begin{corollary}\label{cor:indep:box}
Let $d$ be a constant and $\delta>0$ be a parameter.
Given a set $S$ of $n$ axis-aligned boxes in $\R^d$, 
we can compute an $O((1/\delta^2)\log^{d-2}n)$-approximation to the maximum independent set for $S$ in $O(n^{1+O(\delta)})$ time.
Furthermore, we can support insertions and deletions in $S$ in $O(n^{O(\delta)})$ amortized time.

If the boxes in $S$ are weighted, we can do the same for an $O((1/\delta^2)\log^{d-2}n\log\log n)$-approximation to the
maximum-weight independent set.
\end{corollary}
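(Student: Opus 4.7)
The plan is to induct on $d$, with Theorem~\ref{thm:indep:box} as the base case ($d=2$) and a standard binary divide-and-conquer along the $d$-th axis reducing $d$-dimensional MIS to $(d-1)$-dimensional subproblems. Each dimension reduction will multiply the approximation ratio by $O(\log n)$, giving the stated $O((1/\delta^2)\log^{d-2} n)$ bound.

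Concretely, I would build a weight-balanced binary tree on $S$ along the $d$-th coordinate: at each node $v$ pick a hyperplane $H_v$ orthogonal to the $d$-th axis that splits the boxes at $v$ into two halves. A box at $v$ is either fully to one side of $H_v$ or it crosses $H_v$; non-crossing boxes recurse into the appropriate child. The crossing boxes at $v$ all contain $H_v$, so two of them intersect if and only if their projections onto the first $d-1$ coordinates intersect. Hence the MIS restricted to the crossing boxes at $v$ is exactly a $(d-1)$-dimensional MIS instance, solved inductively. At $v$, output the larger of (i) the union of the recursive solutions from the two subtrees (independent since the two half-spaces are disjoint) and (ii) the $(d-1)$-dimensional solution on the crossing boxes at~$v$.

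For the approximation analysis, split any OPT into its left, right, and crossing parts to get $\mathrm{OPT}(S_v) \le \mathrm{OPT}(S_{v_L}) + \mathrm{OPT}(S_{v_R}) + \mathrm{OPT}(X_v)$; then combine $\mathrm{ALG}(v_L)+\mathrm{ALG}(v_R)\le \mathrm{ALG}(v)$ and $\mathrm{ALG}_{d-1}(X_v)\le \mathrm{ALG}(v)$ with the inductive approximation guarantees to obtain
\[
f_d(n) \;\le\; f_d(n/2) + f_{d-1}(n).
\]
Unrolling gives $f_d(n)=O(\log n)\cdot f_{d-1}(n)$, so with $f_2(n)=O(1/\delta^2)$ from Theorem~\ref{thm:indep:box} we reach $f_d(n)=O((1/\delta^2)\log^{d-2} n)$; the weighted case is identical, starting from $f_2(n)=O((1/\delta^2)\log\log n)$. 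For the running and update times, note that each box lives in exactly one $(d-1)$-dimensional subproblem (the node where it first crosses, or a leaf); since the sizes of the subproblems at any level sum to at most $n$, the total time is $n^{1+O(\delta)}$ with an extra $O(\log n)$ factor for the number of levels, which is absorbed into $n^{1+O(\delta)}$. An insertion or deletion descends one root-to-leaf path and is deposited into a single $(d-1)$-dimensional subproblem, costing $n^{O(\delta)}$ amortized; the usual weight-balanced rebuilding scheme contributes another $n^{O(\delta)}$ amortized cost per level.

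The only delicate point---though it is standard in the AKS-style analysis---is the ``max'' step: at each node we must take the \emph{better} of the two-subtree union or the crossing solution (rather than their sum) and combine it with the inductive bound in the right order, so that the recurrence becomes additive $f_d(n)\le f_d(n/2)+f_{d-1}(n)$ rather than multiplicative. This is what keeps the blowup per dimension to a single $\log n$ factor and is the whole reason the classical binary divide-and-conquer yields $O(\log n)$ rather than $O(n)$ approximation.
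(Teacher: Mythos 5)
Your proposal is correct and is exactly the argument the paper intends: the corollary is justified there by a one-line appeal to the standard binary divide-and-conquer of~\cite{AgarwalKS98}, which is precisely the node-wise max between the union of the two recursive solutions and the $(d-1)$-dimensional solution on the crossing boxes, yielding the additive recurrence $f_d(n)\le f_d(n/2)+f_{d-1}(n)$ and hence one extra $\log n$ factor per dimension. Your accounting of the running time, update time, and the weighted case likewise matches what the paper's Theorem~\ref{thm:indep:box} machinery provides as the base case.
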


\begin{remark}\rm
Observations similar to Remark~\ref{rmk:pierce} hold here as well.  For example, in Appendix~\ref{app:rect}, we note a randomized variant of the static algorithm in $\R^2$ with $O(n\polylog n)$ running time and approximation ratio $O(1)$ (an absolute constant), when we only want an approximation to the optimal value but not an independent set.

Any improvement to the approximation ratio for polynomial-time algorithms for unweighted or weighted case would automatically imply analogous improvements  to the approximation ratio for our static and dynamic algorithms for any constant~$d$.  (Although the first proof of Lemma~\ref{lem:indep:box} relies on a coloring result that holds only in $\R^2$, the alternative proof of the lemma straightforwardly extends to higher dimensions.)
\end{remark}

\subsection{Fat objects}

\begin{lemma}\label{lem:indep:fat}
Let $d,c,c_0$ be constants.
Let $\Gamma$ be a partition of $\R^d$ into $b$ disjoint cells, where each cell is either a quadtree box or the difference of two quadtree boxes.
Let $S$ be a set of $n$ $c_0$-good weighted objects in $\R^d$ of constant description complexity from a $c$-fat collection $\CCC$, 
with the property that each object in $S$ intersects the boundary of at least one cell of $\Gamma$.
We can compute an $O(1)$-approximation to the maximum-weight independent set for $S$ in $\OO(n+b^{O(1)})$ time.
Furthermore, we can support insertions and deletions in $S$ (assuming the property) in $\OO(b^{O(1)})$ time.
\end{lemma}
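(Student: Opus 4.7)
\medskip

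\noindent\textbf{Proof proposal.} The plan is to mirror the proof of Lemma~\ref{lem:pierce:fat}, replacing the piercing-set subroutine with a maximum-weight independent set subroutine on the ``rounded'' instance; the core new ingredient is verifying that rounding preserves the optimum up to an $O(1)$ factor.

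First, I define classes exactly as in Lemma~\ref{lem:pierce:fat}: two objects are in the same class iff they intersect the same subset of cells of $\Gamma$. Using constant description complexity and the semialgebraic arrangement argument from that lemma, the number of classes is $b^{O(1)}$, and the class of a new object can be determined by point location in $\OO(1)$ time after $b^{O(1)}$ preprocessing. For the weighted case, I pick as representative the \emph{largest-weight} object in each class, maintained via a priority queue per class under insertions/deletions. Let $\hat{S}$ denote the set of representatives; then $|\hat{S}|\le b^{O(1)}$. On $\hat{S}$ I invoke a known polynomial-time $O(1)$-approximation for MWIS on $c$-fat objects of constant description complexity as a black box (e.g., the PTAS of Chan~\cite{Chan03} or the local-search algorithm of Chan and Har-Peled~\cite{ChanH12}); since $|\hat{S}|=b^{O(1)}$, this costs $b^{O(1)}$ time, and it is rerun from scratch on each update.

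The heart of the argument is showing that the returned independent set has weight $\Omega(1)\cdot w(I^*)$, where $I^*$ is a MWIS of~$S$. I will first bound the depth (ply) of the family $\{\hat{s}:s\in I^*\}$. Fix a point $p$, let $\gamma\in\Gamma$ be the cell containing $p$, and write $\gamma=B^+\setminus B^-$ with $B^+,B^-$ quadtree boxes (or $\gamma=B^+$). For every $s\in I^*$ with $p\in\hat{s}$, the object $s$ lies in the same class as $\hat{s}$, so $s$ intersects $\gamma$, and by the stated property $s$ cannot be entirely inside $\gamma$ (since cells are pairwise disjoint), hence $s$ crosses $\partial\gamma\subseteq\partial B^+\cup\partial B^-$. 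Because $s$ is $c_0$-good, $s$ sits in a quadtree box $B_s$ with $\diam(B_s)\le c_0\diam(s)$; the nested property of quadtree boxes, together with $s$ crossing $\partial B^\pm$, forces $B_s\supseteq B^\pm$, so $\diam(s)\ge\diam(B^\pm)/c_0$. Covering each $B^\pm$ by $O(c_0^d)=O(1)$ hypercubes of diameter $\diam(B^\pm)/c_0$ and applying the $c$-fatness definition yields $O(1)$ points piercing every $c_0$-good object in $\CCC$ of diameter at least $\diam(B^\pm)/c_0$ that meets $B^\pm$. Since the members of $I^*$ are pairwise disjoint, at most $O(1)$ of them cross $\partial\gamma$, so at most $O(1)$ of them have $p\in\hat{s}$; thus the depth of $\{\hat{s}:s\in I^*\}$ is $O(1)$.

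Next, I invoke the standard fact that the intersection graph of a bounded-depth family of $c$-fat objects in $\R^d$ has $O(1)$ degeneracy (sort by diameter and use fatness to bound each small object's degree among larger ones in its neighborhood), and hence is properly $O(1)$-colorable. The heaviest color class of $\{\hat{s}:s\in I^*\}$ is therefore an independent set in $\hat{S}$ of total weight $\Omega(1)\cdot w(I^*)$, which the black-box MWIS algorithm on $\hat{S}$ then approximates within $O(1)$; this yields the claimed $O(1)$ overall ratio. The dynamic claim is then immediate: each insertion or deletion updates one priority queue in $\OO(1)$ time and triggers one recomputation of the black-box algorithm on $\hat{S}$ in $\OO(b^{O(1)})$ time.

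The step I expect to be the main obstacle is the depth bound, since (unlike the box case in Lemma~\ref{lem:indep:box}) the condition ``$p\in\hat{s}$'' does not directly constrain $s$ to contain any specific grid corner; I handle this by reducing it to the quadtree-box crossing argument that already underlies Lemma~\ref{lem:pierce:fat}, where fatness and disjointness of $I^*$ do the work. The coloring step is a standard consequence of the low-density property of fat families and introduces no new difficulty.
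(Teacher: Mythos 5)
Your proposal is correct and follows essentially the same route as the paper's proof: round each object to a largest-weight class representative, run a black-box $O(1)$-approximate MWIS on the $b^{O(1)}$ representatives, bound the depth of $\{\hat{s}:s\in I^*\}$ by $O(1)$ via the boundary-crossing/fatness piercing argument, and finish with $O(1)$-degeneracy and the heaviest color class. Your explicit treatment of the largest-weight representative and of why $s$ must cross $\partial\gamma$ is, if anything, slightly more detailed than the paper's.
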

\begin{proof}
Define $\Lambda(\cdot)$, classes, and $\hat{S}$ as in the proof of Lemma~\ref{lem:pierce:fat}.
We apply a known algorithm (e.g., see~\cite{Chan03}) to compute an $O(1)$-approximation to the maximum-weight independent set for 
the fat objects in $\hat{S}$.  This takes time polynomial in $|\hat{S}|$, i.e., $b^{O(1)}$ time.
We output the returned independent set $\hat{I}$ for $\hat{S}$.

To analyze the approximation ratio,  let $I^*$ be the optimal independent set.
For each object $s$, let $\hat{s}$ denote the representative element of $s$'s class.
We first show that $\{\hat{s}: s\in I^*\}$ has maximum depth $\Delta=O(1)$.
To see this, observe that if a point $p$ lies inside $\hat{s}$, then $s$ intersects
the cell $\gamma\in \Gamma$ containing $p$, and thus $s$ intersects $\partial\gamma$ (because of the stated property),
and so $s$ must contain at least one of the $O(1)$ points in $\Lambda(\gamma)$, but there are at most $O(1)$
objects $s\in I^*$ satisfying this condition (because of disjointness of $I^*$).
The intersection graph of any collection of $c$-fat objects with maximum depth $\Delta$ 
is $(c\Delta)$-degenerate\footnote{
Recall that a graph is \emph{$k$-degenerate} if every induced subgraph has a vertex of degree at most $k$.
To see why the intersection graph is $(c\Delta)$-degenerate, pick the object $s$ in the subgraph with the smallest diameter.  From the definition of $c$-fatness, the objects intersecting
$s$ can be pierced by $c$ points; so there can be at most $c\Delta$ objects intersecting $s$.  %Remove $s$ and repeat the argument.
} and is therefore $(c\Delta+1)$-colorable.
Thus,  $\{\hat{s}: s\in I^*\}$ can be $O(1)$-colored, and the
largest-weight color class, which is an independent set of $\hat{S}$,
must have $\Omega(1)$ fraction of the weight of $\{\hat{s}: s\in I^*\}$.
It follows that  the weight of $\hat{I}$ is at least $\Omega(1)$ times the weight of $I^*$.

Insertions and deletions are straightforward as before.
\end{proof}

\begin{theorem}\label{thm:indep:fat}
Let $d$ and $c$ be constants, and $\delta>0$ be a parameter.
Given a set $S$ of $n$ weighted objects in $\R^d$ of constant description complexity from a $c$-fat collection $\CCC$, 
we can compute an $O(1/\delta)$-approximation to the maximum-weight independent set for $S$ in $O(n^{1+O(\delta)})$ time.
Furthermore, we can support insertions and deletions in $S$ in $O(n^{O(\delta)})$ amortized time.
%
%If we only want an approximation of the optimal weight, we can compute an 
%$O(1)$-approximation in $\OO(n)$ time w.h.p.\ by a static Monte Carlo randomized algorithm.
\end{theorem}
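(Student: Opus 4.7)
The plan is to follow the exact same quadtree-based divide-and-conquer used in Theorem~\ref{thm:pierce:fat}, but plug in Lemma~\ref{lem:indep:fat} as the base case in place of Lemma~\ref{lem:pierce:fat}, and aggregate across levels by taking the heaviest independent set among the levels rather than the union of pierce sets (mirroring exactly the transition from Theorem~\ref{thm:pierce:box} to Theorem~\ref{thm:indep:box}).

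First I would invoke the Shifting Lemma (Fact~\ref{fact:shift}) to reduce to the case where every object of $S$ lies in $[0,1)^d$ and is $O(d)$-good: for each of the $d+1$ shifts $v_j$ I solve the problem on the good objects of $S+v_j$, and return the heaviest independent set found (shifted back by $-v_j$), losing only a factor of $d+1=O(1)$ in the approximation ratio.

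Next I would build the same $b$-ary tree as in Theorem~\ref{thm:pierce:fat}: pick one center per object, apply Fact~\ref{fact:part} to partition $\R^d$ into $b$ cells (each a quadtree box or a difference of two quadtree boxes) with $O(n/b)$ centers per cell, store at the root the objects whose boundary crosses a cell boundary, and recurse on the objects fully contained in each cell. At each of the $O(\log_b n)$ levels the objects stored at distinct nodes lie in disjoint cells, so their intersection graphs are disjoint and an independent set at that level is simply the union of per-node independent sets; each per-node subproblem satisfies the hypothesis of Lemma~\ref{lem:indep:fat} (good objects, each crossing a cell boundary), so it gives an $O(1)$-approximation in $\OO(|S_v|+b^{O(1)})$ time. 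I then output the heaviest of these $O(\log_b n)$ per-level independent sets.

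For the approximation analysis, the optimal weighted independent set $I^*$ distributes its weight across the $O(\log_b n)$ levels, so some level holds an $\Omega(1/\log_b n)$ fraction of it, and Lemma~\ref{lem:indep:fat} loses only an $O(1)$ factor on that level; setting $b=n^\delta$ gives the claimed $O(1/\delta)$ ratio. For running time, each object belongs to exactly one subproblem per level, giving $\OO(b^{O(1)}n)=O(n^{1+O(\delta)})$ in total. For the dynamic bound, each insertion or deletion touches one subproblem per level and costs $\OO(b^{O(1)})$ there by Lemma~\ref{lem:indep:fat}; weight-balancing by rebuilding a subtree after it encounters an $\Omega(1/b)$ fraction of updates gives an amortized $\OO(b^{O(1)})=O(n^{O(\delta)})$ per level. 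The one genuinely new ingredient relative to Theorem~\ref{thm:pierce:fat} is the base case, which is already established by Lemma~\ref{lem:indep:fat}; consequently, the only subtle point to verify is that the ``best of levels'' aggregation is valid in the weighted setting, which follows from the observation that $I^*\cap (\text{level }\ell)$ is itself an independent set and these sets together recover all of $I^*$.
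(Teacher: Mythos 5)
Your proposal is correct and follows essentially the same route as the paper: shifting lemma to reduce to good objects, the same $b$-ary quadtree/BBD-style tree as in Theorem~\ref{thm:pierce:fat}, Lemma~\ref{lem:indep:fat} at each node, union within a level, best-of-levels aggregation, and $b=n^\delta$. The only nitpick is the phrase ``each object belongs to exactly one subproblem per level''---in this construction each object is stored at exactly one node of the entire tree---but this does not affect any of your bounds.
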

\begin{proof}
We proceed as in the proof of Theorem~\ref{thm:pierce:fat}.  As before, we assume that
all objects of $S$ are in $[0,1)^d$ and are $O(d)$-good.
This is without loss of generality by the shifting lemma (Fact~\ref{fact:shift}): for each of the $d+1$ shifts $v_j\ (j\in\{0,\ldots,d\})$, we can solve the problem for the good objects of $S+v_j$, and return the largest-weight of the independent sets found.
The approximation ratio increases by a factor of $d+1=O(1)$.

We build the same $b$-ary tree as in the proof of Theorem~\ref{thm:pierce:fat}.
For each of the $O(\log_b n)$ levels of the tree, we compute an independent set for the subset of all objects stored at that level;
we then return the largest-weight of these $O(\log_b n)$ independent sets.  The approximation ratio consequently increases by an $O(\log_b n)$ factor.
To compute an independent set at a level, since the objects at different nodes lie in disjoint cells, we can just
compute an independent set for the boxes at each node separately and return the union.  
Computing an independent set at each node reduces to the case handled by Lemma~\ref{lem:indep:fat}.
The overall approximation ratio is thus $O(\log_b n)$.

The analysis of the running time and update time is as before.
%We set $b=n^\delta$.  The speedup for the case when we only want an approximation of the optimal weight is similar to before, via importance sampling.
\end{proof}

%\end{document}

%\documentclass[11pt]{article}
%\usepackage{times,fullpage,amsmath,amsthm,amssymb}
%\newtheorem{theorem}{Theorem}
%\newtheorem{lemma}[theorem]{Lemma}
%\newtheorem{corollary}[theorem]{Corollary}

%\begin{document}

\section{Minimum Vertex Cover (MVC)}
In this section, we study efficient static and dynamic algorithms for the MVC problem for intersection graphs of geometric objects.
%\todo{s: we could add couple of lines about the general strategy} 

\subsection{Approximating the LP via MWU}

\newcommand{\updateweight}{\textsc{update-weight}}
\newcommand{\findminedge}{\textsc{find-min-weight-edge}}

For a graph $G=(V,E)$, a \emph{fractional vertex cover} is a vector
$(x_v)_{v\in V}$ such that $x_u+x_v\le 1$ for all $uv\in E$ and $x_v\in [0,1]$ for all $v\in V$.  
Its \emph{size} is defined as $\sum_{v\in V}x_v$.  Finding a minimum-size fractional vertex cover
corresponds to solving an LP, namely, the standard LP relaxation of the MVC problem.

It is known that solving this LP is equivalent to computing the MVC in a related bipartite graph,
and thus can be done by known bipartite MCM algorithms---in fact,
in time almost linear in the number of edges by recent breakthrough results~\cite{ChenKLPGS22}. 
However, there are two issues that prevent us from applying such algorithms.
First of all, we are considering geometric intersection graphs, which may have $\Omega(n^2)$ number of edges; this issue could potentially be fixed by
using known techniques involving \emph{biclique covers} to sparsify the graph (maximum matching in a bipartite graph
then reduces to maximum flow in a sparser 3-layer graph~\cite{FederM95}). Second, for dynamic MVC, we will need efficient data
structures that can solve the LP still faster, in \emph{sublinear} time when $\OPT$ is small.

For our purposes, we only need to solve the LP approximately.
Our idea is to use a different well-known technique: \emph{multiplicative weight update} (MWU)\@. The key lemma is stated below.  The MWU algorithm and analysis here are not new (the description is short enough that we choose to include it to be self-contained),
and MWU algorithms have been used before for static and dynamic geometric set cover and other geometric optimization
problems (the application to vertex cover turns out to be a little simpler). However, our contribution is not in the proof of the lemma, but in the realization that MWU reduces the problem to designing a dynamic data structure (for finding min-weight edges subject to vertex-weight updates), which geometric intersection graphs happen to possess, as we will see.

\begin{lemma}\label{lem:mwu}
We are given a graph $G=(V,E)$.  Suppose there is a data structure $\DS$ for storing a vector $(w_v)_{v\in V}$
that can support the following two operations in $\tau$ time: (i) find an edge $uv\in E$ minimizing $w_u+w_v$, and (ii) update a number $w_v$.

Given a data structure $\DS$ for the vector that currently has $w_v=1$ for all $v\in V$,
we can compute a $(1+O(\delta))$-approximation to the minimum fractional vertex cover in
$\OO((1/\delta^2)\,\OPT\cdot \tau)$ time.  Here, $\OPT$ denotes the minimum vertex cover size.
\end{lemma}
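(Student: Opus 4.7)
The plan is to run a standard multiplicative weight update (MWU) scheme, with $\DS$ serving as the weight oracle. Starting from the given initialization $w_v = 1$ for all $v$, iterate as follows for $t = 1, 2, \ldots, T$: query $\DS$ for the edge $e_t = u_t v_t$ of minimum weight $\mu_t = w_{u_t}+w_{v_t}$, then issue two updates to multiply $w_{u_t}$ and $w_{v_t}$ each by $1+\eta$, for a parameter $\eta = \Theta(\delta)$. After $T = \Theta(\OPT \log n/\delta^2)$ iterations, output $x_v = w_v^{(T)}/\mu_T$ as the approximate fractional vertex cover.

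The analysis has three short steps. First, feasibility: since weights only grow, the sequence $\mu_t$ is monotone non-decreasing, so at time $T$ every edge $uv \in E$ satisfies $w_u + w_v \ge \mu_T$, which gives $x_u+x_v\ge 1$. Second, I would bound the potential $\Phi_t = \sum_v w_v^{(t)}$ both ways. The upper bound $\Phi_T \le n\exp(\eta T/\OPT^*)$, where $\OPT^*\le \OPT$ is the minimum fractional vertex cover value, comes from the LP-duality invariant $\mu_t \le \Phi_t/\OPT^*$: this follows by applying the inequality $\sum_e y^*_e(w_u+w_v)\le \sum_v w_v = \Phi_t$ to an optimal fractional matching $y^*$ (whose LP-value equals $\OPT^*$) and then using $\min_e(\cdot) \le$ weighted average. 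The matching lower bound $\Phi_T \ge \OPT \cdot (1+\eta)^{T/\OPT}$ follows from Jensen's inequality: for any integral vertex cover $C^*$ of size $\OPT$, every iteration increments at least one counter $k_v$ with $v \in C^*$, so $\sum_{v \in C^*} k_v \ge T$, and convexity of $x \mapsto (1+\eta)^x$ yields the bound. Third, combining these bounds with the averaging inequality $\mu_T \ge (\Phi_T - n)/(\eta T)$ (a consequence of monotonicity of $\mu_t$ and the telescoping identity $\Phi_T = n + \eta \sum_t \mu_t$), and using the integrality-gap relation $\OPT \le 2\OPT^*$, the calibration $\eta = \Theta(\delta)$ and $T = \Theta(\OPT \log n/\delta^2)$ yields $\sum_v x_v = \Phi_T/\mu_T \le (1+O(\delta))\OPT^*$, as required.

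The running time is then immediate: $T$ iterations, each performing one \textsc{find-min-weight-edge} query and two \textsc{update-weight} operations on $\DS$, totals $O(T \tau) = \OO(\OPT\cdot\tau/\delta^2)$. One technicality is that $T$ depends on the unknown $\OPT$; this is handled by guessing $\OPT$ via doubling, trying $k = 1, 2, 4, \ldots$ in place of $\OPT$ and certifying success once the returned cover has size at most $(1+O(\delta))k$. The geometric series of trial costs is dominated by the last successful run, keeping the total time $\OO(\OPT\cdot\tau/\delta^2)$.

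I expect the main obstacle to be the third analysis step, namely the tight calibration of $\eta$ and $T$: $\eta$ must be small enough that each multiplicative update acts near-linearly on the relevant quantities (so per-step error does not accumulate beyond $O(\delta)$), yet $T$ must be large enough for $\Phi_T$ to dominate $n$ in the averaging bound---otherwise the lower bound on $\mu_T$ is vacuous---while remaining small enough that $\eta T$ stays within a $(1+O(\delta))$ factor of $\OPT^*$. The stated choice of parameters should reconcile these competing constraints.
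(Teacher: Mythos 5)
Your setup (min-weight-edge oracle driving an MWU, potential $\Phi_t=\sum_v w_v$ bounded above via an optimal fractional matching and below via Jensen on a vertex cover) is sound through step two, but step three --- which you correctly flag as the obstacle --- does not go through, and no calibration of $\eta$ and $T$ can make it go through. Your only lower bound on the normalizer is the averaging bound $\mu_T\ge(\Phi_T-n)/(\eta T)$, which yields $\sum_v x_v=\Phi_T/\mu_T\le \eta T\cdot\frac{\Phi_T}{\Phi_T-n}$. To make the factor $\frac{\Phi_T}{\Phi_T-n}$ close to $1$ you need $\Phi_T\gg n$, which by your own upper bound $\Phi_T\le n\exp(\eta T/\OPT^*)$ forces $\eta T=\Omega(\OPT^*\log(n/\OPT^*))$; but then the bound you obtain is $\sum_v x_v\lesssim \eta T=\Omega(\OPT^*\log n)$, a $\Theta(\log n/\delta)$ factor worse than the claimed $(1+O(\delta))\OPT^*$. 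The two constraints ``$\eta T\le(1+O(\delta))\OPT^*$'' and ``$\Phi_T\gg n$'' are mutually exclusive whenever $\log(n/\OPT)\gg 1$, so the averaging lower bound on $\mu_T$ is fundamentally too lossy: it treats $\mu_T$ as if it were only as large as the \emph{average} of the $\mu_t$'s, whereas the argument needs $\mu_T$ to be within a $(1+O(\delta))$ factor of its ceiling $\Phi_T/\OPT^*$. This also undermines the doubling search: certifying ``the returned cover has size at most $(1+O(\delta))k$'' only helps if you can guarantee the certificate fires at $k\approx\OPT^*$, which is exactly what the broken step was supposed to provide.

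The paper's proof repairs this by changing what is proved about the weights. Fix a target $z$ and only boost edges violating the threshold $w_u+w_v<W/z$ (the min-weight edge is just a convenient way to test for one), stopping as soon as no violated edge exists; then $x_v:=\min\{zw_v/W,1\}$ is feasible and has size at most $z$ \emph{by construction}, with no lower bound on $\mu_T$ needed. The potential argument is then used only to bound the number of iterations: the upper bound $W\le(1+\delta/z)^t n$ is played against the lower bound $W\ge(1+\delta)^{t/z^*}$ (obtained from the optimal \emph{fractional} cover, via $\max_v c_v\ge t/z^*$ --- note your integral-cover version gives only rate $\ln(1+\eta)/\OPT$, which can be half the upper rate since $\OPT$ may equal $2\OPT^*$, yielding no contradiction at all), and for $z\ge(1+\delta)z^*$ the two exponential rates differ by $\Theta(\delta^2/z^*)$, forcing termination within $O((1/\delta^2)z^*\log n)$ iterations. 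An exponential search over $z$ then finds a feasible solution of size $(1+O(\delta))z^*$. If you want to keep your ``run for fixed $T$ and normalize by $\mu_T$'' scheme, you would still need an argument of this threshold type showing $\mu_T\ge \Phi_T/((1+O(\delta))\OPT^*)$ at the specific time you stop; the property is not monotone in $t$, so this is genuinely the crux and not a constant-tuning issue.
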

\begin{proof}
Given a number $z$, the following algorithm attempts to find a fractional vertex cover of size at most~$z$ (below, $W$ denotes $\sum_{v\in V}w_v$):

\begin{quote}
\begin{tabbing}
for \= \kill
let $w_v=1$ for all $v\in V$, and $W=n$\\
while there exists $uv\in E$ with $w_u+w_v < W/z$ do \\
\> let $uv$ be such an edge\\
\> $W\leftarrow W + \delta(w_u+w_v)$,\ $w_u\leftarrow (1+\delta)w_u$,\ $w_v\leftarrow (1+\delta)w_v$
\end{tabbing}
\end{quote}

If and when the algorithm terminates, we have $w_u+w_v\ge W/z$ for all $uv\in E$.
Thus, defining $x_v := \min\{zw_v/W,\,1\}$, we have $x_u+x_v\ge 1$ for all $uv\in E$, and $\sum_{v\in V} x_v \le z$, i.e.,
$(x_v)_{v\in V}$ is a fractional vertex cover of size at most $z$.

We now bound the number of iterations $t$.
In each iteration, $W$ increases by at most a factor of $1+\delta/z$.  Thus,
at the end, 
\[ W\le (1+\delta/z)^t n.\]
Write each $w_v$ as $(1+\delta)^{c_v}$ for some integer~$c_v$.
Let $(x_v^*)_{v\in V}$ be an optimal fractional vertex cover of size $z^*$.
In each iteration, $\sum_{v\in V} c_vx_v^*$ increases by at least 1 (since we increment $c_u$ and $c_v$
for the chosen edge $uv$, and we know $x_u^*+x_v^*\ge 1$).
Thus, at the end, $\sum_{v\in V} c_vx_v^*\ge t$.
Since $\sum_{v\in V} x_v^* = z^*$,
it follows that $\max_{v\in V} c_v\ge t/z^*$.  Thus,
\[ W\ge (1+\delta)^{t/z^*}.\]
Therefore, $(1+\delta)^{t/z^*}\le (1+\delta/z)^tn\le e^{\delta t/z}n$, implying $(t/z^*)\ln (1+\delta) \le \delta t/z + \ln n$.
So, if $z\ge (1+\delta)z^*$, then $t\le \frac{z^*\ln n}{\ln (1+\delta) - \delta/(1+\delta)}= O((1/\delta^2)z^*\log n)$.

Note that only $O(t)=O((1/\delta^2)z^*\log n)$ of the numbers $w_v$ are not equal to 1, so 
the vector $(x_v)_{v\in V}$ can be encoded in $\OO(z^*)$ space.

We can try different $z$ values by binary or exponential search
till the algorithm terminates in $O((1/\delta^2)z^*\log n)$
iterations.  Each run can be implemented with $O((1/\delta^2)z^*\log n)$ operations in $\DS$.
After each run, we reset all the modified values $w_v$ back to 1 by $O((1/\delta^2)z^*\log n)$ update operations in $\DS$.
\end{proof}

\subsection{Kernel via Approximate LP}

Our approach for solving the MVC problem is to use a standard \emph{kernel} by
Nemhauser and Trotter~\cite{nemhauser1975vertex}, which allows us to reduce the problem to an instance where the number of vertices
is at most $2\,\OPT$.  Nemhauser and Trotter's construction is obtained from the LP solution: the kernel is 
simply the subset of all vertices $v\in V$ with $x_v=\frac12$.

In our scenario, we are only able to solve the LP approximately.  We observe that this is still enough to give a kernel
of approximately the same size.  We adapt the standard analysis of Nemhauser and Trotter, but
some extra ideas are needed.  The proof below will be self-contained.

\begin{lemma}\label{lem:kernel}
Let $c\ge 1$ and $0\le\delta<\gamma<\frac14$.
Given a $(1+O(\delta))$-approximation to the minimum fractional vertex cover in a graph $G=(V,E)$,
we can compute a subset $K\subseteq V$ of size at most $(2+O(\gamma))\,\OPT$, in $O(\OPT)$ time, such that 
a $c(1+O(\sqrt{\delta/\gamma}))$-approximation to the minimum vertex cover of $G$ can be obtained from
a $c$-approximation to the minimum vertex cover of $G[K]$ (the subgraph of $G$ induced by $K$).  Here, $\OPT$ denotes the minimum vertex cover size of $G$.
\end{lemma}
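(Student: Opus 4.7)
The plan is to adapt the classical Nemhauser--Trotter kernelization to the approximate LP regime. Using the $(1+O(\delta))$-approximate fractional vertex cover $(x_v)$, I would partition $V$ into $V_1 := \{v : x_v > \tfrac{1}{2} + \gamma\}$, $V_0 := \{v : x_v < \tfrac{1}{2} - \gamma\}$, and $V_{1/2} := V \setminus (V_0 \cup V_1)$, and set $K := V_{1/2}$. Feasibility $x_u + x_v \ge 1$ rules out edges within $V_0$ and edges between $V_0$ and $V_{1/2}$ (the two values would sum to strictly less than $1$), so every edge incident to $V_0$ has its other endpoint in $V_1$. Each $v \in K$ contributes at least $\tfrac{1}{2}-\gamma$ to $\sum_v x_v \le (1+O(\delta))\OPT$, giving $|K| \le (2+O(\gamma))\OPT$ using $\delta < \gamma$. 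Computing the partition takes $O(\OPT)$ time because, by Lemma~\ref{lem:mwu}, only $\OO(\OPT)$ coordinates of $(x_v)$ are nontrivial.

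Given a $c$-approximate vertex cover $C_K$ of $G[K]$, my output is $C := V_1 \cup C_K$. This is a valid vertex cover: edges inside $K$ are covered by $C_K$, edges touching $V_0$ land in $V_1$, and every remaining edge touches $V_1 \cup K$. To analyze $|C|$, fix an optimal vertex cover $C^*$ of $G$; since $C^* \cap K$ is itself a vertex cover of $G[K]$, we have $\OPT(G[K]) \le |C^* \cap K|$. Writing $D := V_1 \setminus C^*$ and noting that $V_1 \cap C^*$ and $C^* \cap K$ are disjoint subsets of $C^*$, the hypothesis $c \ge 1$ gives
\[
|C| \;\le\; |V_1| + c\,|C^* \cap K| \;=\; |D| + |V_1 \cap C^*| + c\,|C^* \cap K| \;\le\; c\,|C^*| + |D| \;=\; c\,\OPT + |D|.
\]
It therefore suffices to prove $|D| \le O\!\bigl(\sqrt{\delta/\gamma}\bigr)\,\OPT$, which produces $|C| \le c\bigl(1 + O(\sqrt{\delta/\gamma})\bigr)\,\OPT$.

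The bound on $|D|$ is the expected main obstacle. Each $v \in D$ satisfies $x_v > \tfrac{1}{2}+\gamma$ and $v \notin C^*$, forcing $N(v) \subseteq C^*$; in particular $G$ contains no edges inside $D$. I would combine two ingredients. First, the MWU algorithm of Lemma~\ref{lem:mwu} implicitly produces an approximate fractional matching $(y_e)$ with $\sum y_e \ge \sum x_v /(1+O(\delta))$, from which one extracts the approximate-slackness inequality $\sum_v x_v(1-\mu_v) \le O(\delta)\OPT$ for $\mu_v := \sum_{e \ni v} y_e \le 1$; applied to $D$ this yields $\sum_{v \in D}(1-\mu_v) \le O(\delta/\gamma)\OPT$, while matching edges out of $D$ can only land in $N(D) \subseteq C^*$, so $\sum_{v \in D} \mu_v \le |N(D)| \le \OPT$. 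Second, a primal-perturbation: decreasing $(x_v)$ by a step $\tau \le \tfrac{1}{2}+\gamma$ on a subset $S \subseteq D$ and raising it by $\tau$ on $N(S)$ (capped at $1$) stays feasible thanks to $N(v) \subseteq N(S)$ for every $v \in S$ together with the slack $x_u+x_v \ge 1$; the induced change in $\sum x_v$ must be absorbed by the $(1+O(\delta))$-gap, yielding inequalities of the form $\tau|S| \le \sum_{v \in N(S)} \min(\tau,\, 1-x_v) + O(\delta)\OPT$. Optimizing $\tau$ to balance the $\gamma$-margin of $V_1$ against the global $O(\delta)\OPT$ slack budget (an AM--GM-style trade-off, since a smaller $\tau$ hurts the linear term in $|S|$ while a larger $\tau$ inflates the headroom term on $N(S)$) is what delivers the sharp $|D| \le O(\sqrt{\delta/\gamma})\,\OPT$ bound and completes the proof.
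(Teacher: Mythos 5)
Your high-level architecture (threshold the $x_v$'s into three classes, take $K$ to be the middle class, output $V_1\cup C_K$) matches the paper's, and your bounds on $|K|$ and the running time are fine. But the analysis has a genuine gap: you reduce everything to the claim $|V_1\setminus C^*|\le O(\sqrt{\delta/\gamma})\,\OPT$, and that claim is false. Consider a perfect matching on $n$ vertices and the \emph{exact} LP optimum that sets $x_u=1$ and $x_v=0$ on each edge $uv$ (so $\delta=0$). Then $V_1$ is the set of all $1$-valued endpoints, $|V_1|=n/2=\OPT$, and taking $C^*$ to be the set of $0$-valued endpoints gives $|V_1\setminus C^*|=\OPT$ while your target bound is $0$. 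The construction still works in this example only because of a compensating term you discarded: writing out $|C|\le |V_1|+c|C^*\cap K|$ carefully (using $|C^*|=|C^*\cap V_0|+|C^*\cap K|+|C^*\cap V_1|$ and $c\ge 1$) gives $|C|\le c\,|C^*| + \bigl(|V_1\setminus C^*| - c\,|V_0\cap C^*|\bigr)$. The quantity that must be bounded by $O(\sqrt{\delta/\gamma})\,\OPT$ is the \emph{difference} $|V_1\setminus C^*|-|V_0\cap C^*|$, not $|V_1\setminus C^*|$ alone; in the matching example the two terms cancel. This is exactly what the paper bounds (its $|H\setminus \XX^*|-|L\cap \XX^*|$).

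Two further problems with your sketch of the hard step. First, your ``approximate complementary slackness'' ingredient assumes access to an approximate fractional matching produced alongside the cover; the lemma's hypothesis only hands you the fractional cover $(x_v)$, and in any case that inequality alone yields $|D|\le(1+O(\delta))\OPT$, far from what you need. Second, your primal-perturbation idea is the right germ --- the paper's proof perturbs $x$ down by $\lambda$ on $V_1\setminus C^*$ and up by $\lambda$ on $C^*$-vertices of small $x$-value, and uses near-optimality of $\sum_v x_v$ to bound the difference above by $O(\delta/\lambda)\OPT$ plus a boundary term --- but to make the perturbation feasible one must handle vertices whose $x$-value sits within $\lambda$ of the threshold. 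The paper does this by choosing the threshold $\alpha$ from $O(\gamma/\lambda)$ candidates so that the band $\{v: \alpha\le x_v<\alpha+\lambda\}$ contains only $O((\lambda/\gamma)\OPT)$ vertices, and then balancing $\delta/\lambda$ against $\lambda/\gamma$ via $\lambda=\sqrt{\gamma\delta}$. Your sketch fixes the threshold at $\tfrac12\pm\gamma$ and never addresses this boundary band, so the feasibility of the perturbed solution does not follow. As written, the proposal does not constitute a proof.
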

\begin{proof}
Let $(x_v)_{v\in V}$ be the given fractional vertex cover.  Let $\lambda < \gamma$ be a parameter to be set later.
%For an interval $I$, let $V_I=\{v\in V: x(v)\in I\}$.
Pick a value $\alpha\in [\frac12 - \gamma-\lambda,\frac12-\lambda]$ which is an integer multiple of $\lambda$,
minimizing $|\{v\in V: \alpha \le x_v < \alpha+\lambda\}|$.  This can be found in $O(|\{v\in V: x_v \ge \frac{1}{2}-\gamma-\lambda\}|)\le O(\OPT)$ time.\footnote{
This assumes an appropriate encoding of the vector $(x_v)_{v\in V}$, for example, the
encoding from the proof of Lemma~\ref{lem:mwu}.
}

Partition $V$ into 3 subsets: $L=\{v\in V: x_v < \alpha\}$, $H=\{v\in V: x_v > 1-\alpha\}$, and $K=\{v\in V: \alpha \le x_v \le 1-\alpha\}$.
Note that $|K|\le \frac{1}{\alpha}\sum_{v\in V}x_v\le (2+O(\gamma))(1+\delta)\OPT = (2+O(\gamma))\OPT$.
% $K=\{v: V_{(\alpha,1-\alpha)}$.

Let $\XX_K$ be a $c$-approximate minimum vertex cover of $G[K]$. 
We claim that $\XX := \XX_K\cup H$ is a $c(1+O(\sqrt{\delta/\gamma}))$-approximation to the minimum vertex cover of $G$.

First, $\XX_K\cup H$ is a vertex cover of $G$, since vertices in $L$ can only be adjacent
to vertices in $H$.

Let $\XX^*$ be a minimum vertex cover of $G$, with $|\XX^*|=\OPT$.
Since $K\cap \XX^*$ is a vertex cover of $G[K]$, we have
$|\XX_K|\le c|K\cap \XX^*|$, and so $|\XX|\le c(|K\cap \XX^*| + |H|) = c(|\XX^*| + |H\setminus \XX^*| - |L\cap \XX^*| )$.

We will upper-bound $|H\setminus \XX^*|-|L\cap \XX^*|$.
To this end, let $L'=\{v\in V: \alpha \le x_v < \alpha+\lambda\}$, and define the following modified vector $(x_v')_{v\in V}$:
\[ x'_v = \left\{\begin{array}{ll}
              x_v-\lambda & \mbox{if $v\in H\setminus \XX^*$}\\
              x_v+\lambda & \mbox{if $v\in (L\cup L')\cap \XX^*$}\\
              x_v  & \mbox{otherwise.}
\end{array}\right.
\]
Note that $(x'_v)_{v\in V}$ is still a fractional vertex cover, since for each edge $uv\in E$ with $u\in H\setminus \XX^*$ and $v\in (L\cup L')\cap \XX^*$,
we have $x'_u+x'_v = (x_u-\lambda)+(x_v+\lambda)\ge 1$; on the other hand, for each edge $uv\in E$ with $u\in H\setminus \XX^*$ and
$v\not\in (L\cup L')\cap \XX^*$,
we have $v\in \XX^*$ (since $\XX^*$ is a vertex cover) and so $v\not\in L\cup L'$,
implying that $x'_u+x'_v \ge (1-\alpha-\lambda)+(\alpha+\lambda)\ge 1$.
Now, $\sum_{v\in V} x_v - \sum_{v\in V} x'_v = \lambda (|H\setminus \XX^*| - |L\cap \XX^*| - |L'\cap \XX^*|)$.
On the other hand, $\sum_{v\in V} x_v - \sum_{v\in V} x'_v\le (1-\frac{1}{1+O(\delta)})\sum_{v\in V}x_v
\le O(\delta)|\XX^*|$.  It follows that $|H\setminus \XX^*| - |L\cap \XX^*| \le O(\frac{\delta}{\lambda})|\XX^*| + |L'\cap \XX^*|$.

By our choice of $\alpha$, we have $|L'|\le O(\frac{1}{\gamma/\lambda}) \cdot |\{v\in V: x_v \ge \frac{1}{2}-\gamma-\lambda\}|
\le O(\frac{\lambda}{\gamma})|\XX^*|$.
%For a random $\alpha$, we have  $\Ex[|L'\cap S^*|] \le O(\frac{\lambda}{\delta})|S^*|$.
We conclude that $|\XX|\le c(|\XX^*| + |H\setminus \XX^*| - |L\cap \XX^*| ) \le c(1+O(\frac\delta\lambda + \frac\lambda\gamma)) |\XX^*|$.  Choose $\lambda=\sqrt{\gamma\delta}$.
\end{proof}

\subsection{Dynamic Geometric MVC via Kernels}

We now use kernels to reduce the dynamic vertex cover for geometric intersection graphs to a special case of
static vertex cover where the number of objects is approximately at most $2\,\OPT$.
We use a simple, standard idea for dynamization: be lazy, and periodically recompute the solution only after every $\eps\,\OPT$ updates
(since the optimal size changes by at most 1 per update).
This idea is commonly used in dynamic algorithms, e.g., in various previous works on dynamic geometric MIS~\cite{CardinalIK21},
dynamic MPS~\cite{AHRS23}, 
dynamic matching in graphs \cite{GuptaP13}, etc.

We observe that the data structure subproblem of dynamic min-weight intersecting pair, needed in Lemma~\ref{lem:mwu}, is reducible to 
dynamic intersection detection by known techniques.

\begin{lemma}\label{lem:Epp}
Let $\CC$ be a class of geometric objects, where there is a dynamic data structure $\DS_0$ for $n$ objects in $\CC$ that 
can detect whether there is an object 
intersecting a query object, and supports insertions and deletions of objects, with $O(\tau_0(n))$ query and update time.  

Then there is a dynamic data structure $\DS$ for $n$ weighted objects in $\CC$ that maintains
an intersecting pair of objects minimizing the sum of the weights, under insertions and deletions
of weighted objects, with $\OO(\tau_0(n))$ amortized time.
\end{lemma}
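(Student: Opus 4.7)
The plan is to cast this as a dynamic generalized closest-pair problem and invoke Eppstein's framework~\cite{Eppstein95}. Define an abstract "distance" $d(u,v)=w_u+w_v$ when $u,v$ intersect and $d(u,v)=+\infty$ otherwise. The minimum-weight intersecting pair is then exactly the closest pair under $d$. Eppstein's technique reduces the maintenance of such a closest pair under insertions and deletions of elements to a \emph{nearest-neighbor primitive}: given a weighted query object $u$, find $v$ in the current set $S$ minimizing $d(u,v)$---equivalently, find the object of smallest weight $w_v$ that intersects $u$. His reduction incurs only polylogarithmic overhead: if the primitive and the accompanying insert/delete are supported in $T(n)$ time, the closest pair is maintained in $\OO(T(n))$ amortized time per update.

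Next, I would implement the nearest-neighbor primitive using $\DS_0$ as a black box. Maintain the objects of $S$ sorted by weight in a weight-balanced BST, and at every internal node $x$ store a separate instance $\DS_0(x)$ built over the objects in the subtree rooted at $x$. To answer a query for $u$, walk from the root downward: at each node $x$, use $\DS_0$ at the left child of $x$ to test whether any object in the left subtree intersects $u$; if so, descend left, otherwise descend right. After $O(\log n)$ intersection-detection queries on $\DS_0$, the walk terminates at the leaf holding the minimum-weight intersector. Insertions and deletions affect only the $\DS_0$ instances along one root-to-leaf path, for $O(\log n)$ updates to $\DS_0$; standard weight-balanced rebuilding adds at most $O(\log n)$ extra $\DS_0$ updates amortized per operation. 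Therefore the primitive and its updates cost $T(n)=O(\log n\cdot \tau_0(n))$.

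Plugging this $T(n)$ into Eppstein's reduction yields $\OO(\tau_0(n))$ amortized update time, as claimed.

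The main point to verify is that Eppstein's framework truly applies to our "distance" $d$. The function is symmetric and the framework only requires a nearest-neighbor primitive (not a true metric), so this should go through; mild care is needed to check that the $+\infty$ values cause no trouble (they simply encode that non-intersecting pairs are never returned as candidates). A secondary technical point is making the BST rebalancing cost fit inside the stated amortized bound, which is handled by the standard weight-balanced (or $\alpha$-balanced) tree analysis combined with the observation that rebuilding a subtree of size $m$ requires only $O(m)$ updates to the $\DS_0$ instances stored along its spine.
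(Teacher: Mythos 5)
Your proposal is correct and follows essentially the same route as the paper: model the problem as a dynamic generalized closest pair under the distance $d(u,v)=w_u+w_v$ (or $\infty$ for non-intersecting pairs), apply Eppstein's reduction to a dynamic min-weight-intersector query, and realize that query by a multi-level structure with a balanced search tree over weights whose nodes store $\DS_0$ instances. The paper simply cites the multi-level construction as standard where you spell it out explicitly; the substance is the same.
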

\begin{proof}
Eppstein~\cite{Eppstein95} gave a general technique to reduce the problem of dynamic closest pair to
the problem of dynamic nearest neighbor search, for arbitrary distance functions, while increasing
the time per operation by at most two logarithmic factors (with amortization).  Chan~\cite{Chan20} gave an alternative
method achieving a similar result.  The  $\DS$ problem can be viewed as a dynamic closest pair problem, where the distance between
objects $u$ and $v$ is $w_u+w_v$ if they intersect, and $\infty$ otherwise.
Thus, our problem reduces to the corresponding dynamic nearest neighbor search problem, namely, designing a data structure that can find a min-weight object intersecting
a query object, subject to insertions and deletions of objects (\emph{dynamic min-weight intersection searching}).

We can further reduce this to the $\DS_0$ problem (\emph{dynamic intersection detection}) by a standard \emph{multi-level} data structuring technique~\cite{AgarwalE99}, where
the primary data structure is a 1D search tree over the weights, 
and each node of the tree stores a secondary data structure for dynamic intersection detection.
Query and update time increase by one more logarithmic factor.
\end{proof}

We are now ready to state our general framework for solving MVC problems for geometric objects: the following theorem allows us to convert any efficient static approximation algorithm for the special case when $n$ is roughly less than $2\,\OPT$ to not only an efficient static approximation algorithm for the general case, but an efficient dynamic approximation algorithm at the same time, under the assumption that there
is an efficient dynamic data structure for intersection detection (as noted in the introduction, some form of range searching is unavoidable for MVC).

\begin{theorem}\label{thm:main}
Let $c\ge 1$, $\eps>0$, and $0\le\delta<\gamma<\frac14$.
Let $\CC$ be a class of geometric objects with the following oracles:
\begin{enumerate}
\item[(i)] a dynamic data structure $\DS_0$ for $n$ objects in $\CC$ that
can detect whether there is an object 
intersecting a query object, and supports insertions and deletions of objects, with $O(\tau_0(n))$ query and update time;
\item[(ii)] a static algorithm $\AAA$ for computing a $c$-approximation
of the minimum vertex cover of the intersection graph of $n$ objects in $\CC$ in $T(n)$ time, under the promise that 
$n\le (2+O(\gamma))\,\OPT$, where $\OPT$ is the optimal vertex cover size.
\end{enumerate}

Then there is a dynamic data structure for $n$ objects in $\CC$ that maintains a $c(1+O(\sqrt{\delta/\gamma}+\eps))$-approximation
of the minimum vertex cover of the intersection graph, under insertions and deletions of objects,
in $\OO((1/(\delta^2\eps)) \tau_0(n) + (1/\eps)T(n)/n)$ amortized time, assuming that $T(n)/n$ is monotonically increasing and $T(2n)=O(T(n))$.
\end{theorem}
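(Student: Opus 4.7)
The approach is to combine Lemmas~\ref{lem:Epp}, \ref{lem:mwu}, and~\ref{lem:kernel} with a standard ``be lazy, rebuild periodically'' strategy. Throughout the update sequence, I would maintain (a)~the given data structure $\DS_0$ on the current object set $S$; (b)~the dynamic min-weight intersecting-pair structure $\DS$ on $S$ obtained from $\DS_0$ via Lemma~\ref{lem:Epp}, with initial weight vector $w_v=1$; and (c)~an explicit vertex cover $\XX$ of the current intersection graph. The auxiliary structures $\DS_0$ and $\DS$ are refreshed after every insertion/deletion at amortized cost $\OO(\tau_0(n))$, independent of what $\XX$ looks like.

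The rebuild procedure, triggered roughly once every $\Theta(\eps\,\OPT)$ updates, proceeds as follows. First restore $\DS$ to the baseline $w_v=1$ using the reset bookkeeping from Lemma~\ref{lem:mwu}. Then invoke Lemma~\ref{lem:mwu} on $\DS$ to compute a $(1+O(\delta))$-approximate fractional vertex cover $(x_v)_{v\in V}$ in $\OO((1/\delta^2)\,\OPT\cdot\tau_0(n))$ time, and feed this into Lemma~\ref{lem:kernel} to obtain, in $O(\OPT)$ time, the ``heavy'' set $H$ and a kernel $K\subseteq V$ of size at most $(2+O(\gamma))\,\OPT$. The promise required by oracle~(ii) holds on $G[K]$: since every $x_v$ with $v\in K$ satisfies $x_v\le 1-\alpha$ for $\alpha\ge \tfrac12-O(\gamma)$, restricting $(x_v)$ to $K$ gives a fractional vertex cover of $G[K]$ of size at least $\alpha|K|$, so $|K|\le (2+O(\gamma))\,\OPT(G[K])$. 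I then run $\AAA$ on $G[K]$ to obtain a $c$-approximate cover $\XX_K$ in $T(|K|)=T(O(\OPT))$ time, and set $\XX\leftarrow \XX_K\cup H$; Lemma~\ref{lem:kernel} guarantees $|\XX|\le c(1+O(\sqrt{\delta/\gamma}))\,\OPT$.

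Between rebuilds, updates are handled lazily: every insertion appends the new object to $\XX$ (which covers every new edge incident to it), and every deletion removes the object from $\XX$ if it was present. Over a window of at most $\eps\,\OPT$ updates, both $|\XX|$ and the true $\OPT$ drift by at most $\eps\,\OPT$, so $\XX$ remains a valid $c(1+O(\sqrt{\delta/\gamma}+\eps))$-approximation throughout. Amortizing the rebuild cost $\OO((1/\delta^2)\,\OPT\cdot\tau_0(n))+T(O(\OPT))$ over the $\Theta(\eps\,\OPT)$ updates between rebuilds yields $\OO((1/(\delta^2\eps))\,\tau_0(n)) + O(T(O(\OPT))/(\eps\,\OPT))$ per update; the hypotheses $T(2n)=O(T(n))$ and monotonicity of $T(n)/n$ then give $T(O(\OPT))/\OPT = O(T(n)/n)$, matching the stated bound.

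The main nuisance I expect to grapple with is the rebuild schedule itself, since we know $\OPT$ exactly only at each rebuild, yet must decide when to trigger the next one. The standard remedy is a doubling/halving guard that rebuilds not only after $\lfloor\eps\,\OPT_{\mathrm{last}}\rfloor$ updates but also whenever $|\XX|$ has drifted outside a $(1\pm O(\eps))$ window around $\OPT_{\mathrm{last}}$; tiny-$\OPT$ regimes, where $\eps\,\OPT<1$, are absorbed by rebuilding after every update, since in that case the per-update cost $T(n)/(\eps n)$ already dominates. With this schedule in place, all the bounds above are worst-case valid at every moment of the update sequence.
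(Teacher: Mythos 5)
Your proposal follows the paper's proof of Theorem~\ref{thm:main} essentially step for step: phases of $\Theta(\eps\,\OPT)$ updates; at each phase boundary, reset the weights in the structure of Lemma~\ref{lem:Epp}, run the MWU of Lemma~\ref{lem:mwu}, kernelize via Lemma~\ref{lem:kernel}, run $\AAA$ on $G[K]$ and output $\XX_K\cup H$; between boundaries, add inserted objects to the cover and delete removed ones, incurring $O(\eps\,\OPT)$ additive error; amortize. The only cosmetic difference is the handling of the unknown $\OPT$: you monitor the drift of the maintained cover with a doubling/halving guard, whereas the paper runs a parallel copy of the scheme for each power-of-two guess $b$; both are standard and work.

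The one place your reasoning goes wrong is the verification that the promise of oracle~(ii) holds for the instance $G[K]$. Exhibiting a feasible fractional vertex cover of $G[K]$ of size at least $\alpha|K|$ (the restriction of $(x_v)$ to $K$) only shows that the LP value of $G[K]$ is \emph{at most} $\sum_{v\in K}x_v$; it gives no lower bound on $\OPT(G[K])$, so the inference $|K|\le(2+O(\gamma))\,\OPT(G[K])$ is a non sequitur. In fact, because the LP is solved only approximately, the claim can fail outright: for $G=K_{m,m}$ with parts $A,B$, the assignment $x_v=1$ on $A$, $x_v=\tfrac12$ on $2\delta m$ vertices of $B$, and $x_v=0$ elsewhere is a $(1+O(\delta))$-approximate fractional cover, yet $K$ then consists of $2\delta m$ pairwise non-adjacent vertices, so $\OPT(G[K])=0$. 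To be fair, the paper's proof silently skips this verification too (it uses $|K|\le(2+O(\gamma))\,\OPT(G)$ only to bound the running time $T(|K|)$), and for the concrete instantiations of $\AAA$ in the paper the promise merely converts an $O(\eps|K|)$ additive error into a multiplicative one, which still goes through since $O(\eps|K|)=O(\eps\,\OPT(G))$. But as a justification of the black-box hypothesis, your sentence is not a proof and should either be removed or replaced by an argument that the algorithm's guarantee survives when the promise is interpreted relative to $\OPT(G)$ rather than $\OPT(G[K])$.
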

\begin{proof}
Assume that $b\le\OPT < 2b$ for a given parameter $b$.
Divide into phases with $\eps b$ updates each.
At the beginning of each phase:
\begin{enumerate}
\item Compute a $(1+\delta)$-approximation to the minimum fractional vertex cover
by Lemma~\ref{lem:mwu} in $\OO((1/\delta^2)b\cdot \tau_0(n))$ amortized time using the data structure $\DS$ from Lemma~\ref{lem:Epp}.
(A weight change can be done by a deletion and an insertion.  It is important to note that we don't rebuild $\DS$
at the beginning of each phase; we continue using the same structure $\DS$ in the next phase, after resetting
the modified weights back to~1.)
\item Generate a kernel $K$ with size at most $(2+O(\gamma))\,\OPT$ by Lemma~\ref{lem:kernel} in $O(b)$ time.
\item Compute a $c$-approximation to the minimum vertex cover of the intersection graph of $K$ by the static algorithm $\AAA$ in
$O(T((2+O(\gamma))\,\OPT))=O(T(b))$ time, from which we obtain a $c(1+O(\sqrt{\delta/\gamma}))$-approximation
of the minimum vertex cover of the entire intersection graph.
\end{enumerate}
Since the above is done only at the beginning of each phase, the amortized cost per update is
$\OO(\frac{(1/\delta^2)b\cdot \tau_0(n) + T(b)}{\eps b})$.

During a phase, we handle an object insertion simply by inserting it to the current cover, and we handle an object deletion
simply by removing  it from the current cover.  This incurs an additive error at most $O(\eps b)=O(\eps\,\OPT)$.
We also perform the insertion/deletion of the object in $\DS$, with initial weight 1, in $\OO(\tau_0(n))$ amortized time.

How do we obtain a correct guess $b$?  We build the above data structure
for each $b$ that is a power of 2, and run the algorithm simultaneously for each $b$ (with appropriate cap on the run time based on $b$).
\end{proof}

\subsection{Specific Results}\label{sec:specific}

We now apply our framework to solve the dynamic geometric vertex cover problem
for various specific families of geometric objects.  By Theorem~\ref{thm:main}, it suffices to provide
(i)~a dynamic data structure $\DS_0$ for intersection detection queries,
and (ii)~a static algorithm $\AAA$ for solving the special case of the vertex cover problem
when $n\le (2+O(\gamma))\OPT$.

\paragraph{Disks in $\R^2$.}
Intersection detection queries for disks in $\R^2$ reduce to additively weighted Euclidean nearest neighbor search,
where the weights (different from the weights from MWU) are the radii of the disks.
Kaplan et al.~\cite{KaplanKKKMRS22} adapted Chan's data structure~\cite{Chan10,Chan20a} for dynamic Euclidean nearest neighbor search in $\R^2$ (reducible to dynamic convex hull in $\R^3$)
and obtained a data structure for
additively weighted Euclidean nearest neighbor search  with polylogarithmic amortized update time and query time. Thus,
$\DS_0$ can be implemented in $\tau_0(n)=O(\log^{O(1)}n)$ amortized time for disks in~$\R^2$.

In Appendix~\ref{app:disks}, we give a static $(1+O(\eps))$-approximation algorithm $\AAA$ for MVC for disks, under the promise that $n=O(\OPT)$, with running time $T(n)=\OO(2^{O(1/\eps^2)}n)$.
The algorithm is obtained by modifying a known {\sf PTAS} for MIS for disks~\cite{Chan03}.

Applying Theorem~\ref{thm:main} with $c=1+O(\eps)$, $\gamma=\Theta(1)$, and $\delta=\eps^2$, we obtain our main result for disks:

\begin{corollary}\label{disk:MVC}
There is a dynamic data structure for $n$ disks in $\R^2$
that maintains a $(1+O(\eps))$-approximation of the minimum vertex cover
of the intersection graph,  under insertions and deletions,
in $O(2^{O(1/\eps^2)}\log^{O(1)}n)$ amortized time.
(In particular, there is a static $(1+O(\eps))$-approximation algorithm running in 
$O(2^{O(1/\eps^2)}n\log^{O(1)}n)$ time.)
\end{corollary}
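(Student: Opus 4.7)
The plan is to derive Corollary \ref{disk:MVC} as a direct instantiation of the general framework in Theorem \ref{thm:main} for disks in $\R^2$. Concretely, I would verify the two oracles (i) and (ii) that the theorem demands, and then set the free parameters $c,\gamma,\delta$ (and the $\eps$ of the theorem) so that both the approximation ratio and the amortized update time collapse to the stated bounds.

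For oracle (i), I would cite the dynamic additively weighted Euclidean nearest neighbor structure of Kaplan et al.~(building on Chan's dynamic $\R^3$ convex hull). Intersection detection for disks is precisely an additively weighted nearest neighbor query, where the radii play the role of additive weights; this yields $\tau_0(n) = \log^{O(1)} n$ amortized. For oracle (ii), I would construct a static $(1+O(\eps))$-approximation for MVC on disks with running time $T(n) = \OO(2^{O(1/\eps^2)}\,n)$ under the promise $n \le (2+O(\gamma))\OPT$. The key observation is that MIS and MVC partition $V$ into complementary sets, so $|\mathrm{MVC}| = n - |\mathrm{MIS}|$; if a $(1+\eps)$-approximate independent set $I$ is computed, then $V\setminus I$ is a vertex cover of size at most $|\mathrm{MVC}| + \eps|\mathrm{MIS}|$, and the promise $n = O(\OPT)$ forces $|\mathrm{MIS}|=O(|\mathrm{MVC}|)$, upgrading this to a $(1+O(\eps))$-approximation of MVC. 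The PTAS for MIS itself would be a near-linear-time implementation of Chan's separator/shifted-quadtree scheme for fat objects, adapted to disks in $\R^2$, where each leaf subproblem has bounded size and is solved in $2^{O(1/\eps^2)}$ time by brute force or dynamic programming; the appendix reference \ref{app:disks} presumably works out these details.

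With both oracles in hand, I would apply Theorem \ref{thm:main} with $c = 1+O(\eps)$, $\gamma = \Theta(1)$, $\delta = \eps^2$, and $\eps$ of the theorem set to $\eps$. Since $\sqrt{\delta/\gamma} = O(\eps)$, the approximation ratio simplifies to $c(1+O(\eps)) = 1+O(\eps)$. The amortized time bound of the theorem becomes
\[
\OO\!\left(\tfrac{1}{\delta^2\eps}\,\tau_0(n) \;+\; \tfrac{1}{\eps}\,\tfrac{T(n)}{n}\right)
\;=\; \OO\!\left(\tfrac{1}{\eps^{5}}\log^{O(1)} n \;+\; \tfrac{1}{\eps}\cdot 2^{O(1/\eps^2)}\right)
\;=\; O\!\left(2^{O(1/\eps^2)}\log^{O(1)} n\right),
\]
since both the $1/\eps^5$ factor and the $1/\eps$ factor are absorbed into $2^{O(1/\eps^2)}$. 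The static claim follows by processing $n$ insertions starting from an empty structure.

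The main obstacle will be the construction of oracle (ii): the reduction from MVC to MIS under the $n=O(\OPT)$ promise is easy, but showing that Chan's PTAS for MIS on disks can be implemented in time essentially linear in $n$ with \emph{single}-exponential dependence $2^{O(1/\eps^2)}$ on $\eps$ requires a careful hierarchical decomposition (shifted quadtrees or balanced box decomposition tailored to disks), together with a dynamic program whose leaf-table size and branching are both controlled. Once this static component is in place, the rest of the corollary is a mechanical parameter substitution into Theorem \ref{thm:main}.
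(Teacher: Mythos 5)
Your proposal follows the paper's proof essentially verbatim: the same oracle (i) via Kaplan et al.'s dynamic additively weighted nearest-neighbor structure, the same oracle (ii) built by complementing an approximate MIS under the promise $n=O(\OPT)$, and the identical parameter choices $c=1+O(\eps)$, $\gamma=\Theta(1)$, $\delta=\eps^2$ in Theorem~\ref{thm:main}. One correction to how you state oracle (ii): a genuinely \emph{multiplicative} $(1+\eps)$-approximation for MIS on disks cannot run in $\OO(2^{O(1/\eps^2)}n)$ time (the static problem has an $n^{\Omega(1/\eps)}$ lower bound under ETH, as the paper notes); what the separator recursion of Appendix~\ref{app:disks} actually delivers is an \emph{additive} $O(\eps n)$ error, via the recurrence $E(n)=O(n/\sqrt{b})$ with $b=1/\eps^2$, and this additive guarantee is all your complementation argument needs, since the promise converts $O(\eps n)$ into $O(\eps\,\OPT)$ for the vertex cover.
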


\paragraph{Rectangles in $\R^2$.}
For rectangles in $\R^2$, dynamic intersection detection requires $\tau_0(n)=O(\log^{O(1)}n)$ query and update time,
by standard orthogonal range searching techniques (range trees)~\cite{preparata2012computational,AgarwalE99}.

(Note: There is an alternative approach that bypasses Eppstein's technique and directly solves 
the $\DS$ data structure problem:  We dynamically maintain a \emph{biclique cover},
which can be done in polylogarithmic time for rectangles~\cite{Chan06}.  It is then easy to maintain the minimum weight
of each biclique, by maintaining the minimum weight of each of the two sides of the biclique with priority queues.)

In Appendix~\ref{app:rect}, we give 
a static $(\frac32 + O(\eps))$-approximation algorithm for MVC for rectangles, 
%under the promise that $n\le (2+O(\eps))\,\OPT$, 
with running time $T(n)=\OO(2^{O(1/\eps^2)}n)$.  The algorithm is obtained by
modifying the method by Bar-Yehuda, Hermelin, and Rawitz~\cite{bar2011minimum}, and combining
with our efficient kernelization method.

Applying Theorem~\ref{thm:main} %a second time 
with $c=\frac32+O(\eps)$, $\gamma=\Theta(1)$, and $\delta=\eps^2$, 
we obtain our main result for rectangles:

\begin{corollary}\label{cor:rect}
There is a dynamic data structure for $n$ axis-aligned rectangles in $\R^2$
that maintains a $(\frac32+O(\eps))$-approximation of the minimum vertex cover
of the intersection graph, under insertions and deletions,
in $O(2^{O(1/\eps^2)}\log^{O(1)}n)$ amortized time.
(In particular, there is a static $(\frac32+O(\eps))$-approximation algorithm running in 
$O(2^{O(1/\eps^2)}n\log^{O(1)}n)$ time.)
\end{corollary}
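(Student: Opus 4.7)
The plan is to obtain Corollary~\ref{cor:rect} as an immediate application of the general framework in Theorem~\ref{thm:main}, with parameters $c=\frac32+O(\eps)$, $\gamma=\Theta(1)$, and $\delta=\eps^2$. Thus the task reduces to supplying the two oracles required by the theorem: a dynamic intersection-detection data structure $\DS_0$ for rectangles, and a static $c$-approximation algorithm $\AAA$ for MVC on the intersection graph of $n$ rectangles under the promise that $n\le (2+O(\gamma))\,\OPT$.

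For oracle~(i), I would use standard multi-level orthogonal range searching. Intersection of two axis-aligned rectangles can be expressed as the conjunction of four 1-dimensional interval-overlap conditions, each of which reduces to a pair of coordinate inequalities between rectangle corners. Plugging this into a dynamic range tree (or segment tree with fractional cascading, as in Agarwal--Erickson~\cite{AgarwalE99}) yields $\tau_0(n)=O(\log^{O(1)}n)$ amortized query and update time. Alternatively, as the paper notes parenthetically, one can bypass Eppstein's reduction and directly maintain a dynamic biclique cover for the rectangle intersection graph~\cite{Chan06} together with a priority queue on each side of each biclique, but the range-tree route is simpler to cite.

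For oracle~(ii), the strategy is to modify the Bar-Yehuda--Hermelin--Rawitz $(\frac32+O(\eps))$-approximation algorithm~\cite{bar2011minimum} so that its expensive LP-based kernelization is replaced by the MWU-based approximate kernelization developed earlier in this section (Lemmas~\ref{lem:mwu} and~\ref{lem:kernel}), which operates in time nearly linear in $\OPT$ when a dynamic intersection-detection oracle is available. Under the promise $n=O(\OPT)$, the kernel itself has size $O(\OPT)=O(n)$, so after kernelization the remaining local-search / shifted-grid component of Bar-Yehuda et al., which costs $2^{O(1/\eps^2)}$ per local piece, yields $T(n)=\OO(2^{O(1/\eps^2)}n)$. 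The full details are deferred to Appendix~\ref{app:rect}; I would state them there but cite them here.

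Finally I would verify the hypotheses of Theorem~\ref{thm:main}: $T(n)/n=\OO(2^{O(1/\eps^2)})$ is (trivially) monotonically increasing and $T(2n)=O(T(n))$. With $\gamma=\Theta(1)$ and $\delta=\eps^2$, the approximation ratio produced by the theorem is $c\bigl(1+O(\sqrt{\delta/\gamma}+\eps)\bigr)=(\tfrac32+O(\eps))(1+O(\eps))=\tfrac32+O(\eps)$, and the amortized update time is
\[
\OO\!\left(\tfrac{1}{\delta^2\eps}\,\tau_0(n) + \tfrac{1}{\eps}\,T(n)/n\right)
\;=\; O\!\bigl(2^{O(1/\eps^2)}\log^{O(1)} n\bigr),
\]
matching the claim. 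The static bound follows by running the dynamic structure on a single batch of $n$ insertions. The main obstacle I anticipate is the construction in oracle~(ii): ensuring that Bar-Yehuda et al.'s algorithm still attains ratio $\frac32+O(\eps)$ when fed the \emph{approximate} Nemhauser--Trotter kernel produced by Lemma~\ref{lem:kernel} (rather than the exact one), and verifying that the post-kernel stage runs in the claimed $\OO(2^{O(1/\eps^2)}n)$ time on geometric rectangle instances.
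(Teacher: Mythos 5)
Your overall architecture is exactly the paper's: Corollary~\ref{cor:rect} is obtained by plugging into Theorem~\ref{thm:main} with $c=\frac32+O(\eps)$, $\gamma=\Theta(1)$, $\delta=\eps^2$, with oracle~(i) supplied by dynamic orthogonal range searching ($\tau_0(n)=O(\log^{O(1)}n)$), and the parameter bookkeeping at the end (approximation ratio $c(1+O(\sqrt{\delta/\gamma}+\eps))=\frac32+O(\eps)$, update time $\OO((1/(\delta^2\eps))\tau_0(n)+(1/\eps)T(n)/n)$) is carried out correctly. You even reproduce the paper's parenthetical remark about biclique covers.

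The gap is in oracle~(ii), which is the entire non-trivial content of this corollary beyond Theorem~\ref{thm:main}, and your sketch of it is not right. First, an architectural point: the Nemhauser--Trotter-style kernelization is \emph{not} part of the static algorithm $\AAA$ --- it is already performed inside Theorem~\ref{thm:main} (via Lemmas~\ref{lem:mwu} and~\ref{lem:kernel}); $\AAA$ only needs to work under the promise $n\le(2+O(\gamma))\,\OPT$, so "replacing Bar-Yehuda et al.'s LP kernelization by MWU inside $\AAA$" double-counts a step the framework has already taken care of. Second, and more importantly, Bar-Yehuda, Hermelin, and Rawitz's rectangle algorithm has no "local-search / shifted-grid component costing $2^{O(1/\eps^2)}$ per local piece." What the paper actually does in Appendix~\ref{app:rect} is: (a)~remove a maximal collection of vertex-disjoint triangles by a plane sweep maintaining the maximum depth (each triangle contributes at least $2$ of its $3$ vertices to any vertex cover, so adding all removed vertices loses only a factor $\frac32$ on that part); (b)~split the remaining triangle-free (depth-$\le 2$) instance into two subfamilies $S_1,S_2$ with no dominating pairs, computable via a $4$-dimensional maxima computation; (c)~observe that each $S_i$ is a pseudo-disk family of depth $2$ without containments, hence has a \emph{planar} intersection graph, and run Lipton--Tarjan's separator-based {\sf EPTAS} for MIS on planar graphs (this is where $2^{O(1/\eps^2)}$ comes from); (d)~return the smaller of $\XX_1\cup S_2$ and $\XX_2\cup S_1$, which is a $(\frac32+O(\eps))$-approximation precisely because of the promise $n\le(2+O(\eps))\,\OPT$. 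None of these ingredients appears in your sketch, so as written the proposal does not establish the oracle it needs; the reduction you carry out is correct, but the object you are reducing to is left unconstructed.
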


\paragraph{Fat boxes (e.g., hypercubes) in $\R^d$.}

For the case of fat axis-aligned boxes (e.g., hypercubes) in a constant dimension~$d$,
dynamic intersection detection can again be solved with 
$\tau_0(n)=O(\log^{O(1)}n)$ amortized query and update time by orthogonal range searching~\cite{preparata2012computational,AgarwalE99}.
We can design the static algorithm $\AAA$ in exactly the same way as in the case of disks (since the method in Appendix~\ref{app:disks}
%Smith and Wormald's separators 
holds for fat objects in $\R^d$), achieving $T(n)=\OO(2^{O(1/\eps^d)}n)$.

\begin{corollary}\label{cor:fat:MVC}
There is a dynamic data structure for $n$ fat axis-aligned boxes in $\R^d$ for any constant $d$
that maintains a $(1+O(\eps))$-approximation of the minimum vertex cover
of the intersection graph, under insertions and deletions,
in $O(2^{O(1/\eps^d)}\log^{O(1)}n)$ amortized time.
\end{corollary}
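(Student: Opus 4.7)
The plan is to invoke the general framework in \namedref{Thm.}{thm:main} with the class $\CC$ of fat axis-aligned boxes in $\R^d$, so the task reduces to supplying the two oracles $\DS_0$ and $\AAA$ with appropriate time bounds, then choosing parameters to match the stated approximation.

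For oracle (i), I would implement dynamic intersection detection among axis-aligned boxes in $\R^d$ by the standard multi-level orthogonal range searching (range trees, layered with priority search trees for stabbing/intersection predicates) over the $2d$ coordinates of box corners; since $d$ is constant, this supports insertion, deletion, and ``does any input box intersect a query box'' in $\tau_0(n) = O(\log^{O(1)} n)$ amortized time. Fatness plays no role here, so the bound holds for arbitrary boxes.

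For oracle (ii), I would adapt the static $(1+O(\varepsilon))$-approximation for MVC that Appendix \ref{app:disks} develops for disks in $\R^2$ under the promise $n \le (2+O(\gamma))\,\OPT$. The only geometric properties that algorithm uses are $c$-fatness and constant description complexity, applied through a shifted-quadtree/separator-based PTAS in the spirit of Chan's PTAS for MIS on fat objects~\cite{Chan03}. Replacing disks in $\R^2$ by fat axis-aligned boxes in $\R^d$ costs only the dimension-dependent constants that arise when enumerating bounded-size quadtree cells and local configurations; the running time becomes $T(n) = \OO\!\bigl(2^{O(1/\varepsilon^d)}\, n\bigr)$, which satisfies the monotonicity and doubling conditions required by \namedref{Thm.}{thm:main}. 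This step is the main place where one must be careful: one needs to verify that the reduction from MIS-style shifted-quadtree dynamic programming to MVC under the ``$n\le (2+O(\gamma))\,\OPT$'' promise used in Appendix \ref{app:disks} goes through unchanged in $\R^d$, and that the dependence on the dimension shows up only as $2^{O(1/\varepsilon^d)}$ rather than $n^{O(1/\varepsilon^d)}$.

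With these two oracles in hand, I would apply \namedref{Thm.}{thm:main} with $c = 1+O(\varepsilon)$, $\gamma = \Theta(1)$, and $\delta = \varepsilon^2$. The approximation ratio becomes $c(1+O(\sqrt{\delta/\gamma}+\varepsilon)) = (1+O(\varepsilon))(1+O(\varepsilon)) = 1+O(\varepsilon)$, and the amortized update time becomes
\[
\OO\!\left(\tfrac{1}{\delta^2 \varepsilon}\,\tau_0(n) + \tfrac{1}{\varepsilon}\,T(n)/n\right)
\;=\; \OO\!\left(\tfrac{1}{\varepsilon^{5}}\log^{O(1)} n + \tfrac{1}{\varepsilon}\,2^{O(1/\varepsilon^d)}\right)
\;=\; O\!\bigl(2^{O(1/\varepsilon^d)}\log^{O(1)} n\bigr),
\]
which is exactly the bound claimed in the corollary. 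The only non-routine ingredient is the fat-box version of the static algorithm $\AAA$ from Appendix \ref{app:disks}; everything else is a direct plug-in to the framework.
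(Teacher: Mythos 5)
Your proposal is correct and follows essentially the same route as the paper: dynamic intersection detection for boxes via multi-level orthogonal range searching gives $\tau_0(n)=O(\log^{O(1)}n)$, the separator-based static algorithm of Appendix~\ref{app:disks} carries over verbatim to fat objects in $\R^d$ (Lemma~\ref{lem:sep} is already stated in that generality, yielding $T(n)=\OO(2^{O(1/\eps^d)}n)$), and Theorem~\ref{thm:main} is then applied with $c=1+O(\eps)$, $\gamma=\Theta(1)$, $\delta=\eps^2$ exactly as in Corollary~\ref{disk:MVC}. Your parameter arithmetic and the resulting update-time bound match the paper's.
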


We can also obtain results for balls or other types of fat objects in $\R^d$, but because intersection detection data structures
have higher complexity, the update time would be sublinear rather than polylogarithmic.

\paragraph{Bipartite disks in $\R^2$.}

For the case of a bipartite intersection graph between two sets of disks in $\R^2$,
we have $\tau_0(n)=O(\log^{O(1)}n)$ as already noted.
In the bipartite case, the static algorithm $\AAA$ is trivial: we just return the smaller of the two parts
in the bipartition, which yields a vertex cover of size at most $n/2\le (1+O(\eps))\,\OPT$ under the
promise that $n\le (2+O(\eps))\,\OPT$.

Applying Theorem~\ref{thm:main} with $c=1+O(\eps)$, $\gamma=\eps$, and $\delta=\eps^3$, we obtain:

\begin{corollary}\label{MVC:bipartite:disk}
There is a dynamic data structure for two sets of $O(n)$ disks in $\R^2$
that maintains a $(1+O(\eps))$-approximation of the minimum vertex cover
of the bipartite intersection graph, under insertions and deletions,
in $O((1/\eps^7)\log^{O(1)}n)$ amortized time.  
\end{corollary}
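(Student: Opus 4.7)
The plan is to invoke the general framework of \theoremref{thm:main} with parameters $c = 1+O(\eps)$, $\gamma = \eps$, and $\delta = \eps^{3}$, supplying the two required oracles specialized to the bipartite disk setting in $\R^{2}$.

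For oracle~(i), dynamic intersection detection, I reuse the structure already established for the monochromatic disk case: I maintain one copy of Kaplan et al.'s additively-weighted Euclidean nearest-neighbor data structure per color class. An intersection query between a disk $q$ of one color and the disks of the opposite color reduces to an additively-weighted nearest-neighbor query, where the weights are the disk radii. This yields $\tau_{0}(n) = O(\log^{O(1)}n)$ amortized time per query and update, with no qualitative change from the monochromatic case.

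For oracle~(ii), the bipartite structure trivializes the static approximation. Given an input of $n = |A|+|B|$ disks partitioned into the two color classes $A$ and $B$ together with the promise $n \le (2+O(\gamma))\,\OPT$, the algorithm $\AAA$ returns whichever of $A$ and $B$ has fewer disks. This is a valid vertex cover of the bipartite intersection graph, and its size is at most $n/2 \le (1+O(\gamma))\,\OPT = (1+O(\eps))\,\OPT$. So we may take $c = 1+O(\eps)$ and $T(n) = O(n)$.

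Plugging these into \theoremref{thm:main}, and noting that $\sqrt{\delta/\gamma} = \sqrt{\eps^{2}} = \eps$, the approximation ratio becomes
\[
c\bigl(1 + O\bigl(\sqrt{\delta/\gamma}+\eps\bigr)\bigr) = (1+O(\eps))\bigl(1 + O(\eps)\bigr) = 1+O(\eps),
\]
and the amortized update time becomes
\[
\OO\!\left(\frac{\tau_{0}(n)}{\delta^{2}\eps} + \frac{T(n)/n}{\eps}\right) = O\!\left(\frac{\log^{O(1)}n}{\eps^{7}} + \frac{1}{\eps}\right) = O\!\left(\frac{\log^{O(1)}n}{\eps^{7}}\right),
\]
using $T(n)/n = O(1)$ and $1/(\delta^{2}\eps) = 1/\eps^{7}$. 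No real obstacle arises: the heavy lifting (MWU-based LP approximation, Nemhauser--Trotter-style kernelization, and periodic rebuilding) is absorbed into \theoremref{thm:main}, and the only design decision—how to exploit bipartiteness inside the kernel—is resolved by the one-line smaller-side algorithm.
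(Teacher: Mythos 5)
Your proposal is correct and follows exactly the paper's own route: the same invocation of Theorem~\ref{thm:main} with $c=1+O(\eps)$, $\gamma=\eps$, $\delta=\eps^3$, the same Kaplan et al.\ intersection-detection structure giving $\tau_0(n)=O(\log^{O(1)}n)$, and the same trivial ``return the smaller side'' static algorithm under the promise $n\le(2+O(\gamma))\,\OPT$. The parameter arithmetic yielding the $1/\eps^7$ factor and the $1+O(\eps)$ ratio matches the paper precisely.
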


\paragraph{Bipartite boxes in $\R^d$.}

For the case of a bipartite intersection graph between two sets of (not necessarily fat) boxes in $\R^d$,
we have $\tau_0(n)=O(\log^{O(1)}n)$  by orthogonal range searching.
As already noted, in bipartite cases, the static algorithm $\AAA$ is trivial.

\begin{corollary}\label{MVC:bipartite:box}
There is a dynamic data structure for two sets of $O(n)$ axis-aligned boxes in $\R^d$ for any constant $d$
that maintains a $(1+O(\eps))$-approximation of the minimum vertex cover
of the bipartite intersection graph, under insertions and deletions,
in $O((1/\eps^7)\log^{O(1)}n)$ amortized time.  
\end{corollary}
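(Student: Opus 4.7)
The plan is to apply Theorem~\ref{thm:main} directly, in essentially the same manner as the proof of Corollary~\ref{MVC:bipartite:disk}, but with boxes replacing disks. Thus, the two tasks are (i)~to supply the dynamic intersection-detection oracle $\DS_0$ for axis-aligned boxes in $\R^d$, and (ii)~to supply the static MVC oracle $\AAA$ for the bipartite setting under the promise that $n\le (2+O(\gamma))\,\OPT$.

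For oracle (i), a box $[a_1,b_1]\times\cdots\times[a_d,b_d]$ intersects a stored box iff, in each coordinate, the two one-dimensional intervals overlap. This is the classical rectangle-rectangle intersection detection problem, reducible to orthogonal range searching in $2d$ dimensions via range trees; standard techniques give an intersection-detection data structure with polylogarithmic amortized query and update time, i.e.\ $\tau_0(n)=O(\log^{O(1)}n)$. Since we are in the bipartite setting, I would just keep one such data structure on each colour class and answer queries across colours; this is still polylogarithmic per operation. Theorem~\ref{thm:main} is stated in graph terms, but since the MWU algorithm in Lemma~\ref{lem:mwu} only accesses edges through min-weight intersecting pair queries, it applies verbatim to the bipartite intersection graph provided that $\DS_0$ returns a bichromatic intersecting object (which is what the per-colour structure does).

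For oracle (ii), the bipartite structure trivialises $\AAA$: if the input is partitioned into $A\cup B$, then the smaller of $A,B$ is a vertex cover of the bipartite intersection graph of size at most $n/2$. Under the promise $n\le (2+O(\gamma))\,\OPT$, this gives a $(1+O(\gamma))$-approximation in $T(n)=O(n)$ time (essentially $O(1)$ if one only needs to output which side is smaller). Taking $\gamma=\eps$ yields $c=1+O(\eps)$.

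Finally, I would plug these parameters into Theorem~\ref{thm:main} with $\delta=\eps^3$. The approximation factor becomes
\[
c\bigl(1+O(\sqrt{\delta/\gamma}+\eps)\bigr)=(1+O(\eps))(1+O(\sqrt{\eps^2}+\eps))=1+O(\eps),
\]
and the amortized update time becomes
\[
\OO\!\Bigl(\tfrac{1}{\delta^2\eps}\,\tau_0(n)+\tfrac{1}{\eps}\tfrac{T(n)}{n}\Bigr)
=O\!\bigl((1/\eps^{7})\log^{O(1)}n\bigr),
\]
as claimed. There is really no substantive obstacle here: once the bipartite trick makes $\AAA$ vacuous and orthogonal range trees provide $\DS_0$, the corollary is bookkeeping on top of Theorem~\ref{thm:main}. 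The only point that deserves a sentence of care is verifying that the MWU/kernelisation framework does respect bipartite MVC when $\DS_0$ only reports bichromatic intersecting pairs, which follows immediately because every edge of a bipartite graph is bichromatic.
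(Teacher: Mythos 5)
Your proposal is correct and follows essentially the same route as the paper: orthogonal range searching supplies $\DS_0$ with $\tau_0(n)=O(\log^{O(1)}n)$, the trivial ``return the smaller side'' algorithm supplies $\AAA$ under the promise, and Theorem~\ref{thm:main} with $c=1+O(\eps)$, $\gamma=\eps$, $\delta=\eps^3$ gives the stated bounds. Your extra sentence checking that the MWU framework only needs bichromatic intersecting-pair queries in the bipartite setting is a reasonable point of care that the paper leaves implicit.
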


We did not write out the number of logarithmic factors in our results, as we have not attempted to optimize them, but
it is upper-bounded by 3 plus the number of logarithmic factors 
in $\tau_0(n)$.

\section{Bipartite Maximum-Cardinality Matching (MCM)}

The approach in the previous section finds a $(1+O(\eps))$-approximation of the MVC
of bipartite geometric intersection graphs, and so it allows us to approximate the size of the MCM in such bipartite graphs.  However, it does not compute a matching.  In this section,
we give a different approach to maintain a $(1+O(\eps))$-approximation of the MCM in bipartite intersection graphs.
The first thought that comes to mind is to compute a kernel, as we have done for MVC, but for MCM,
known approaches seem to yield only a kernel of $O(\OPT^2)$ size (e.g., see~\cite{GuptaP13}).  Instead, we will
bypass kernels and construct an approximate MCM directly.

\subsection{Approximate Bipartite MCM via Modified Hopcroft--Karp}

It is well known that Hopcroft and Karp's $O(m\sqrt{n})$-time algorithm for exact MCM in bipartite graphs~\cite{HopcroftK73}
can be modified to give a $(1+\eps)$-approximation algorithm that runs in near-linear time, simply
by terminating early after $O(1/\eps)$ iterations (for example, see the introduction in~\cite{DuanP10}).
% (e.g., see the introduction of Duan and Pettie~\cite{DuanPettie},
%or earlier, Cohen~\cite{Cohen}, or in the context of dynamic data structures, Peleg and Solomon~\cite{PelegSolomon}). 
We describe a way to reimplement the algorithm in sublinear time
when $\OPT$ is small by using appropriate data structures, which correspond to dynamic intersection detection in
the case of geometric intersection graphs.  
Note that earlier work by Efrat, Itai, and Katz~\cite{efrat2001geometry} has already combined Hopcroft and Karp's algorithm with geometric data structures to obtain static exact algorithm~\cite{efrat2001geometry} for
maximum matching in bipartite geometric intersection graphs (see also Har-Peled and Yang's paper~\cite{Har-PeledY22} on static approximation algorithms).
However, to achieve bounds sensitive to $\OPT$, our algorithm will work differently (in particular, it will be DFS-based instead of BFS-based).

\newcommand{\maximalaugpaths}{\textsc{maximal-aug-paths}}
\newcommand{\visit}{\textsc{visit}}
\newcommand{\extend}{\textsc{extend}}

\begin{lemma}\label{lem:mcm}
We are given an unweighted bipartite graph $G=(V,E)$.  Suppose there is a data structure $\DS_0$ for storing a subset $\XX\subseteq V$ of vertices, initially with $\XX=\emptyset$, that supports the following two operations in $\tau_0$ time: given a vertex $u\in V$, find a neighbor of $u$ that is in $\XX$ (if exists), and insert/delete a vertex to/from~$\XX$.

Given a data structure $\DS_0$ that currently has $\XX=V$, and given a maximal matching $M_0$,
we can compute a $(1+O(\eps))$-approximation to the maximum-cardinality matching in  $\OO((1/\eps^2)\,\OPT\cdot \tau_0)$ time.  Here, $\OPT$ denotes the maximum matching size.
\end{lemma}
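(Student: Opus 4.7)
The plan is to implement an approximate version of the classical Hopcroft--Karp bipartite matching algorithm, using $\DS_0$ to carry out each augmenting phase in sublinear time. Recall that Hopcroft--Karp proceeds in phases, each finding a maximal vertex-disjoint family of shortest augmenting paths and augmenting $M$ along them. A standard symmetric-difference argument shows that running $T=\lceil 1/\eps\rceil$ phases with path-length cap $\ell_k=2k+1$ in phase $k$ ensures every remaining augmenting path has length $>2T+1$, so $|M|\ge (1-O(\eps))|M^*|$. The output is thus a $(1+O(\eps))$-approximation, and it suffices to realize each phase in $\OO(\OPT\cdot\tau_0)$ time.

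The subroutine \maximalaugpaths$(\ell)$ is implemented as follows. Treat $\XX$ as the set of ``available'' vertices, initially $V$ and maintained inside $\DS_0$. For each currently unmatched left vertex $u$ still in $\XX$, invoke a depth-limited DFS \extend$(u,0)$. At a left vertex $v$ on even depth $2i<\ell$, repeatedly query $\DS_0$ for a neighbor $w\in\XX$ of $v$: if $w$ is unmatched, return success and read the augmenting path off the recursion stack; otherwise \visit{} removes $w$ and its unique matching partner $u'$ from $\XX$, and \extend$(u',2i+2)$ is called recursively. On backtrack, continue probing $\DS_0$; when $v$'s probes return nothing, \visit{} removes $v$ itself. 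On success, augment $M$ along the returned path. Because visited vertices leave $\XX$ the returned paths are automatically vertex-disjoint, and because each left vertex either lands on a found path or has had its $\XX$-neighborhood fully exhausted, the family of found paths is maximal among those of length $\le\ell$, which is all that the Hopcroft--Karp analysis requires.

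The main obstacle is bounding the number of $\DS_0$ operations per phase by $O(\OPT)$. Because $M_0$ is maximal and we only augment, $|M|\ge|M_0|\ge\OPT/2$ throughout, and every unmatched left vertex has only matched right neighbors. Hence every right vertex touched by \extend (other than the unmatched endpoint of a successful path) is matched and drags in its unique matching left partner through the alternating-edge rule; each matched pair is \visit{}-ed at most once per phase, so matched \visit{} calls total at most $2|M|=O(\OPT)$, and visits to unmatched right endpoints total at most the number of augmenting paths found, which is also $O(\OPT)$. Likewise each $\DS_0$ neighbor probe either produces a fresh \visit{} (charged above) or is the terminal probe at a left vertex, charged once per visited left vertex. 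The delicate point is iterating over unmatched left \emph{sources} without incurring the naive $\Theta(|V|-|M|)$ cost: I plan to maintain an auxiliary linked list of unmatched left vertices incident to at least one currently matched right vertex, built in $O(\OPT\cdot\tau_0)$ time by probing $\DS_0$ from each matched right vertex and updated lazily as augmentations shift which vertices are unmatched, since unmatched left vertices with no matched neighbor cannot start an augmenting path and may be safely skipped. Multiplying the per-phase bound $O(\OPT)$ by $T=O(1/\eps)$ phases and by $\tau_0$ per $\DS_0$ operation yields the claimed $\OO((1/\eps^2)\,\OPT\cdot\tau_0)$ total.
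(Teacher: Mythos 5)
Your high-level plan (approximate Hopcroft--Karp, $O(1/\eps)$ phases, each phase realized by a depth-limited DFS driven by $\DS_0$) matches the paper's, but the step you yourself flag as ``the delicate point'' --- enumerating the DFS sources --- is where the argument breaks. Rooting the search at exposed vertices cannot be salvaged by your proposed list. The set of unmatched left vertices adjacent to some matched right vertex can have size $\Theta(n)\gg\OPT$ (e.g., one matched vertex with $n/2$ exposed neighbors), so you cannot afford to build that list; and the list you actually construct (one exposed neighbor per matched right vertex, obtained by a single probe) does not guarantee maximality. Concretely, take matched edges $wp$ and $w'p'$, exposed vertices $a,b$ on the left and $z,z'$ on the right, and non-matching edges $aw$, $aw'$, $bw$, $pz$, $p'z'$. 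Both probes can nominate $a$, so your source list is $\{a\}$; if the DFS from $a$ happens to find $aw'p'z'$ first, it consumes $a$ and terminates, leaving the disjoint length-$3$ augmenting path $bwpz$ undiscovered. The phase's collection is then not maximal, the Hopcroft--Karp invariant ``no augmenting path of length $\le 2\ell+1$ remains'' fails, and (by replicating the gadget) the final ratio can be $4/3$ regardless of $\eps$. The paper avoids this by reversing the roles: the outer loop is over the \emph{matched} vertices $u_1\in V_M$, and for each $u_1$ a single $\DS_0$ probe fetches \emph{one} exposed neighbor $v_0$. This suffices because whether $v_0u_1\cdots$ extends to an augmenting walk of length $2\ell+1$ depends only on the suffix from $u_1$ onward (bipartiteness guarantees $v_0\ne u_{\ell+1}$), so a failed $u_1$ can be discarded no matter which $v_0$ was returned, and only $O(|M|)$ source probes are ever made.

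There is a second gap in your accounting. You remove each visited matched vertex from the single set $\XX$ and claim ``each matched pair is \visit-ed at most once per phase.'' Without Hopcroft--Karp's preliminary BFS levelling (which the paper deliberately avoids, since BFS from the exposed side is too expensive), depth-independent marking is unsound: a matched vertex reached at recursion depth $i$ may fail to extend only because too little length budget remains, yet the same vertex reached at a smaller depth could complete an augmenting path; globally deleting it destroys maximality again. The paper handles this with $\ell$ per-layer copies $\XX_1,\ldots,\XX_\ell$ of $V_M$, deleting a vertex only from the layer at which it failed; this is exactly why each phase costs $O(\ell|M|\cdot\tau_0)$ rather than $O(|M|\cdot\tau_0)$, and why the total is $\OO((1/\eps^2)\,\OPT\cdot\tau_0)$ --- the fact that your count comes out a factor $1/\eps$ better than the lemma's bound is a symptom of this undercounting.
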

\begin{proof}
The algorithm proceeds iteratively.  We maintain a matching $M$.
At the beginning of the $\ell$-th iteration, we know that the current matching $M$ does not
have augmenting paths of length $\le 2\ell-1$.
We find a maximal collection $\Gamma$ of vertex-disjoint augmenting paths
of length $2\ell+1$.  We then augment $M$ along the paths in $\Gamma$.
As shown by Hopcroft and Karp~\cite{HopcroftK73}, the new matching $M$ will then not have augmenting paths of length $\le 2\ell+1$.

As shown by Hopcroft and Karp~\cite{HopcroftK73}, there are at least $\OPT-|M|$ vertex-disjoint augmenting paths, and so
$|M|\ge (\ell+1)(\OPT-|M|)$, i.e., $\OPT\le (1+\frac{1}{\ell+1})|M|$.  Thus, once $\ell$ reaches $\Theta(1/\eps)$,
we may terminate the algorithm.  Initially, we can set $M=M_0$ before the first iteration.

It suffices to describe how to find a maximal collection $\Gamma$ of vertex-disjoint augmenting paths
of length $2\ell+1$ in the $\ell$-th iteration, under the assumption that there are no shorter augmenting paths.
Hopcroft and Karp originally proposed a BFS approach, starting at the ``exposed'' vertices not covered by the current matching $M$.  Unfortunately, this approach does not work in our setting:
because we want the running time to be near $\OPT$, the searches need to start at vertices of $M$.
We end up adopting a DFS approach, but the vertices of $M$ need to be duplicated $\ell$ times in $\ell$ ``layers'' (this increases the running time by a factor of $\ell$, but fortunately, $\ell$ is small in our setting). 

Let $V_M$ be the $2|M|$ vertices of the current matching $M$.  As is well known, $\OPT\le 2|M|$.
A walk $v_0u_1v_1\cdots u_\ell v_\ell u_{\ell+1}$ in $G$ is an \emph{augmenting walk} of length $2\ell+1$
if $v_0\not\in V_M$, $v_0u_1\not\in M$, $u_1v_1\in M$, \ldots, $v_\ell u_{\ell+1}\not\in M$, and $u_{\ell+1}\not\in V_M$.
In such an augmenting walk of length $\ell$, we automatically have $v_0\neq u_{\ell+1}$ (because $G$ is bipartite)
and the walk must automatically be a simple path (because otherwise we could short-cut and obtain an augmenting path of length
$\le 2\ell-1$).
In the procedure $\extend(v_0u_1v_1\cdots u_i,\ell)$ below, the input is a walk $v_0u_1v_1\cdots u_i$ 
with $v_0u_1\not\in M$, $u_1v_1\in M$, \ldots, $v_{i-1} u_i\not\in M$, and the output is true if it is possible
to extend it to an augmenting walk $v_0u_1v_1\cdots u_\ell v_\ell u_{\ell+1}$  of length $2\ell+1$ that
is vertex-disjoint from the augmenting walks generated so far.

\begin{quote}
\begin{tabbing}
9.\ \ \ \= for \= for \= for \=\kill
$\maximalaugpaths(\ell)$:\\[.25ex]
1.\> let $\XX=V\setminus V_M$ and $\XX_1=\cdots=\XX_\ell=V_M$\\
2.\> for each $u_1\in \XX_1$ do\\
3.\>\> let $v_0$ be a neighbor of $u_1$ with $v_0\in \XX$\\
4.\>\> if $v_0$ does not exist then delete $u_1$ from $\XX_1$\\
5.\>\> else $\extend(v_0u_1,\ell)$\\[2ex]
$\extend(v_0u_1v_1\cdots u_i,\ell)$:\\[.25ex]
1.\> let $v_i$ be the partner of $u_i$ in $M$\\
2.\> if $i=\ell$ then\\
3.\>\> let $u_{\ell+1}$ be a neighbor of $v_\ell$ with $u_{\ell+1}\in \XX$\\
4.\>\> if $u_{\ell+1}$ does not exist then delete $u_\ell$ from $\XX_\ell$ and return false\\
5.\>\> output the augmenting path $v_0u_1v_1\cdots u_\ell v_\ell u_{\ell+1}$\\
6.\>\> delete $v_0$ and $u_{\ell+1}$ from $\XX$, and $u_1,\ldots,u_\ell$ from all of $\XX_1,\ldots,\XX_\ell$, and return true\\
7.\> for each neighbor $u_{i+1}$ of $v_i$ with $u_{i+1}\in \XX_{i+1}\setminus\{u_i\}$ do\\
8.\>\> if $\extend(v_0u_1v_1\cdots u_{i+1},\ell)=$ true then return true\\
9.\> delete $u_i$ from $\XX_i$ and return false
\end{tabbing}
\end{quote}

Note that if it is not possible to extend the walk $v_0u_1v_1\cdots u_i$ to an augmenting walk of length $2\ell+1$,
then it is not possible to extend any other walk $v_0'u_1'v_1'\cdots u_i'$ of the same length with $u_i'=u_i$.
This justifies why we may delete $u_i$ from $\XX_i$ in line~9 of $\extend$ (and similarly why we may delete $u_\ell$ from $\XX_\ell$
in line~4 of $\extend$, and why we may delete $u_1$ from $\XX_1$ in line~4 of $\maximalaugpaths$).

For the running time analysis, note that each vertex may be a candidate for $u_i$ in only one call to $\extend$ per $i$,
because we delete $u_i$ from $\XX_i$, either in line~9 if false is returned, or in line~6  if true is returned.
Thus, the number of calls to $\extend$ is at most $O(\ell|M|)$.

We use the given data structure $\DS_0$ to maintain $\XX$.  This allows us to do line~3 of $\maximalaugpaths$.
Line~1 of $\maximalaugpaths$ requires $O(|M|)$ initial deletions from $\XX$.  At the end, we 
reset $\XX$ to $V$ by performing $O(|M|)$ insertions of the deleted elements.

We also maintain $\XX_1,\ldots,\XX_\ell$ in $\ell$ new instances of the data structure $\DS_0$.
This allows us to do lines 3 and 7 of $\extend$.  Line~1 of $\maximalaugpaths$ requires $O(\ell|M|)$ initial
insertions to $\XX_1,\ldots,\XX_\ell$.
We conclude that $\maximalaugpaths$ takes $O(\ell |M|\cdot\tau_0)$ time.
Hence, the overall running time of all $\ell=\Theta(1/\eps)$ iterations is $O((1/\eps^2)|M|\cdot\tau_0)$.
\IGNORE{
\begin{quote}
\begin{tabbing}
9.\ \ \= for \= for \= for \=\kill
$\maximalaugpaths(\ell)$:\\[.25ex]
1.\> let $S=V\setminus V_M$ and $S_1=\cdots=S_\ell=V_M$\\
2.\> for each $u_1\in S_1$ do\\
3.\>\> let $v_0$ be a neighbor of $u_1$ with $v_0\in S$\\
4.\>\> if $v_0$ does not exist then delete $u_1$ from $S_1$\\
5.\>\> else if $\visit(u_1,1)=$ true then output $v_0u_1$, delete $v_0$ from $S$, and return true\\[2ex]
$\visit(u_i,i)$:\\[.25ex]
1.\> let $v_i$ be the partner of $u_i$ in $M$\\
2.\> if $i=\ell$ then\\
3.\>\> let $u_{\ell+1}$ be a neighbor of $v_\ell$ with $u_{\ell+1}\in S$\\
4.\>\> if $u_{\ell+1}$ does not exist then delete $u_\ell$ from $S_\ell$ and return false\\
5.\>\> output $u_\ell v_\ell$ and $v_\ell u_{\ell+1}$, delete $u_{\ell+1}$ from $S$ and $u_\ell$ from $S_1,\ldots,S_\ell$, and return true\\
6.\> for each neighbor $u_{i+1}$ of $v_i$ with $u_{i+1}\in S_{i+1}\setminus\{u_i\}$ do\\
7.\>\> if $\visit(u_{i+1},i+1)=$ true then output $u_i v_i$ and $v_i u_{i+1}$, delete $u_i$ from $S_1,\ldots,S_\ell$, and return true\\
8.\> delete $u_i$ from $S_i$ and return false
\end{tabbing}
\end{quote}
}
\end{proof}

\subsection{Dynamic Geometric Bipartite MCM} 

To solve dynamic bipartite MCM for geometric intersection graphs, we again use a standard idea for dynamization: be lazy, and periodically recompute the solution only after every $\eps\,\OPT$ updates
(since the optimal size changes by at most 1 per update).  The following theorem states our general framework:

\begin{theorem}\label{thm:main:mcm}
Let $\CC$ be a class of geometric objects, where there is
a dynamic data structure $\DS_0$ for $n$ objects in $\CC$ that
can find an object 
intersecting a query object (if exists), and supports insertions and deletions of objects, with $O(\tau_0(n))$ query and update time.

Then there is a dynamic data structure for two sets of $O(n)$ objects in $\CC$ that maintains a $(1+O(\eps))$-approximation
of the maximum-cardinality matching in the bipartite intersection graph, under insertions and deletions of objects,
in $\OO((1/\eps^3) \tau_0(n))$ amortized time.
\end{theorem}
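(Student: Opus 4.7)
The plan is to combine Lemma~\ref{lem:mcm} with the standard periodic-rebuilding paradigm already used in the proof of Theorem~\ref{thm:main}. We guess $b$ with $b \le \OPT < 2b$ by running the construction in parallel for each power of two $b \le n$, capping the work per copy so that only the correct copy dominates the amortized bound. Inside a fixed copy we partition the update sequence into phases of $\eps b$ operations each, and at every phase boundary we recompute a near-optimum matching $M^*$ from scratch by invoking Lemma~\ref{lem:mcm}.

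Throughout all phases we maintain three pieces of information in $O(\tau_0(n))$ amortized time per object insertion/deletion: (i) an instance $\DS_V$ of $\DS_0$ whose stored set equals the current vertex set $V$, updated directly on every input update; (ii) a maximal matching $M$ of the current bipartite intersection graph; and (iii) a second instance $\DS_U$ of $\DS_0$ whose stored set is the current set of vertices unmatched by $M$. Maximality of $M$ is preserved incrementally: on insertion of a vertex $v$, query $\DS_U$ for a neighbor of $v$ on the opposite side of the bipartition and, if found, add that edge to $M$ (adjusting $\DS_U$); on deletion of a matched vertex $v$ with partner $u$, remove $uv$ from $M$ and then query $\DS_U$ for a new unmatched neighbor of $u$, matching them if one exists. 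Each input update triggers $O(1)$ operations on instances of $\DS_0$, for $O(\tau_0(n))$ amortized cost.

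At the beginning of each phase we invoke Lemma~\ref{lem:mcm} with the current $M$ in the role of $M_0$ and with $\DS_V$ in the role of the input data structure; the lemma's algorithm temporarily modifies $\XX$ and also spins up $\ell = \Theta(1/\eps)$ fresh auxiliary instances of $\DS_0$ initialized with $V_M$, but its proof explicitly restores $\XX$ back to $V$ before it returns, so $\DS_V$ can be reused across phases untouched. By Lemma~\ref{lem:mcm} this phase-start recomputation costs $\OO((1/\eps^2)\,\OPT\cdot\tau_0(n))$ time and returns a matching $M^*$ with $|M^*| \ge (1-O(\eps))\,\OPT$ at the moment the phase begins. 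Spread over $\eps b = \Theta(\eps\,\OPT)$ updates, this contributes $\OO((1/\eps^3)\,\tau_0(n))$ amortized per update, which dominates the per-update maintenance cost. Between phase boundaries we update $M^*$ lazily: object insertions are ignored by $M^*$, and when an object is deleted we simply remove any $M^*$-edge incident to it, so $M^*$ loses at most one edge per update.

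For the approximation guarantee, if $|M^*| \ge (1-c\eps)\,\OPT_0$ at the phase start for some constant $c$, then after $k \le \eps b$ updates both $|M^*|$ and $\OPT$ have changed by at most $k$; since $\OPT$ stays within a constant factor of $b$ throughout the phase, a short calculation yields $|M^*| \ge (1-O(\eps))\,\OPT$ at every intermediate time step. Multiplying by the $O(\log n)$ parallel guesses of $b$ (absorbed into $\OO$) gives the claimed amortized update time. The main technical obstacle is making sure the preconditions of Lemma~\ref{lem:mcm} (namely that $M$ is maximal and that $\DS_V$ stores exactly $V$) are genuine invariants of the outer dynamic loop, and that Lemma~\ref{lem:mcm}'s internal modifications to $\DS_V$ are self-undoing so that a phase-start invocation does not corrupt the maintenance work of subsequent updates; both points follow from the explicit reset step recorded in the proof of Lemma~\ref{lem:mcm}.
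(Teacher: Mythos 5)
Your proposal is correct and follows essentially the same route as the paper's proof: a lazily maintained maximal matching and an unmatched-vertex structure updated in $O(\tau_0(n))$ per operation, phases of $\eps b$ updates with a phase-start invocation of Lemma~\ref{lem:mcm} (reusing the persistent $\DS_0$ instance for $\XX=V$ after its internal reset), lazy edge removals within a phase contributing $O(\eps\,\OPT)$ additive error, and parallel guesses of $b$ over powers of two. No substantive differences.
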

\begin{proof}
First, observe that we can maintain a maximal matching $M_0$ with $O(\tau_0(n))$ update time:
We maintain the subset $S$ of all vertices not in $M_0$ in a data structure $\DS_0$.
When we insert a new object $u$, we match it with an object in $S$ intersecting $u$ (if exists) by querying $\DS_0$.
When we delete an object $u$, we delete $u$ and its partner $v$ in $M_0$, and reinsert $v$.

Assume that $b\le\OPT < 2b$ for a given parameter $b$.
Divide into phases with $\eps b$ updates each.
At the beginning of each phase, compute a $(1+O(\eps))$-approximation of the maximum-cardinality matching
by Lemma~\ref{lem:mcm} in $\OO((1/\eps^2)b\cdot \tau_0(n))$ time.
(It is important to note that we don't rebuild the data structure $\DS$ for $\XX=V$
at the beginning of each phase; we continue using the same structure $\DS$ in the next phase, after resetting
the modified $\XX$ back to $V$.)

Since the above is done only at the beginning of each phase, the amortized cost per update is
$\OO(\frac{(1/\eps^2)b\cdot \tau_0(n)}{\eps b})$.

During a phase, we handle an object insertion simply by doing nothing, and we handle an object deletion
simply by removing its incident edge (if exists) from the current matching.  This incurs an additive error at most $O(\eps b)=O(\eps\,\OPT)$.
We also perform the insertion/deletion of the object in $\DS_0$ for $\XX=V$, in $\OO(\tau_0(n))$ time.

How do we obtain a correct guess $b$?  We build the above data structure
for each $b$ that is a power of 2, and run the algorithm simultaneously for each $b$.
\end{proof}

\subsection{Specific Results}

Recall that for disks in $\R^2$ as well as boxes in $\R^d$,
we have $\tau_0(n)=O(\log^{O(1)}n)$.
(Note that most data structures for intersection detection can be modified to report a witness
object intersecting the query object if the answer is true.)
Thus, we immediately obtain:

\begin{corollary}\label{cor:mcm:disk}
There is a dynamic data structure for two sets of $O(n)$ disks in $\R^2$
that maintains a $(1+O(\eps))$-approximation of the maximum-cardinality matching
in the bipartite intersection graph, under insertions and deletions,
in $O((1/\eps^3)\log^{O(1)}n)$ amortized time.  
\end{corollary}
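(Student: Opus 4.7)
The proof will be very short: it is essentially a direct application of Theorem~\ref{thm:main:mcm}, with the intersection-detection oracle $\DS_0$ specialized to disks in $\R^2$. The plan is to recall the relevant data structure, verify that it satisfies the slightly strengthened requirement (namely, returning a witness object rather than just a yes/no answer), and then invoke the general framework.

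First, I would recall the data structure for dynamic intersection detection on disks in $\R^2$. Two disks $D_1$ and $D_2$ with centers $c_1,c_2$ and radii $r_1,r_2$ intersect iff $\|c_1-c_2\|\le r_1+r_2$, so detecting whether a query disk $(c,r)$ intersects any disk in a dynamic set $S$ reduces to additively weighted nearest neighbor search in $\R^2$: maintain the centers with weights equal to the radii, query the one minimizing $\|c_i-c\|-r_i$, and compare to $r$. As noted earlier in the paper, Kaplan, Mulzer, Roditty, Seiferth, and Sharir~\cite{KaplanKKKMRS22}, building on Chan's dynamic convex-hull framework in $\R^3$~\cite{Chan10,Chan20a}, achieve $O(\log^{O(1)} n)$ amortized update and query time. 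Thus $\tau_0(n) = O(\log^{O(1)} n)$.

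Second, I would confirm that the data structure can return a witness rather than only a boolean. This is standard: the nearest-neighbor query itself returns the minimizing disk, so if intersection occurs we know which disk it is with. The paper already flags this point (\emph{most data structures for intersection detection can be modified to report a witness}), so this step requires no new work.

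Finally, I would plug $\tau_0(n) = O(\log^{O(1)} n)$ into the conclusion of Theorem~\ref{thm:main:mcm}, which gives a dynamic $(1+O(\eps))$-approximation of the maximum-cardinality matching in the bipartite intersection graph of two dynamic sets of disks with amortized update time $\OO((1/\eps^3)\,\tau_0(n)) = O((1/\eps^3)\log^{O(1)} n)$, as claimed. Since the corollary is a direct instantiation of an already-established framework, I do not anticipate any obstacle beyond citing the right oracle; the only mild care needed is keeping the two disjoint color classes separate inside $\DS_0$ so that the matching framework only finds intersections across the bipartition, which is handled trivially by maintaining two independent instances, one per color class, and directing each query at the instance of the opposite color.
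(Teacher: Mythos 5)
Your proposal is correct and matches the paper's proof, which likewise obtains the corollary by instantiating Theorem~\ref{thm:main:mcm} with the polylogarithmic dynamic additively weighted nearest-neighbor structure of Kaplan et al.\ for disk intersection detection (witness-reporting included). The remark about maintaining separate instances per color class is consistent with what the bipartite framework implicitly requires.
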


\begin{corollary}\label{cor:mcm:box}
There is a dynamic data structure for two sets of $O(n)$ axis-aligned boxes in $\R^d$ for any constant $d$
that maintains a $(1+O(\eps))$-approximation of the maximum-cardinality matching
in the bipartite intersection graph, under insertions and deletions,
in $O((1/\eps^3)\log^{O(1)}n)$ amortized time.  
\end{corollary}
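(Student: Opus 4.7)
The plan is to derive Corollary~\ref{cor:mcm:box} as a direct instantiation of the general framework provided by Theorem~\ref{thm:main:mcm}. According to that theorem, we only need to verify that the class $\CC$ of axis-aligned boxes in $\R^d$ admits a dynamic data structure $\DS_0$ that, for a maintained set of $n$ boxes, can locate an input box intersecting a given query box, and supports insertions and deletions, all in $O(\tau_0(n)) = O(\log^{O(1)} n)$ (amortized) time. Once this oracle is in place, Theorem~\ref{thm:main:mcm} immediately yields a dynamic $(1+O(\eps))$-approximation algorithm for MCM in the bipartite intersection graph with $\OO((1/\eps^3)\tau_0(n)) = O((1/\eps^3)\log^{O(1)} n)$ amortized update time, which is precisely the claimed bound.

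Next I would justify the existence of such a $\DS_0$ for axis-aligned boxes in any constant dimension $d$. Two boxes $B$ and $B'$ intersect if and only if their projections onto each of the $d$ axes overlap; this is a conjunction of $2d$ coordinate comparisons between endpoints. Consequently, intersection detection between boxes can be cast as an orthogonal range searching problem in $2d$ dimensions, for which standard multi-level range trees (as used earlier in the paper for MVC, e.g., in the proof of Corollaries~\ref{cor:rect} and~\ref{MVC:bipartite:box}) provide $O(\log^{O(1)} n)$ amortized query and update time. Moreover, as noted in the remark just before the statement, such data structures can be augmented to report a witness (an input box actually intersecting the query) whenever the answer is affirmative, which is exactly what the oracle in Theorem~\ref{thm:main:mcm} requires.

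With this $\tau_0(n) = O(\log^{O(1)} n)$ plugged into Theorem~\ref{thm:main:mcm}, the conclusion follows immediately. There is no real obstacle here: the heavy lifting was done in Lemma~\ref{lem:mcm} (a modified Hopcroft--Karp that runs in time sensitive to $\OPT$ using only the intersection-witness oracle) and in Theorem~\ref{thm:main:mcm} (the periodic-rebuilding wrapper that converts the sublinear static procedure into a dynamic one). The only ingredient specific to boxes is the availability of polylogarithmic orthogonal range searching, which is entirely standard. In particular, the same argument transparently handles the bichromatic setting: we simply maintain one instance of $\DS_0$ per color class, so that when the search from a red box needs a blue neighbor it queries the blue structure (and vice versa), with no change in the asymptotic bound.
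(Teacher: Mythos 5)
Your proposal is correct and follows essentially the same route as the paper: the paper likewise obtains Corollary~\ref{cor:mcm:box} by plugging $\tau_0(n)=O(\log^{O(1)}n)$ for dynamic box-intersection detection (via standard orthogonal range searching, with witness reporting) into the framework of Theorem~\ref{thm:main:mcm}. Your extra remarks on reducing box intersection to $2d$-dimensional orthogonal range searching and on keeping one structure per color class are just explicit versions of what the paper leaves implicit.
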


\newcommand{\ZZ}{\mathcal{Z}}
\newcommand{\HH}{\mathcal{H}}

\section{General MCM}\label{sec:gen:match}

In this section, we adapt our results for bipartite MCM to
general MCM\@.  We use a simple idea
by Lotker, Patt-Shamir, and Pettie~\cite{LotkerPP15} to reduce the problem for general graphs (in the
approximate setting) to finding maximal collections of short augmenting paths in \emph{bipartite} graphs, which we already know how to solve.
(See also Har-Peled and Yang's paper~\cite{Har-PeledY22}, which applied Lotker et al.'s idea to obtain static approximation algorithms for  geometric intersection graphs.)  
The reduction has exponential dependence in the path length $\ell$ (which is fine since $\ell$ is small), and is originally randomized.
We reinterpret their idea in terms of \emph{color-coding}~\cite{AlonYZ95},
which allows for efficient derandomization, and also simplifies the analysis (bypassing Chernoff-bound calculations).  With this reinterpretation, it is easy to show that the idea carries over to the
dynamic setting.

We begin with a lemma, which is a consequence of the standard color-coding technique:

\begin{lemma}\label{lem:color:cod}
For any $n$ and $\ell$, there exists a collection $\ZZ^{(n,\ell)}$ of $O(2^{O(\ell)}\log n)$ subsets
of $[n] := \{1,\ldots,n\}$ such that
for any two disjoint sets $A,B\subseteq [n]$ of total size at most $\ell$, we have 
$A\subseteq Z$ and $B\subseteq [n]\setminus Z$ for some $Z\in\ZZ$.
Furthermore, $\ZZ^{(n,\ell)}$ can be constructed in $O(2^{O(\ell)}n\log n)$ time.
\end{lemma}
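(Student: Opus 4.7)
The plan is to derive the lemma from the standard color-coding / perfect hash family construction of Alon, Yuster, and Zwick~\cite{AlonYZ95}. Recall that an $(n,\ell)$-perfect hash family is a collection $\mathcal{F}$ of functions $f\colon [n]\to[\ell]$ such that for every $S\subseteq[n]$ with $|S|\le \ell$, some $f\in\mathcal{F}$ is injective on $S$. Explicit constructions (via $\ell$-wise independent hashing together with error-correcting codes, or via splitters of Naor--Schulman--Srinivasan) give such families of size $2^{O(\ell)}\log n$ that can be computed in $O(2^{O(\ell)}n\log n)$ time.

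Given such a family $\mathcal{F}$, I will take
\[ \ZZ^{(n,\ell)} \;=\; \bigl\{\,f^{-1}(T):\, f\in\mathcal{F},\ T\subseteq[\ell]\,\bigr\}. \]
Its size is $|\mathcal{F}|\cdot 2^\ell = 2^{O(\ell)}\log n$, and every member is computed in $O(n)$ time, which matches the claimed construction time.

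To verify the separation property, let $A,B\subseteq[n]$ be disjoint with $|A\cup B|\le\ell$. Pick $f\in\mathcal{F}$ that is injective on $A\cup B$, and set $T:=f(A)\subseteq[\ell]$. Then every element of $A$ lies in $Z:=f^{-1}(T)$, while for any $b\in B$ the injectivity of $f$ on $A\cup B$ and the disjointness of $A,B$ force $f(b)\notin f(A)=T$, hence $b\in[n]\setminus Z$. Thus $A\subseteq Z$ and $B\subseteq[n]\setminus Z$, as required.

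The only technical point that needs to be pinned down is the construction time of the underlying perfect hash family; I will either quote the Alon--Yuster--Zwick construction directly (where the preprocessing cost is $2^{O(\ell)}n\log n$) or, if a tighter off-the-shelf citation is preferable, use the splitter-based construction of Naor--Schulman--Srinivasan. Beyond that, the proof reduces to the two bookkeeping observations above: counting $|\mathcal{F}|\cdot 2^\ell$ sets, and the one-line argument that $T=f(A)$ separates $A$ from $B$ whenever $f$ is injective on $A\cup B$. No further obstacles are expected.
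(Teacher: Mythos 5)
Your proposal is correct and is essentially identical to the paper's own proof: both take an $(n,\ell)$-perfect hash family of size $2^{O(\ell)}\log n$ from Alon--Yuster--Zwick and form all preimages $h^{-1}(I)$ for $I\subseteq[\ell]$, separating $A$ from $B$ via a function injective on $A\cup B$ with $I=h(A)$. No differences worth noting.
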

\begin{proof}
As shown by Alon, Yuster, and Zwick~\cite{AlonYZ95}, there exists a collection $\HH^{(n,\ell)}$
of $O(2^{O(\ell)}\log n)$ mappings $h:[n]\rightarrow [\ell]$,
such that for any set $\XX\subseteq [n]$ of size at most $\ell$,
the elements in $\{h(v): v\in \XX\}$ are all distinct.  Furthermore, $\HH^{(n,\ell)}$ can be constructed
in $O(2^{O(\ell)}n\log n)$ time.  (This is related to the notion of ``$\ell$-perfect hash family''.)

For each $h\in\HH^{(n,\ell)}$ and each subset $I\subseteq [\ell]$,
add the subset $Z_{h,I}=\{v\in[n]: h(v)\in I\}$ to $\ZZ^{(n,\ell)}$.
The number of subsets is $|\HH^{(n,\ell)}|\cdot 2^\ell \le 2^{O(\ell)}\log n$.
For any two disjoint sets $A,B\subseteq[n]$ of total size at most~$\ell$,
let $h\in\HH^{(n,\ell)}$ be such that the elements in $\{h(a): a\in A\}$ and $\{h(b): b\in B\}$
are all distinct,
and let $I=\{h(a): a\in A\}$; then $A\subseteq Z_{h,I}$
and $B\subseteq [n]\setminus Z_{h,I}$.
\end{proof}

We now present a non-bipartite analog of Lemma~\ref{lem:mcm}:

\begin{lemma}\label{lem:mcm:gen}
We are given an unweighted graph $G=(V,E)$, with $V=[n]$.  Let $\ZZ^{(n,1/\eps)}$ be as in Lemma~\ref{lem:color:cod}.
 Suppose there is a data structure $\DS_0^*$ for storing a subset $\XX\subseteq V$ of vertices, initially with $\XX=\emptyset$, that supports the following two operations in $\tau_0$ time: given a vertex $u\in V$ and $Z\in\ZZ^{(n,1/\eps)}$, find a neighbor of $u$ that is in $\XX\cap Z$
(if exists) and a neighbor of $u$ that is in $\XX\setminus Z$ (if exists); and insert/delete a vertex to/from~$\XX$.

Given a data structure $\DS_0$ that currently has $\XX=V$, and given a maximal matching $M_0$,
we can compute a $(1+O(\eps))$-approximation to the maximum-cardinality matching in  $\OO(2^{O(1/\eps)}\,\OPT\cdot \tau_0)$ time.  Here, $\OPT$ denotes the maximum matching size.
\end{lemma}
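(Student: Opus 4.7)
The plan is to lift the bipartite algorithm of Lemma~\ref{lem:mcm} to the non-bipartite setting by using Lemma~\ref{lem:color:cod} to reduce the search for short augmenting paths in $G$ to a collection of bipartite searches, one per color class $Z \in \ZZ^{(n, O(1/\eps))}$. The outer scaffold is unchanged from Hopcroft--Karp and Lemma~\ref{lem:mcm}: initialize $M := M_0$, and for $\ell = 1, 2, \ldots, \Theta(1/\eps)$ find a maximal vertex-disjoint collection $\Gamma_\ell$ of augmenting paths of length $2\ell+1$ with respect to the current $M$, then augment along $\Gamma_\ell$. Hopcroft and Karp's classical analysis (exactly as invoked in the proof of Lemma~\ref{lem:mcm}) then yields $\OPT \le (1+O(\eps))|M|$ at termination.

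The key observation is that any specific augmenting path $v_0 u_1 v_1 \cdots u_\ell v_\ell u_{\ell+1}$ of length $2\ell+1$ uses $2\ell+2 = O(1/\eps)$ vertices, partitioned into the disjoint sets $A = \{u_1,\ldots,u_{\ell+1}\}$ and $B = \{v_0,v_1,\ldots,v_\ell\}$. By Lemma~\ref{lem:color:cod} there is some $Z \in \ZZ^{(n,O(1/\eps))}$ with $A \subseteq Z$ and $B \subseteq V \setminus Z$. So inside the $\ell$-th outer iteration I would loop over every $Z \in \ZZ^{(n,O(1/\eps))}$ and search only for \emph{color-respecting} augmenting paths, i.e.\ those with all $u_i \in Z$ and all $v_i \in V \setminus Z$. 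This search is essentially the bipartite algorithm of Lemma~\ref{lem:mcm} applied to the bipartition $(Z, V\setminus Z)$: we iterate over matched vertices $u_1 \in Z \cap V_M$ whose $M$-partner lies in $V \setminus Z$, probe for a neighbor $v_0 \in (V \setminus Z) \setminus V_M$, and then run the recursive \extend\ procedure, always asking for neighbors in $\XX \cap Z$ at ``$u$-positions'' and in $\XX \setminus Z$ at ``$v$-positions''. These are precisely the primitives that $\DS_0^*$ supports in $\tau_0$ time. Because the path is color-alternating and of bounded length, every augmenting walk encountered is automatically a simple path, so the simplicity argument of Lemma~\ref{lem:mcm} carries over verbatim.

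To guarantee that the final $\Gamma_\ell$ is maximal in $G$ (and not merely maximal inside each $G_Z$), I would process the colors sequentially, permanently reserving any vertex used by a previously output augmenting path so that subsequent $Z$-searches must avoid it; operationally this is just a deletion from the $\XX$-structures at the moment a path is output. Maximality of the union then follows from the color-coding property: if some length-$(2\ell+1)$ augmenting path $P$ in $G$ remained vertex-disjoint from $\Gamma_\ell$ at the end, then the color $Z$ that separates $P$'s two groups was processed at a moment when $P$ was still entirely available, and the color-respecting search (which by Lemma~\ref{lem:mcm}'s analysis returns a maximal color-respecting collection among free vertices) would have consumed at least one vertex of $P$, contradicting $P$'s disjointness from the final $\Gamma_\ell$.

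For the running time, each $Z$-search costs $\OO((1/\eps^2)\,|M|\cdot \tau_0) = \OO((1/\eps^2)\,\OPT\cdot \tau_0)$ by the analysis of Lemma~\ref{lem:mcm}, and we repeat this for $|\ZZ^{(n,O(1/\eps))}| = 2^{O(1/\eps)}\log n$ colors inside each of $\Theta(1/\eps)$ outer iterations, giving the claimed $\OO(2^{O(1/\eps)}\,\OPT\cdot \tau_0)$ bound. The main obstacle I anticipate is the maximality argument in the previous paragraph: one must be careful about how vertex reservations interact with the $\XX_i$-bookkeeping used by \extend, and in particular verify that deleting globally reserved vertices from $\XX, \XX_1,\ldots,\XX_\ell$ at the start of each $Z$-search does not disturb the amortized accounting (since each vertex is still removed from each $\XX_i$ at most once over the entire execution of an iteration, this is fine). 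Apart from this, the construction is a direct color-coded reduction to Lemma~\ref{lem:mcm}.
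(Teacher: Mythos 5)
Your proposal is correct and follows essentially the same route as the paper: the same color-coded reduction of the length-$(2\ell+1)$ augmenting-path search to bipartite searches over the subgraphs induced by each $Z\in\ZZ^{(n,1/\eps)}$, the same sequential processing with vertices of already-selected paths removed to ensure maximality in $G$, and the same running-time accounting. The paper phrases the per-color search as running $\maximalaugpaths$ on the bipartite subgraph $G_Z=\{uv\in E: u\in Z,\ v\notin Z\}$, which is exactly your ``color-respecting'' search.
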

\begin{proof}
As in the proof of Lemma~\ref{lem:mcm}, we iteratively maintain a current matching $M$,
and it suffices to describe how to find a maximal collection $\Gamma$ of vertex-disjoint augmenting paths
of length $2\ell+1$ in the $\ell$-th iteration, under the assumption that there are no augmenting paths of length
$\le 2\ell-1$.
However, the presence of odd-length cycles complicates the computation of $\Gamma$.

Initialize $\Gamma=\emptyset$.  We loop through each $Z\in \ZZ^{(n,1/\eps)}$ one by one and do the following.  
Let $G_Z$ be the subgraph of $G$ with edges $\{uv\in E: u\in Z,\, v\not\in Z\}$.
We find a maximal collection of vertex-disjoint augmenting paths of length $2\ell+1$ in $G_Z$ that
are vertex-disjoint from paths already selected to be in $\Gamma$; we then add this new collection to $\Gamma$.
Since $G_Z$ is bipartite, this step can be done using the $\maximalaugpaths$ procedure from the
proof of Lemma~\ref{lem:mcm}.  Since we are working with $G_Z$ instead of $G$, when we find neighbors of a given vertex,
they are now restricted to be in $Z$ if the given vertex is in $[n]\setminus Z$, or vice versa; the data structure
$\DS_0^*$ allows for such queries.
The only other change is that when we initialize $\XX,\XX_1,\ldots,\XX_\ell$, we should remove vertices that have 
appeared in paths already selected to be in~$\Gamma$.  

Assume $2\ell+2 \le 1/\eps$.
We claim that after looping through all $Z\in \ZZ^{(n,1/\eps)}$, the resulting collection $\Gamma$ of vertex-disjoint
length-$(2\ell+1)$ augmenting paths is maximal in $G$.  To see this, consider any length-$(2\ell+1)$ augmenting path
$v_0u_1v_1\cdots u_\ell v_\ell u_{\ell+1}$  in $G$.  There exists $Z\in \ZZ^{(n,1/\eps)}$ such that
$u_1,\ldots,u_{\ell+1}\in Z$ and $v_0,\ldots,v_\ell\not\in Z$.
Thus, the path must intersect some path in $\Gamma$ during the iteration when we consider $Z$.
\end{proof}

We can now obtain a non-bipartite analog of Theorem~\ref{thm:main:mcm}:

\begin{theorem}\label{thm:main:mcm:gen}
Let $\CC$ be a class of geometric objects, where there is
a dynamic data structure $\DS_0$ for $n$ objects in $\CC$ that
can find an object 
intersecting a query object (if exists), and supports insertions and deletions of objects, with $O(\tau_0(n))$ query and update time.

Then there is a dynamic data structure for $O(n)$ objects in $\CC$ that maintains a $(1+O(\eps))$-approximation
of the maximum-cardinality matching in the intersection graph, under insertions and deletions of objects,
in $\OO(2^{O(1/\eps)} \tau_0(n))$ amortized time.
\end{theorem}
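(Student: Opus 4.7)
The plan is to mimic the bipartite proof of Theorem~\ref{thm:main:mcm} almost verbatim, replacing Lemma~\ref{lem:mcm} by its non-bipartite counterpart Lemma~\ref{lem:mcm:gen}. The only genuinely new ingredient is building the required data structure $\DS_0^*$ (which must handle color-restricted neighbor queries) on top of the given oracle $\DS_0$.

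First I would construct $\DS_0^*$ as follows. Label the objects by integers in $[N]$, where $N$ is the current power-of-$2$ upper bound on $n$, and precompute the family $\ZZ^{(N,1/\eps)}$ from Lemma~\ref{lem:color:cod}, of size $O(2^{O(1/\eps)}\log n)$. For each $Z\in\ZZ^{(N,1/\eps)}$ maintain two instances of $\DS_0$, one on $\XX\cap Z$ and the other on $\XX\setminus Z$. A query ``find a neighbor of $u$ in $\XX\cap Z$ (or $\XX\setminus Z$)'' is forwarded to a single appropriate instance in $O(\tau_0(n))$ time, while an insertion or deletion of a vertex $v$ in $\XX$ is forwarded to one of the two instances for each $Z$ (according to whether $v\in Z$), for a total update cost of $\OO(2^{O(1/\eps)}\tau_0(n))$, absorbing the $\log n$ factor into $\OO$. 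Standard doubling of $N$ when $n$ grows keeps this valid throughout the update sequence.

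Next I would maintain a maximal matching $M_0$ exactly as in Theorem~\ref{thm:main:mcm}: keep the set $V\setminus V_{M_0}$ in a private instance of $\DS_0$, match every newly inserted object to any intersecting unmatched object via a single query, and upon deletion of a matched object free and reinsert its partner. This costs $O(\tau_0(n))$ amortized per update. With $\DS_0^*$ and $M_0$ in hand, I invoke the periodic rebuilding scheme: for each $b$ equal to a power of $2$, run in parallel an instance under the hypothesis $b\le\OPT<2b$, and divide its updates into phases of $\eps b$ operations each. At the start of each phase, invoke Lemma~\ref{lem:mcm:gen} on the current object set, passing $\DS_0^*$ (with $\XX=V$) and $M_0$, to obtain a $(1+O(\eps))$-approximate matching in $\OO(2^{O(1/\eps)}b\cdot\tau_0(n))$ time; as in the bipartite proof, $\DS_0^*$ is not rebuilt between phases, only the transient modifications made inside Lemma~\ref{lem:mcm:gen} are rolled back. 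Between rebuilds, ignore insertions and, on a deletion, discard the incident matched edge (if any); since $\OPT$ shifts by at most one per update, this introduces only $O(\eps b)=O(\eps\,\OPT)$ additive error. The amortized cost is $\OO((2^{O(1/\eps)}b\cdot\tau_0(n))/(\eps b))+\OO(2^{O(1/\eps)}\tau_0(n))=\OO(2^{O(1/\eps)}\tau_0(n))$, where the $1/\eps$ from the rebuilding is absorbed into $2^{O(1/\eps)}$.

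I do not expect a serious obstacle: the construction of $\DS_0^*$ via the color-coding family is the only really new step, and all subsequent arguments (periodic rebuilding, additive error accounting during a phase, and simultaneous guessing of $b$) port over directly from the bipartite case. The one small caveat worth checking carefully is that $\DS_0^*$ must be persistent across phases (so we continue to amortize updates to the $2^{O(1/\eps)}\log n$ copies of $\DS_0$ against true insertions and deletions of objects, not against the transient modifications inside Lemma~\ref{lem:mcm:gen}), exactly as in the proof of Theorem~\ref{thm:main:mcm}.
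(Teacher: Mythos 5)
Your proposal is correct and matches the paper's proof of Theorem~\ref{thm:main:mcm:gen}: both reduce to the bipartite framework of Theorem~\ref{thm:main:mcm} with Lemma~\ref{lem:mcm:gen} in place of Lemma~\ref{lem:mcm}, realize $\DS_0^*$ via $O(2^{O(1/\eps)}\log n)$ parallel copies of $\DS_0$ for $\XX\cap Z$ and $\XX\setminus Z$ over $Z\in\ZZ^{(n,1/\eps)}$, and handle labels by doubling/rebuilding. Your extra remarks on persistence of $\DS_0^*$ across phases and on the maximal-matching maintenance are exactly the points the paper inherits from the bipartite case.
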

\begin{proof}
This is similar to the proof of Theorem~\ref{thm:main:mcm}, with Lemma~\ref{lem:mcm:gen} replacing
Lemma~\ref{lem:mcm}.  The only difference is that to support the data structure $\DS_0^*$, we maintain
$O(2^{O(1/\eps)}\log n)$ parallel instances of the data structure $\DS_0$ for $\XX\cap Z$ and
$\XX\setminus Z$, for every $Z\in \ZZ^{(n,1/\eps)}$.
This increases the update time by a factor of $O(2^{O(1/\eps)}\log n)$.

We have assumed that the input objects are labeled by integers in $[n]$.  When a new object is inserted, we can just
assign it the next available label in $[n]$.  When the number of objects exceeds $n$, we double $n$ and rebuild
the entire data structure from scratch.  Similarly, when the number of objects is below $n/4$, we halve $n$ and rebuild.
\end{proof}

\begin{corollary}\label{cor:mcm:disk2}
There is a dynamic data structure for $n$ disks in $\R^2$
that maintains a $(1+O(\eps))$-approximation of the maximum-cardinality matching
in the intersection graph, under insertions and deletions,
in $O(2^{O(1/\eps)}\log^{O(1)}n)$ amortized time.  
\end{corollary}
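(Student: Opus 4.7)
The plan is to obtain Corollary~\ref{cor:mcm:disk2} as an immediate application of Theorem~\ref{thm:main:mcm:gen}, exactly paralleling how Corollary~\ref{cor:mcm:disk} followed from Theorem~\ref{thm:main:mcm} in the bipartite case. All I need to supply is a dynamic data structure $\DS_0$ for disks in $\R^2$ that, given a query disk, can \emph{find} (not merely detect) an input disk intersecting the query, under insertions and deletions, with $\tau_0(n) = O(\log^{O(1)} n)$ amortized time. Then plugging this bound into Theorem~\ref{thm:main:mcm:gen} yields the claimed $O(2^{O(1/\eps)} \log^{O(1)} n)$ amortized update time.

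To provide $\DS_0$, I would recall the ingredient already used in Section~\ref{sec:specific}: intersection detection between disks in $\R^2$ reduces to additively weighted Euclidean nearest neighbor search (using the radii as additive weights), and Kaplan et al.~\cite{KaplanKKKMRS22}, building on Chan's dynamic convex hull in $\R^3$, achieve polylogarithmic amortized update and query time for this problem. The nearest neighbor query returns the identity of the nearest input disk center (not merely a yes/no answer), so it simultaneously delivers a witness intersecting disk whenever one exists. This matches the mild strengthening explicitly flagged in the paragraph preceding Corollary~\ref{cor:mcm:disk}, where it is noted that most intersection detection structures can be modified to report a witness.

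Having established $\tau_0(n) = O(\log^{O(1)} n)$ for disks in $\R^2$, I would invoke Theorem~\ref{thm:main:mcm:gen} directly: it produces a dynamic data structure maintaining a $(1+O(\eps))$-approximation of the maximum-cardinality matching in the (monochromatic) intersection graph in $\OO(2^{O(1/\eps)} \tau_0(n)) = O(2^{O(1/\eps)} \log^{O(1)} n)$ amortized time per update, which is exactly the bound claimed in the corollary.

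The main thing to be careful about is the witness requirement, since Theorem~\ref{thm:main:mcm:gen} actually needs the slightly stronger data structure $\DS_0^\ast$ of Lemma~\ref{lem:mcm:gen}: given a query vertex $u$ and a color class $Z \in \ZZ^{(n,1/\eps)}$, one must be able to find a neighbor of $u$ lying in $\XX \cap Z$ or in $\XX \setminus Z$. As noted in the proof of Theorem~\ref{thm:main:mcm:gen}, this is handled by maintaining $O(2^{O(1/\eps)} \log n)$ parallel copies of $\DS_0$, one tracking $\XX \cap Z$ and one tracking $\XX \setminus Z$ for each $Z$; each insertion/deletion touches all copies, and the $2^{O(1/\eps)} \log n$ blow-up is already absorbed into the $2^{O(1/\eps)} \log^{O(1)} n$ bound. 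Since this bookkeeping is entirely handled inside Theorem~\ref{thm:main:mcm:gen}, no further work is needed here, and the corollary follows immediately.
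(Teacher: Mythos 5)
Your proposal is correct and matches the paper's own derivation: the corollary is obtained by plugging the polylogarithmic bound $\tau_0(n)=O(\log^{O(1)}n)$ for dynamic disk intersection detection (with witness reporting, via Kaplan et al.'s additively weighted nearest neighbor structure) directly into Theorem~\ref{thm:main:mcm:gen}. Your added care about the witness requirement and the $\DS_0^*$ bookkeeping is exactly what the paper relies on implicitly, so nothing further is needed.
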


\begin{corollary}\label{cor:mcm:box2}
There is a dynamic data structure for $n$ axis-aligned boxes in $\R^d$ for any constant $d$
that maintains a $(1+O(\eps))$-approximation of the maximum-cardinality matching
in the intersection graph, under insertions and deletions,
in $O(2^{O(1/\eps)}\log^{O(1)}n)$ amortized time.  
\end{corollary}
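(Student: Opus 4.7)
The plan is to invoke the general framework of Theorem~\ref{thm:main:mcm:gen} and simply exhibit the data-structural oracle $\DS_0$ it requires for the class of axis-aligned boxes in $\R^d$. Specifically, I need a dynamic data structure that stores a set of axis-aligned boxes in $\R^d$ under insertions and deletions and, given a query box $q$, reports one input box intersecting $q$ (or reports that none exists) in polylogarithmic time.

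First I would observe that box--box intersection in $\R^d$ reduces to orthogonal range searching in $\R^{2d}$: a box $[a_1,b_1]\times\cdots\times[a_d,b_d]$ intersects a query box $[\alpha_1,\beta_1]\times\cdots\times[\alpha_d,\beta_d]$ iff $a_k\le \beta_k$ and $b_k\ge \alpha_k$ for every coordinate $k$, which is a conjunction of $2d$ one-sided inequalities on the $2d$-dimensional point $(a_1,\ldots,a_d,b_1,\ldots,b_d)$ representing the input box. Hence a $2d$-level dynamic range tree (in the sense of \cite{AgarwalE99, preparata2012computational}), augmented at each leaf to report a witness object rather than just detect emptiness, supports queries and updates in $O(\log^{O(1)}n)$ time per operation; here $d$ is a constant, so $\tau_0(n)=O(\log^{O(1)}n)$. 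This matches exactly the oracle used already in the bipartite case (Corollary~\ref{cor:mcm:box}), so no new geometric ingredient is required.

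Next I would apply Theorem~\ref{thm:main:mcm:gen} with this oracle. The theorem converts $\DS_0$ into a dynamic data structure that maintains a $(1+O(\eps))$-approximation of the MCM of the intersection graph with amortized update time $\OO(2^{O(1/\eps)}\,\tau_0(n)) = O(2^{O(1/\eps)}\log^{O(1)}n)$, which is precisely the claimed bound. Absorbing the $\OO$ into $\log^{O(1)}n$ is legitimate since the theorem only hides polylogarithmic factors in $n$.

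I do not anticipate a real obstacle: all work has been done upstream in Lemma~\ref{lem:mcm:gen} (the color-coded non-bipartite-to-bipartite reduction plus the DFS-based modified Hopcroft--Karp in Lemma~\ref{lem:mcm}) and in the lazy rebuilding argument of Theorem~\ref{thm:main:mcm:gen}; the only task specific to boxes is pointing to orthogonal range searching as the witness-returning intersection-detection oracle. The single mild care point is ensuring that the range-tree structure returns an actual witness (not just a Boolean), but this is standard: at each canonical subset one stores a single representative, so the query descends through the $2d$ levels and returns whatever representative is associated with the first nonempty canonical node encountered.
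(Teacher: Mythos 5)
Your proposal is correct and follows exactly the route the paper intends: supply the dynamic witness-returning intersection-detection oracle for boxes via a constant-depth ($2d$-level) dynamic orthogonal range tree with $\tau_0(n)=O(\log^{O(1)}n)$, then invoke Theorem~\ref{thm:main:mcm:gen}. The paper treats this corollary as immediate from that theorem together with the same range-searching observation already made for the bipartite case, so there is nothing to add.
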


\section{Conclusion}

In this paper, we have presented a plethora of results on efficient static and dynamic approximation algorithms for four fundamental geometric optimization problems, all obtained from two simple and general approaches (one for piercing and independent set, and the other for vertex cover and matching).  We hope that our techniques will find many more applications in future work on these and other fundamental geometric optimization problems.

Many interesting open questions remain in this area.  We list some below:
\begin{itemize}
\item Is there an $O(1)$-approximation polynomial-time algorithm for the piercing (MPS) problem for rectangles in $\R^2$ (like MIS for rectangles~\cite{mitchell2022approximating})?
Is there an $O(1)$-approximation polynomial-time algorithm for the weighted version of  independent set (MIS) for rectangles in $\R^2$?
Is there a sublogarithmic-approximation polynomial-time algorithm for (unweighted) MIS for boxes in $\R^3$?
If the answer is yes to any of these questions, our approach could automatically convert such an algorithm into near-linear-time static algorithms and efficient dynamic algorithms.
\item Is there an $O(n^\eps)$-approximation algorithm for MIS for arbitrary line segments or polygons~\cite{FoxP11} with near linear running time?  Our input rounding approach does not seem to work for arbitrary line segments or non-fat polygons.
\item Is there a $(2-\eps)$-approximation algorithm for vertex cover (MVC) for arbitrary line segments or strings~\cite{LokshtanovP00Z24} with near linear running time?
\item Are there efficient dynamic algorithms for the weighted version of geometric MVC similar to our unweighted MVC results?
\item Can we avoid the exponential dependence on $\eps$ in our results (in Section~\ref{sec:gen:match}) on non-bipartite geometric maximum-cardinality matching (MCM)? 
\end{itemize}

%In this paper, we studied two fundamental graph optimization problems, MVC and MCM, in fully dynamic geometric settings. For both of the problems, we developed general frameworks based on the dynamic intersection detection data structures and static approximation algorithms, which are necessary ingredients for designing dynamic algorithms for geometric intersection graphs. These frameworks allowed us to obtain dynamic algorithms for a wide range of geometric intersection graphs, e.g., disks/rectangles in $\mathbb{R}^2$, hypercubes in $\mathbb{R}^d$, etc. Moreover, our results extend to the bipartite versions of the problems as well. 
%In this work, we primarily focused on the unweighted case. Our approach does not readily extend to the weighted setting. Obtaining efficient dynamic algorithms for both MVC and MCM in the weighted setting are interesting open problems. 

% cite new SoCG paper by Lokshtanov et al. on string graphs when it becomes available

%Open problems... 

%weighted? 

%clique?

%%%%%%%%%%%%%%%%%%%%%%%%%%%%%%%%%%%

\old{
\IGNORE{

%****************

 $\{(A_i,B_i)\}_{i=1}^\ell$  is a biclique cover if $E=\bigcup_{i=1}^\ell (A_i\otimes B_i)$.

(for rectangles, there exists biclique cover of near linear size)

\begin{lemma}
Given a biclique cover $\{(A_i,B_i)\}_{i=1}^\ell$ of a graph $G=(V,E)$ with $n=|V|$ and $M=\sum_{i=1}^\ell (|A_i|+|B_i|)$,
we can compute a $(1+\delta)$-approximation to the minimum fractional vertex cover
in $\OO((1/\delta)^2(n+M))$ time.
\end{lemma}
\begin{proof}

%By binary search, it suffices to solve the problem of finding a fractional vertex cover of size at most a given number $k$ (approximately).

Consider the following LP:
\[ \begin{array}{llllll}
\mbox{maximize} & \sum_{v\in V} x_v &&  &&\\
\mbox{s.t.} 
& x_u + y_i &\ge& 1  && \forall i,\ \forall u\in A_i \\
& x_v + z_i &\ge& 1 && \forall i,\ \forall v\in B_i\\
& y_i + z_i &\le& 1 && \forall i\\
& x_v, y_i,z_i &\in& [0,1] && \forall i,\ \forall v\in V\\
\end{array} \]

If $(x_v)_{v\in V}$ is a fractional vertex cover,
we can set $y_i=\min_{v\in B_i} x_v$ and $z_i=1-y_i$ for each $i$,
to get a feasible solution to the above LP\@.
Conversely, if $(x_v)_{v\in V},(y_i)_{i=1}^\ell,(z_i)_{i=1}^\ell$ form a feasible solution to the LP,
then $(x_v)_{v\in V}$ is a fractional vertex cover 
(since for every $(u,v)\in A_i\times B_i$, we have $x_u+x_v\ge (1-y_i) + (1-z_i)\ge 1$).

The above is a mixed packing/covering LP, with $O(n+M)$ nonzeros in the constraint matrix,
and so we can apply a known MWU-based algorithm by Young~\cite{??} to compute  a ``$(1+\delta)$-approximate'' solution
in $\OO((1/\delta)^2(n+M))$ time.  More precisely, if the LP has optimal value at most $k$,
it finds a solution $(x_v)_{v\in V},(y_i)_{i=1}^\ell,(z_i)_{i=1}^\ell$ satisfying
\[ \begin{array}{lllll}
\sum_{v\in V} x_v&\le& k(1+\delta)  &&\\
x_u + y_i &\ge& 1  && \forall i,\ \forall u\in A_i \\
x_v + z_i &\ge& 1 && \forall i,\ \forall v\in B_i\\
y_i + z_i &\le& 1+\delta && \forall i\\
x_v, y_i,z_i &\in& [0,1+\delta] && \forall i,\ \forall v\in V
\end{array} \]
W.l.o.g., we may assume that $x_v, y_i,z_i\in [0,1]$ (by replacing numbers bigger than 1 with 1).
We define a modified solution  $(x_v')_{v\in V},(y_i')_{i=1}^\ell,(z_i')_{i=1}^\ell$,
where $x_v'=\min\{x_v+\delta\cdot 1_{[x_v\ge 1/3]},\, 1\}$,
$y_i'=\max\{y_i - \delta\cdot 1_{[y_i\le 2/3]},\, 0\}$, and
$z_i'=\max\{z_i - \delta\cdot 1_{[z_i\le 2/3]},\, 0\}$ (where $1_{[E]}$ denotes 1 if $E$ is true and 0 otherwise).
It is easy to verify that $x_u'+y_i'\ge 1$ for all $u\in A_i$, and $x_v'+z_i'\ge 1$ for all $v\in B_i$,
and $y_i'+z_i'\le 1$ for all $i$ (assuming $\delta<1/3$).
Furthermore, $\sum_{v\in V} x_v' \le (1+3\delta)\sum_{v\in V} x_v \le (1+O(\delta))k$.
Thus, if there exists a fractional vertex cover of size at most $k$,
we can find a fractional vertex cover of size at most $(1+O(\delta))k$.
\end{proof}

Remark: alternative: minimum fractional vertex cover in a general graph reduces to minimum vertex cover in a bipartite graph;
which by duality reduces to maximum matching in a bipartite graph (at least in the exact case);
which reduces to maximum flow in a 3-layer graph with $O(n+M)$ edges and unit capacities by Feder and Motwani;
can use recent near-linear algorithm, with $n^{o(1)}$ factors.
(conversion from maximum matching to minimum vertex cover also needs BFS in residual graph, and can also
be done using biclique cover...)

(there are simpler approximation algorithms for maximum flow with constant number of layers;
but not clear how to convert approximate maximum matching to approximate minimum vertex cover in bipartite graphs)

}
}

%\end{document}

%\input{vc.tex}

\newpage

%{\small
\bibliographystyle{alphaurl}
\bibliography{main.bib}

\newcommand{\etalchar}[1]{$^{#1}$}
\begin{thebibliography}{dBCvKO08}

\bibitem[ABR24]{AzarmehrBR24}
Amir Azarmehr, Soheil Behnezhad, and Mohammad Roghani.
\newblock Fully dynamic matching: {$(2-\sqrt{2})$}-approximation in polylog
  update time.
\newblock In {\em Proceedings of the 35th Annual ACM-SIAM Symposium on Discrete
  Algorithms (SODA)}, pages 3040--3061, 2024.
\newblock \href {https://doi.org/10.1137/1.9781611977912.109}
  {\path{doi:10.1137/1.9781611977912.109}}.

\bibitem[ACS{\etalchar{+}}22]{AgarwalCSXX22}
Pankaj~K. Agarwal, Hsien{-}Chih Chang, Subhash Suri, Allen Xiao, and Jie Xue.
\newblock Dynamic geometric set cover and hitting set.
\newblock {\em {ACM} Trans. Algorithms}, 18(4):40:1--40:37, 2022.
\newblock \href {https://doi.org/10.1145/3551639} {\path{doi:10.1145/3551639}}.

\bibitem[AE99]{AgarwalE99}
Pankaj~K. Agarwal and Jeff Erickson.
\newblock Geometric range searching and its relatives.
\newblock In {\em Advances in Discrete and Computational Geometry}, pages
  1--56. AMS Press, 1999.

\bibitem[AES10]{aronov2009small}
Boris Aronov, Esther Ezra, and Micha Sharir.
\newblock Small-size $\varepsilon$-nets for axis-parallel rectangles and boxes.
\newblock {\em {SIAM} J. Comput.}, 39(7):3248--3282, 2010.
\newblock \href {https://doi.org/10.1137/090762968}
  {\path{doi:10.1137/090762968}}.

\bibitem[AG60]{AG60}
Edgar Asplund and Branko Gr\"{u}nbaum.
\newblock On a coloring problem.
\newblock {\em Mathematica Scandinavica}, 8(1):181--188, 1960.
\newblock \href {https://doi.org/10.7146/math.scand.a-10607}
  {\path{doi:10.7146/math.scand.a-10607}}.

\bibitem[AHRS24]{AHRS23}
Pankaj~K. Agarwal, Sariel Har{-}Peled, Rahul Raychaudhury, and Stavros Sintos.
\newblock Fast approximation algorithms for piercing boxes by points.
\newblock In {\em Proceedings of the 35th Annual ACM-SIAM Symposium on Discrete
  Algorithms (SODA)}, pages 4892--4908, 2024.
\newblock \href {https://doi.org/10.1137/1.9781611977912.174}
  {\path{doi:10.1137/1.9781611977912.174}}.

\bibitem[AHV05]{AgarwalHV}
Pankaj~K. Agarwal, Sariel Har{-}Peled, and Kasturi~R. Varadarajan.
\newblock Geometric approximation via coresets.
\newblock In {\em Combinatorial and Computational Geometry}, pages 1--30.
  Cambridge University Press, 2005.

\bibitem[AHW19]{AdamaszekHW19}
Anna Adamaszek, Sariel Har{-}Peled, and Andreas Wiese.
\newblock Approximation schemes for independent set and sparse subsets of
  polygons.
\newblock {\em J. {ACM}}, 66(4):29:1--29:40, 2019.
\newblock \href {https://doi.org/10.1145/3326122} {\path{doi:10.1145/3326122}}.

\bibitem[AMN{\etalchar{+}}98]{AryaMNSW98}
Sunil Arya, David~M. Mount, Nathan~S. Netanyahu, Ruth Silverman, and Angela~Y.
  Wu.
\newblock An optimal algorithm for approximate nearest neighbor searching fixed
  dimensions.
\newblock {\em J. {ACM}}, 45(6):891--923, 1998.
\newblock \href {https://doi.org/10.1145/293347.293348}
  {\path{doi:10.1145/293347.293348}}.

\bibitem[AP20]{AgarwalP20}
Pankaj~K. Agarwal and Jiangwei Pan.
\newblock Near-linear algorithms for geometric hitting sets and set covers.
\newblock {\em Discret. Comput. Geom.}, 63(2):460--482, 2020.
\newblock \href {https://doi.org/10.1007/S00454-019-00099-6}
  {\path{doi:10.1007/S00454-019-00099-6}}.

\bibitem[AS00]{AgarwalS00a}
Pankaj~K. Agarwal and Micha Sharir.
\newblock Arrangements and their applications.
\newblock In {\em Handbook of Computational Geometry}, pages 49--119. North
  Holland / Elsevier, 2000.
\newblock \href {https://doi.org/10.1016/B978-044482537-7/50003-6}
  {\path{doi:10.1016/B978-044482537-7/50003-6}}.

\bibitem[AvKS98]{AgarwalKS98}
Pankaj~K. Agarwal, Marc~J. van Kreveld, and Subhash Suri.
\newblock Label placement by maximum independent set in rectangles.
\newblock {\em Comput. Geom.}, 11(3-4):209--218, 1998.
\newblock \href {https://doi.org/10.1016/S0925-7721(98)00028-5}
  {\path{doi:10.1016/S0925-7721(98)00028-5}}.

\bibitem[AYZ95]{AlonYZ95}
Noga Alon, Raphael Yuster, and Uri Zwick.
\newblock Color-coding.
\newblock {\em J. {ACM}}, 42(4):844--856, 1995.
\newblock \href {https://doi.org/10.1145/210332.210337}
  {\path{doi:10.1145/210332.210337}}.

\bibitem[BCIK21]{bhore2020dynamic}
Sujoy Bhore, Jean Cardinal, John Iacono, and Grigorios Koumoutsos.
\newblock Dynamic geometric independent set.
\newblock In {\em Abstracts of 23rd Thailand-Japan Conference on Discrete and
  Computational Geometry, Graphs, and Games (TJDCG)}, 2021.
\newblock \href {http://arxiv.org/abs/2007.08643} {\path{arXiv:2007.08643}}.

\bibitem[BCM23]{bonnet2023maximum}
{\'E}douard Bonnet, Sergio Cabello, and Wolfgang Mulzer.
\newblock Maximum matchings in geometric intersection graphs.
\newblock {\em Discrete \& Computational Geometry}, pages 1--30, 2023.
\newblock \href {https://doi.org/10.1007/S00454-023-00564-3}
  {\path{doi:10.1007/S00454-023-00564-3}}.

\bibitem[BDMR01]{BermanDMR01}
Piotr Berman, Bhaskar DasGupta, S.~Muthukrishnan, and Suneeta Ramaswami.
\newblock Improved approximation algorithms for rectangle tiling and packing.
\newblock In {\em Proceedings of the 12th Annual ACM-SIAM Symposium on Discrete
  Algorithms (SODA)}, pages 427--436, 2001.
\newblock URL: \url{http://dl.acm.org/citation.cfm?id=365411.365496}.

\bibitem[Beh23]{Behnezhad23}
Soheil Behnezhad.
\newblock Dynamic algorithms for maximum matching size.
\newblock In {\em Proceedings of the 34th Annual {ACM-SIAM} Symposium on
  Discrete Algorithms (SODA)}, pages 129--162, 2023.
\newblock \href {https://doi.org/10.1137/1.9781611977554.CH6}
  {\path{doi:10.1137/1.9781611977554.CH6}}.

\bibitem[BG95]{BronnimannG95}
Herv{\'{e}} Br{\"{o}}nnimann and Michael~T. Goodrich.
\newblock Almost optimal set covers in finite {VC}-dimension.
\newblock {\em Discret. Comput. Geom.}, 14(4):463--479, 1995.
\newblock \href {https://doi.org/10.1007/BF02570718}
  {\path{doi:10.1007/BF02570718}}.

\bibitem[BHI18]{BhattacharyaHI18}
Sayan Bhattacharya, Monika Henzinger, and Giuseppe~F. Italiano.
\newblock Deterministic fully dynamic data structures for vertex cover and
  matching.
\newblock {\em {SIAM} J. Comput.}, 47(3):859--887, 2018.
\newblock \href {https://doi.org/10.1137/140998925}
  {\path{doi:10.1137/140998925}}.

\bibitem[BHR11]{bar2011minimum}
Reuven Bar{-}Yehuda, Danny Hermelin, and Dror Rawitz.
\newblock Minimum vertex cover in rectangle graphs.
\newblock {\em Computational Geometry}, 44(6-7):356--364, 2011.
\newblock \href {https://doi.org/10.1016/J.COMGEO.2011.03.002}
  {\path{doi:10.1016/J.COMGEO.2011.03.002}}.

\bibitem[BK19]{BhattacharyaK19}
Sayan Bhattacharya and Janardhan Kulkarni.
\newblock Deterministically maintaining a $(2+\varepsilon)$-approximate minimum
  vertex cover in {$O(1/{\varepsilon}^2)$} amortized update time.
\newblock In {\em Proceedings of the 30th Annual {ACM-SIAM} Symposium on
  Discrete Algorithms (SODA)}, pages 1872--1885, 2019.
\newblock \href {https://doi.org/10.1137/1.9781611975482.113}
  {\path{doi:10.1137/1.9781611975482.113}}.

\bibitem[BKS00]{ben2000obnoxious}
Boaz Ben{-}Moshe, Matthew~J. Katz, and Michael Segal.
\newblock Obnoxious facility location: Complete service with minimal harm.
\newblock {\em International Journal of Computational Geometry \&
  Applications}, 10(06):581--592, 2000.
\newblock \href {https://doi.org/10.1142/S0218195900000322}
  {\path{doi:10.1142/S0218195900000322}}.

\bibitem[BKSW23]{BhattacharyaKSW23}
Sayan Bhattacharya, Peter Kiss, Thatchaphol Saranurak, and David Wajc.
\newblock Dynamic matching with better-than-2 approximation in polylogarithmic
  update time.
\newblock In {\em Proceedings of the 34th Annual {ACM-SIAM} Symposium on
  Discrete Algorithms (SODA)}, pages 100--128, 2023.
\newblock \href {https://doi.org/10.1137/1.9781611977554.CH5}
  {\path{doi:10.1137/1.9781611977554.CH5}}.

\bibitem[BNTW24]{BNTW24}
Sujoy Bhore, Martin N{\"{o}}llenburg, Csaba~D. T{\'{o}}th, and Jules Wulms.
\newblock Fully dynamic maximum independent sets of disks in polylogarithmic
  update time.
\newblock In {\em 40th International Symposium on Computational Geometry,
  SoCG}, volume 293 of {\em LIPIcs}, pages 19:1--19:16, 2024.
\newblock \href {https://doi.org/10.4230/LIPICS.SOCG.2024.19}
  {\path{doi:10.4230/LIPICS.SOCG.2024.19}}.

\bibitem[BS15]{BernsteinS15}
Aaron Bernstein and Cliff Stein.
\newblock Fully dynamic matching in bipartite graphs.
\newblock In {\em Proceedings of the 42nd International Colloquium on Automata,
  Languages, and Programming (ICALP), Part {I}}, volume 9134 of {\em Lecture
  Notes in Computer Science}, pages 167--179. Springer, 2015.
\newblock \href {https://doi.org/10.1007/978-3-662-47672-7\_14}
  {\path{doi:10.1007/978-3-662-47672-7\_14}}.

\bibitem[BS16]{BernsteinS16}
Aaron Bernstein and Cliff Stein.
\newblock Faster fully dynamic matchings with small approximation ratios.
\newblock In {\em Proceedings of the 27th Annual {ACM-SIAM} Symposium on
  Discrete Algorithms (SODA)}, pages 692--711, 2016.
\newblock \href {https://doi.org/10.1137/1.9781611974331.CH50}
  {\path{doi:10.1137/1.9781611974331.CH50}}.

\bibitem[CC09]{ChalermsookC09}
Parinya Chalermsook and Julia Chuzhoy.
\newblock Maximum independent set of rectangles.
\newblock In {\em Proceedings of the 20th Annual {ACM-SIAM} Symposium on
  Discrete Algorithms (SODA)}, pages 892--901. {SIAM}, 2009.
\newblock \href {https://doi.org/10.1137/1.9781611973068.97}
  {\path{doi:10.1137/1.9781611973068.97}}.

\bibitem[CCCK24]{CabelloCCK24}
Sergio Cabello, Siu{-}Wing Cheng, Otfried Cheong, and Christian Knauer.
\newblock Geometric matching and bottleneck problems.
\newblock In {\em Proceedings of the 40th International Symposium on
  Computational Geometry (SoCG)}, volume 293 of {\em LIPIcs}, pages
  31:1--31:15, 2024.
\newblock \href {https://doi.org/10.4230/LIPICS.SOCG.2024.31}
  {\path{doi:10.4230/LIPICS.SOCG.2024.31}}.

\bibitem[CE16]{ChuzhoyE16}
Julia Chuzhoy and Alina Ene.
\newblock On approximating maximum independent set of rectangles.
\newblock In {\em Proceedings of the 57th Annual IEEE Symposium on Foundations
  of Computer Science (FOCS)}, pages 820--829, 2016.
\newblock \href {https://doi.org/10.1109/FOCS.2016.92}
  {\path{doi:10.1109/FOCS.2016.92}}.

\bibitem[CH12]{ChanH12}
Timothy~M. Chan and Sariel Har{-}Peled.
\newblock Approximation algorithms for maximum independent set of pseudo-disks.
\newblock {\em Discret. Comput. Geom.}, 48(2):373--392, 2012.
\newblock \href {https://doi.org/10.1007/S00454-012-9417-5}
  {\path{doi:10.1007/S00454-012-9417-5}}.

\bibitem[CH20]{ChanH20}
Timothy~M. Chan and Qizheng He.
\newblock Faster approximation algorithms for geometric set cover.
\newblock In {\em Proceedings of the 36th International Symposium on
  Computational Geometry (SoCG)}, volume 164 of {\em LIPIcs}, pages
  27:1--27:14, 2020.
\newblock \href {https://doi.org/10.4230/LIPICS.SOCG.2020.27}
  {\path{doi:10.4230/LIPICS.SOCG.2020.27}}.

\bibitem[CH21]{ChanH21}
Timothy~M. Chan and Qizheng He.
\newblock More dynamic data structures for geometric set cover with sublinear
  update time.
\newblock In {\em Proceedings of the 37th International Symposium on
  Computational Geometry (SoCG)}, volume 189 of {\em LIPIcs}, pages
  25:1--25:14, 2021.
\newblock \href {https://doi.org/10.4230/LIPICS.SOCG.2021.25}
  {\path{doi:10.4230/LIPICS.SOCG.2021.25}}.

\bibitem[CH24]{ChanH24}
Timothy~M. Chan and Zhengcheng Huang.
\newblock Dynamic geometric connectivity in the plane with constant query time.
\newblock In {\em Proceedings of the 40th International Symposium on
  Computational Geometry (SoCG)}, volume 293 of {\em LIPIcs}, pages
  36:1--36:13, 2024.
\newblock \href {https://doi.org/10.4230/LIPICS.SOCG.2024.36}
  {\path{doi:10.4230/LIPICS.SOCG.2024.36}}.

\bibitem[Cha98]{Chan98}
Timothy~M. Chan.
\newblock Approximate nearest neighbor queries revisited.
\newblock {\em Discret. Comput. Geom.}, 20(3):359--373, 1998.
\newblock \href {https://doi.org/10.1007/PL00009390}
  {\path{doi:10.1007/PL00009390}}.

\bibitem[Cha03]{Chan03}
Timothy~M. Chan.
\newblock Polynomial-time approximation schemes for packing and piercing fat
  objects.
\newblock {\em J. Algorithms}, 46(2):178--189, 2003.
\newblock \href {https://doi.org/10.1016/S0196-6774(02)00294-8}
  {\path{doi:10.1016/S0196-6774(02)00294-8}}.

\bibitem[Cha04]{Chan04}
Timothy~M. Chan.
\newblock A note on maximum independent sets in rectangle intersection graphs.
\newblock {\em Inf. Process. Lett.}, 89(1):19--23, 2004.
\newblock \href {https://doi.org/10.1016/J.IPL.2003.09.019}
  {\path{doi:10.1016/J.IPL.2003.09.019}}.

\bibitem[Cha06]{Chan06}
Timothy~M. Chan.
\newblock Dynamic subgraph connectivity with geometric applications.
\newblock {\em {SIAM} J. Comput.}, 36(3):681--694, 2006.
\newblock \href {https://doi.org/10.1137/S009753970343912X}
  {\path{doi:10.1137/S009753970343912X}}.

\bibitem[Cha10]{Chan10}
Timothy~M. Chan.
\newblock A dynamic data structure for 3-d convex hulls and 2-d nearest
  neighbor queries.
\newblock {\em J. {ACM}}, 57(3):16:1--16:15, 2010.
\newblock \href {https://doi.org/10.1145/1706591.1706596}
  {\path{doi:10.1145/1706591.1706596}}.

\bibitem[Cha20a]{Chan20}
Timothy~M. Chan.
\newblock Dynamic generalized closest pair: Revisiting {E}ppstein's technique.
\newblock In {\em Proceedings of the 3rd SIAM Symposium on Simplicity in
  Algorithms (SOSA)}, pages 33--37, 2020.
\newblock \href {https://doi.org/10.1137/1.9781611976014.6}
  {\path{doi:10.1137/1.9781611976014.6}}.

\bibitem[Cha20b]{Chan20a}
Timothy~M. Chan.
\newblock Dynamic geometric data structures via shallow cuttings.
\newblock {\em Discret. Comput. Geom.}, 64(4):1235--1252, 2020.
\newblock \href {https://doi.org/10.1007/S00454-020-00229-5}
  {\path{doi:10.1007/S00454-020-00229-5}}.

\bibitem[CHQ20]{ChekuriHQ20}
Chandra Chekuri, Sariel Har{-}Peled, and Kent Quanrud.
\newblock Fast {LP}-based approximations for geometric packing and covering
  problems.
\newblock In {\em Proceedings of the 31st {ACM-SIAM} Symposium on Discrete
  Algorithms (SODA)}, pages 1019--1038, 2020.
\newblock \href {https://doi.org/10.1137/1.9781611975994.62}
  {\path{doi:10.1137/1.9781611975994.62}}.

\bibitem[CHSX22]{ChanHSX22}
Timothy~M. Chan, Qizheng He, Subhash Suri, and Jie Xue.
\newblock Dynamic geometric set cover, revisited.
\newblock In {\em Proceedings of the 33rd Annual {ACM-SIAM} Symposium on
  Discrete Algorithms (SODA)}, pages 3496--3528, 2022.
\newblock \href {https://doi.org/10.1137/1.9781611977073.139}
  {\path{doi:10.1137/1.9781611977073.139}}.

\bibitem[CIK21]{CardinalIK21}
Jean Cardinal, John Iacono, and Grigorios Koumoutsos.
\newblock Worst-case efficient dynamic geometric independent set.
\newblock In {\em Proceedings of the 29th Annual European Symposium on
  Algorithms (ESA)}, volume 204 of {\em LIPIcs}, pages 25:1--25:15, 2021.
\newblock \href {https://doi.org/10.4230/LIPICS.ESA.2021.25}
  {\path{doi:10.4230/LIPICS.ESA.2021.25}}.

\bibitem[CKL{\etalchar{+}}22]{ChenKLPGS22}
Li~Chen, Rasmus Kyng, Yang~P. Liu, Richard Peng, Maximilian~Probst Gutenberg,
  and Sushant Sachdeva.
\newblock Maximum flow and minimum-cost flow in almost-linear time.
\newblock In {\em Proceedings of the 63rd {IEEE} Annual Symposium on
  Foundations of Computer Science (FOCS)}, pages 612--623, 2022.
\newblock \href {https://doi.org/10.1109/FOCS54457.2022.00064}
  {\path{doi:10.1109/FOCS54457.2022.00064}}.

\bibitem[CLP11]{ChanLP11}
Timothy~M. Chan, Kasper~Green Larsen, and Mihai P{\u{a}}tra{\c{s}}cu.
\newblock Orthogonal range searching on the {RAM}, revisited.
\newblock In {\em Proceedings of the 27th {ACM} Symposium on Computational
  Geometry (SoCG)}, pages 1--10, 2011.
\newblock \href {https://doi.org/10.1145/1998196.1998198}
  {\path{doi:10.1145/1998196.1998198}}.

\bibitem[CM05]{chan2005approximating}
Timothy~M. Chan and Abdullah-Al Mahmood.
\newblock Approximating the piercing number for unit-height rectangles.
\newblock In {\em Proceedings of the 17th Annual Canadian Conference on
  Computational Geometry (CCCG)}, pages 15--18, 2005.

\bibitem[CMR23]{compton2020new}
Spencer Compton, Slobodan Mitrovic, and Ronitt Rubinfeld.
\newblock New partitioning techniques and faster algorithms for approximate
  interval scheduling.
\newblock In {\em Proceedings of the 50th International Colloquium on Automata,
  Languages, and Programming (ICALP)}, volume 261 of {\em LIPIcs}, pages
  45:1--45:16, 2023.
\newblock \href {https://doi.org/https://doi.org/10.4230/LIPIcs.ICALP.2023.45}
  {\path{doi:https://doi.org/10.4230/LIPIcs.ICALP.2023.45}}.

\bibitem[CPW24]{CslovjecsekPW24}
Jana Cslovjecsek, Micha{{\l}} Pilipczuk, and Karol Wegrzycki.
\newblock A polynomial-time {$\textrm{OPT}^\varepsilon$}-approximation
  algorithm for maximum independent set of connected subgraphs in a planar
  graph.
\newblock In {\em Proceedings of the 35th Annual {ACM-SIAM} Symposium on
  Discrete Algorithms (SODA)}, pages 625--638, 2024.
\newblock \href {https://doi.org/10.1137/1.9781611977912.23}
  {\path{doi:10.1137/1.9781611977912.23}}.

\bibitem[CW21]{ChalermsookW21}
Parinya Chalermsook and Bartosz Walczak.
\newblock Coloring and maximum weight independent set of rectangles.
\newblock In {\em Proceedings of the 32nd {ACM-SIAM} Symposium on Discrete
  Algorithms (SODA)}, pages 860--868, 2021.
\newblock \href {https://doi.org/10.1137/1.9781611976465.54}
  {\path{doi:10.1137/1.9781611976465.54}}.

\bibitem[dBCvKO08]{de2000computational}
Mark de~Berg, Otfried Cheong, Marc~J. van Kreveld, and Mark~H. Overmars.
\newblock {\em Computational Geometry: Algorithms and Applications}.
\newblock Springer, 3rd edition, 2008.
\newblock URL: \url{https://www.worldcat.org/oclc/227584184}.

\bibitem[DP14]{DuanP10}
Ran Duan and Seth Pettie.
\newblock Linear-time approximation for maximum weight matching.
\newblock {\em J. {ACM}}, 61(1):1:1--1:23, 2014.
\newblock \href {https://doi.org/10.1145/2529989} {\path{doi:10.1145/2529989}}.

\bibitem[EIK01]{efrat2001geometry}
Alon Efrat, Alon Itai, and Matthew~J Katz.
\newblock Geometry helps in bottleneck matching and related problems.
\newblock {\em Algorithmica}, 31:1--28, 2001.
\newblock \href {https://doi.org/10.1007/S00453-001-0016-8}
  {\path{doi:10.1007/S00453-001-0016-8}}.

\bibitem[EJS05]{erlebach2005polynomial}
Thomas Erlebach, Klaus Jansen, and Eike Seidel.
\newblock Polynomial-time approximation schemes for geometric intersection
  graphs.
\newblock {\em SIAM Journal on Computing}, 34(6):1302--1323, 2005.
\newblock \href {https://doi.org/10.1137/S0097539702402676}
  {\path{doi:10.1137/S0097539702402676}}.

\bibitem[EKNS00]{efrat2000dynamic}
Alon Efrat, Matthew~J. Katz, Frank Nielsen, and Micha Sharir.
\newblock Dynamic data structures for fat objects and their applications.
\newblock {\em Computational Geometry}, 15(4):215--227, 2000.
\newblock \href {https://doi.org/10.1016/S0925-7721(99)00059-0}
  {\path{doi:10.1016/S0925-7721(99)00059-0}}.

\bibitem[Epp95]{Eppstein95}
David Eppstein.
\newblock Dynamic {E}uclidean minimum spanning trees and extrema of binary
  functions.
\newblock {\em Discret. Comput. Geom.}, 13:111--122, 1995.
\newblock \href {https://doi.org/10.1007/BF02574030}
  {\path{doi:10.1007/BF02574030}}.

\bibitem[Ezr10]{ezra2010note}
Esther Ezra.
\newblock A note about weak $\varepsilon$-nets for axis-parallel boxes in
  {$d$}-space.
\newblock {\em Information Processing Letters}, 110(18-19):835--840, 2010.
\newblock \href {https://doi.org/10.1016/J.IPL.2010.06.005}
  {\path{doi:10.1016/J.IPL.2010.06.005}}.

\bibitem[FM95]{FederM95}
Tom{\'{a}}s Feder and Rajeev Motwani.
\newblock Clique partitions, graph compression and speeding-up algorithms.
\newblock {\em J. Comput. Syst. Sci.}, 51(2):261--272, 1995.
\newblock \href {https://doi.org/10.1006/JCSS.1995.1065}
  {\path{doi:10.1006/JCSS.1995.1065}}.

\bibitem[FP11]{FoxP11}
Jacob Fox and J{\'{a}}nos Pach.
\newblock Computing the independence number of intersection graphs.
\newblock In {\em Proceedings of the 22nd Annual {ACM-SIAM} Symposium on
  Discrete Algorithms (SODA)}, pages 1161--1165, 2011.
\newblock \href {https://doi.org/10.1137/1.9781611973082.87}
  {\path{doi:10.1137/1.9781611973082.87}}.

\bibitem[Fre85]{Frederickson85}
Greg~N. Frederickson.
\newblock Data structures for on-line updating of minimum spanning trees, with
  applications.
\newblock {\em {SIAM} J. Comput.}, 14(4):781--798, 1985.
\newblock \href {https://doi.org/10.1137/0214055} {\path{doi:10.1137/0214055}}.

\bibitem[GKM{\etalchar{+}}21]{GalvezIMPROVED}
Waldo G{\'{a}}lvez, Arindam Khan, Mathieu Mari, Tobias M{\"{o}}mke,
  Madhusudhan~Reddy Pittu, and Andreas Wiese.
\newblock A {$(2+\varepsilon)$}-approximation algorithm for maximum independent
  set of rectangles.
\newblock {\em CoRR}, abs/2106.00623, 2021.
\newblock \href {http://arxiv.org/abs/2106.00623} {\path{arXiv:2106.00623}}.

\bibitem[GKM{\etalchar{+}}22]{galvez20223}
Waldo G{\'a}lvez, Arindam Khan, Mathieu Mari, Tobias M{\"o}mke,
  Madhusudhan~Reddy Pittu, and Andreas Wiese.
\newblock A 3-approximation algorithm for maximum independent set of
  rectangles.
\newblock In {\em Proceedings of the 33rd Annual ACM-SIAM Symposium on Discrete
  Algorithms (SODA)}, pages 894--905, 2022.
\newblock \href {https://doi.org/10.1137/1.9781611977073.38}
  {\path{doi:10.1137/1.9781611977073.38}}.

\bibitem[GP13]{GuptaP13}
Manoj Gupta and Richard Peng.
\newblock Fully dynamic {$(1+\varepsilon)$}-approximate matchings.
\newblock In {\em Proceedings of the 54th Annual {IEEE} Symposium on
  Foundations of Computer Science (FOCS)}, pages 548--557, 2013.
\newblock \href {https://doi.org/10.1109/FOCS.2013.65}
  {\path{doi:10.1109/FOCS.2013.65}}.

\bibitem[Har23]{SarielVC23}
Sariel Har{-}Peled.
\newblock Approximately: Independence implies vertex cover.
\newblock {\em CoRR}, abs/2308.00840, 2023.
\newblock \href {http://arxiv.org/abs/2308.00840} {\path{arXiv:2308.00840}}.

\bibitem[HK73]{HopcroftK73}
John~E. Hopcroft and Richard~M. Karp.
\newblock An {$n^{5/2}$} algorithm for maximum matchings in bipartite graphs.
\newblock {\em {SIAM} J. Comput.}, 2(4):225--231, 1973.
\newblock \href {https://doi.org/10.1137/0202019} {\path{doi:10.1137/0202019}}.

\bibitem[HM85]{hochbaum1985approximation}
Dorit~S. Hochbaum and Wolfgang Maass.
\newblock Approximation schemes for covering and packing problems in image
  processing and {VLSI}.
\newblock {\em Journal of the ACM}, 32(1):130--136, 1985.
\newblock \href {https://doi.org/10.1145/2455.214106}
  {\path{doi:10.1145/2455.214106}}.

\bibitem[HNW20]{Henzinger0W20}
Monika Henzinger, Stefan Neumann, and Andreas Wiese.
\newblock Dynamic approximate maximum independent set of intervals, hypercubes
  and hyperrectangles.
\newblock In {\em Proceedings of the 36th International Symposium on
  Computational Geometry (SoCG)}, volume 164 of {\em LIPIcs}, pages
  51:1--51:14, 2020.
\newblock \href {https://doi.org/10.4230/LIPICS.SOCG.2020.51}
  {\path{doi:10.4230/LIPICS.SOCG.2020.51}}.

\bibitem[HRS02]{huang2002approximation}
Hai Huang, Andr{\'e}a~W. Richa, and Michael Segal.
\newblock Approximation algorithms for the mobile piercing set problem with
  applications to clustering in ad-hoc networks.
\newblock In {\em Proceedings of the 6th International Workshop on Discrete
  Algorithms and Methods for Mobile Computing and Communications}, pages
  52--61, 2002.
\newblock \href {https://doi.org/10.1023/B:MONE.0000013626.53247.1C}
  {\path{doi:10.1023/B:MONE.0000013626.53247.1C}}.

\bibitem[HY22]{Har-PeledY22}
Sariel Har{-}Peled and Everett Yang.
\newblock Approximation algorithms for maximum matchings in geometric
  intersection graphs.
\newblock In {\em Proceedings of the 38th International Symposium on
  Computational Geometry (SoCG)}, volume 224 of {\em LIPIcs}, pages
  47:1--47:13, 2022.
\newblock \href {https://doi.org/10.4230/LIPICS.SOCG.2022.47}
  {\path{doi:10.4230/LIPICS.SOCG.2022.47}}.

\bibitem[Ind07]{Indyk07}
Piotr Indyk.
\newblock A near linear time constant factor approximation for {E}uclidean
  bichromatic matching (cost).
\newblock In {\em Proceedings of the 18th Annual {ACM-SIAM} Symposium on
  Discrete Algorithms (SODA)}, pages 39--42. {SIAM}, 2007.

\bibitem[KKK{\etalchar{+}}22]{KaplanKKKMRS22}
Haim Kaplan, Alexander Kauer, Katharina Klost, Kristin Knorr, Wolfgang Mulzer,
  Liam Roditty, and Paul Seiferth.
\newblock Dynamic connectivity in disk graphs.
\newblock In {\em Proceedings of the 38th International Symposium on
  Computational Geometry (SoCG)}, volume 224 of {\em LIPIcs}, pages
  49:1--49:17, 2022.
\newblock \href {https://doi.org/10.4230/LIPICS.SOCG.2022.49}
  {\path{doi:10.4230/LIPICS.SOCG.2022.49}}.

\bibitem[KLPS86]{KedemLPS86}
Klara Kedem, Ron Livne, J{\'{a}}nos Pach, and Micha Sharir.
\newblock On the union of {J}ordan regions and collision-free translational
  motion amidst polygonal obstacles.
\newblock {\em Discret. Comput. Geom.}, 1:59--70, 1986.
\newblock \href {https://doi.org/10.1007/BF02187683}
  {\path{doi:10.1007/BF02187683}}.

\bibitem[KMP98]{KhannaMP98}
Sanjeev Khanna, S.~Muthukrishnan, and Mike Paterson.
\newblock On approximating rectangle tiling and packing.
\newblock In {\em Proceedings of the 9th Annual {ACM-SIAM} Symposium on
  Discrete Algorithms (SODA)}, pages 384--393, 1998.
\newblock URL: \url{http://dl.acm.org/citation.cfm?id=314613.314768}.

\bibitem[KMR{\etalchar{+}}20]{KaplanMRSS20}
Haim Kaplan, Wolfgang Mulzer, Liam Roditty, Paul Seiferth, and Micha Sharir.
\newblock Dynamic planar {V}oronoi diagrams for general distance functions and
  their algorithmic applications.
\newblock {\em Discret. Comput. Geom.}, 64(3):838--904, 2020.
\newblock \href {https://doi.org/10.1007/S00454-020-00243-7}
  {\path{doi:10.1007/S00454-020-00243-7}}.

\bibitem[KNS03]{nielsen2003maintenance}
Matthew~J. Katz, Frank Nielsen, and Michael Segal.
\newblock Maintenance of a piercing set for intervals with applications.
\newblock {\em Algorithmica}, 36(1):59--73, 2003.
\newblock \href {https://doi.org/10.1007/S00453-002-1006-1}
  {\path{doi:10.1007/S00453-002-1006-1}}.

\bibitem[LPP15]{LotkerPP15}
Zvi Lotker, Boaz Patt{-}Shamir, and Seth Pettie.
\newblock Improved distributed approximate matching.
\newblock {\em J. {ACM}}, 62(5):38:1--38:17, 2015.
\newblock \href {https://doi.org/10.1145/2786753} {\path{doi:10.1145/2786753}}.

\bibitem[LPS{\etalchar{+}}24]{LokshtanovP00Z24}
Daniel Lokshtanov, Fahad Panolan, Saket Saurabh, Jie Xue, and Meirav Zehavi.
\newblock A 1.9999-approximation algorithm for vertex cover on string graphs.
\newblock In {\em Proceedings of the 40th International Symposium on
  Computational Geometry (SoCG)}, volume 293 of {\em LIPIcs}, pages
  72:1--72:11, 2024.
\newblock \href {https://doi.org/10.4230/LIPICS.SOCG.2024.72}
  {\path{doi:10.4230/LIPICS.SOCG.2024.72}}.

\bibitem[LT80]{LiptonT77}
Richard~J. Lipton and Robert~Endre Tarjan.
\newblock Applications of a planar separator theorem.
\newblock {\em {SIAM} J. Comput.}, 9(3):615--627, 1980.
\newblock \href {https://doi.org/10.1137/0209046} {\path{doi:10.1137/0209046}}.

\bibitem[Mar07]{marx2007optimality}
D{\'a}niel Marx.
\newblock On the optimality of planar and geometric approximation schemes.
\newblock In {\em Proceedings of the 48th Annual IEEE Symposium on Foundations
  of Computer Science (FOCS)}, pages 338--348, 2007.
\newblock \href {https://doi.org/10.1109/FOCS.2007.26}
  {\path{doi:10.1109/FOCS.2007.26}}.

\bibitem[Mit22]{mitchell2022approximating}
Joseph S.~B. Mitchell.
\newblock Approximating maximum independent set for rectangles in the plane.
\newblock In {\em Proceedings of the 62nd Annual IEEE Symposium on Foundations
  of Computer Science (FOCS)}, pages 339--350, 2022.
\newblock \href {https://doi.org/10.1109/FOCS52979.2021.00042}
  {\path{doi:10.1109/FOCS52979.2021.00042}}.

\bibitem[Nie00]{Nielsen00}
Frank Nielsen.
\newblock Fast stabbing of boxes in high dimensions.
\newblock {\em Theor. Comput. Sci.}, 246(1-2):53--72, 2000.
\newblock \href {https://doi.org/10.1016/S0304-3975(98)00336-3}
  {\path{doi:10.1016/S0304-3975(98)00336-3}}.

\bibitem[NTJ75]{nemhauser1975vertex}
George~L. Nemhauser and Leslie~E. Trotter~Jr.
\newblock Vertex packings: Structural properties and algorithms.
\newblock {\em Mathematical Programming}, 8(1):232--248, 1975.
\newblock \href {https://doi.org/10.1007/BF01580444}
  {\path{doi:10.1007/BF01580444}}.

\bibitem[OR10]{OnakR10}
Krzysztof Onak and Ronitt Rubinfeld.
\newblock Maintaining a large matching and a small vertex cover.
\newblock In {\em Proceedings of the 42nd {ACM} Symposium on Theory of
  Computing (STOC)}, pages 457--464, 2010.
\newblock \href {https://doi.org/10.1145/1806689.1806753}
  {\path{doi:10.1145/1806689.1806753}}.

\bibitem[Ove83]{Overmars83}
Mark~H. Overmars.
\newblock {\em The Design of Dynamic Data Structures}, volume 156 of {\em
  Lecture Notes in Computer Science}.
\newblock Springer, 1983.
\newblock \href {https://doi.org/10.1007/BFB0014927}
  {\path{doi:10.1007/BFB0014927}}.

\bibitem[PS85]{preparata2012computational}
Franco~P. Preparata and Michael~I. Shamos.
\newblock {\em Computational Geometry: An Introduction}.
\newblock Springer, 1985.

\bibitem[PS16]{PelegS16}
David Peleg and Shay Solomon.
\newblock Dynamic {$(1+\varepsilon)$}-approximate matchings: {A}
  density-sensitive approach.
\newblock In {\em Proceedings of the 27th Annual {ACM-SIAM} Symposium on
  Discrete Algorithms (SODA)}, pages 712--729, 2016.
\newblock \href {https://doi.org/10.1137/1.9781611974331.CH51}
  {\path{doi:10.1137/1.9781611974331.CH51}}.

\bibitem[PT00]{PachT00}
J{\'{a}}nos Pach and G{\'{a}}bor Tardos.
\newblock Cutting glass.
\newblock {\em Discret. Comput. Geom.}, 24(2-3):481--496, 2000.
\newblock \href {https://doi.org/10.1007/S004540010050}
  {\path{doi:10.1007/S004540010050}}.

\bibitem[Sol16]{Solomon16}
Shay Solomon.
\newblock Fully dynamic maximal matching in constant update time.
\newblock In {\em Proceedings of the 57th Annual {IEEE} Symposium on
  Foundations of Computer Science (FOCS)}, pages 325--334, 2016.
\newblock \href {https://doi.org/10.1109/FOCS.2016.43}
  {\path{doi:10.1109/FOCS.2016.43}}.

\bibitem[SW96]{sharir1996rectilinear}
Micha Sharir and Emo Welzl.
\newblock Rectilinear and polygonal {$p$-piercing} and {$p$-center} problems.
\newblock In {\em Proceedings of the 12th Annual Symposium on Computational
  Geometry (SoCG)}, pages 122--132, 1996.
\newblock \href {https://doi.org/10.1145/237218.237255}
  {\path{doi:10.1145/237218.237255}}.

\bibitem[SW98]{SmithW98}
Warren~D. Smith and Nicholas~C. Wormald.
\newblock Geometric separator theorems {\&} applications.
\newblock In {\em Proceedings of the 39th Annual IEEE Symposium on Foundations
  of Computer Science (FOCS)}, pages 232--243, 1998.
\newblock \href {https://doi.org/10.1109/SFCS.1998.743449}
  {\path{doi:10.1109/SFCS.1998.743449}}.

\bibitem[Vaz94]{Vazirani94}
Vijay~V. Vazirani.
\newblock A theory of alternating paths and blossoms for proving correctness of
  the {$O(\sqrt{V}E)$} general graph maximum matching algorithm.
\newblock {\em Comb.}, 14(1):71--109, 1994.
\newblock \href {https://doi.org/10.1007/BF01305952}
  {\path{doi:10.1007/BF01305952}}.

\bibitem[Zuc07]{Zuckerman07}
David Zuckerman.
\newblock Linear degree extractors and the inapproximability of max clique and
  chromatic number.
\newblock {\em Theory Comput.}, 3(1):103--128, 2007.
\newblock \href {https://doi.org/10.4086/TOC.2007.V003A006}
  {\path{doi:10.4086/TOC.2007.V003A006}}.

\end{thebibliography}
%}
%

\newpage

\appendix

\section{MPS and MIS: Speedup via Sampling}\label{app:samp}

In this appendix, we note that the $O(n^{1+O(\delta)})$ running time of our static algorithms for MPS and MIS in Sections~\ref{sec:mps}--\ref{sec:mis} can be lowered to $O(n\polylog n)$ if we are only required to output 
an approximation to the optimal value.
The idea is to approximate a sum by random sampling.
This idea has been used before in dynamic geometric algorithms, e.g., in \cite{ChanH21} on dynamic set cover.  In the weighted case, we use \emph{importance sampling} (which has also been used before in geometric optimization, e.g., in \cite{Indyk07}).
% yes! \todo{Is it correct?}.

\begin{fact}\label{fact:samp}\ 

\emph{(Basic random sampling)}\ \
Let $a_1,\ldots,a_m\in [1,B]$.
Select a random multiset $R\subseteq \{a_1,\ldots,a_m\}$ of size $\ell = \lceil 4B/\eps^2\rceil$,
where each of its $\ell$ elements is chosen independently and is $a_i$ with probability $1/m$.
Then $X := \sum_{a_i\in R} a_i \cdot m/\ell$ is a $(1\pm \eps)$-approximation of $S := \sum_{i=1}^m a_i$ with probability at least $3/4$.

\emph{(Importance sampling)}\ \
More generally: Let $a_i\in [w_i,Bw_i]$ for each $i=1,\ldots,m$, with $\sum_{i=1}^m w_i=W$.
Select a random multiset $R\subseteq \{a_1,\ldots,a_m\}$ of size $\ell = \lceil 4B/\eps^2\rceil$,
where each of its $\ell$ elements is chosen independently and is $a_i$ with probability $w_i/W$.
Then $X := \sum_{a_i\in R} a_i\cdot W/(w_i\ell)$ is a $(1\pm \eps)$-approximation of $S:=\sum_{i=1}^m a_i$ with probability at least $3/4$.
\end{fact}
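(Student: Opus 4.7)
The plan is a standard second-moment/Chebyshev argument applied in parallel to the two parts; the importance-sampling case requires only one extra algebraic manipulation.

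For basic sampling, I would let $Y_1,\ldots,Y_\ell$ be the sampled values (i.i.d., each uniform over $\{a_1,\ldots,a_m\}$), so $X=(m/\ell)\sum_j Y_j$. Linearity gives $\Ex[X]=m\cdot(S/m)=S$. Since $Y_j\le B$ and $\Ex[Y_j]=S/m$, the one-sided bound $Y_j^2\le B\,Y_j$ yields $\Ex[Y_j^2]\le BS/m$, hence
\[
\Var(X)\ =\ (m/\ell)^2\cdot \ell\cdot \Var(Y_j)\ \le\ BmS/\ell.
\]
The hypothesis $a_i\ge 1$ forces $m\le S$, so $\Var(X)\le BS^2/\ell$, and Chebyshev gives $\Pr[\,|X-S|\ge \eps S\,]\le B/(\ell\eps^2)\le 1/4$ for $\ell\ge 4B/\eps^2$.

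For importance sampling, let $I_1,\ldots,I_\ell$ be the chosen indices and set $Z_j=a_{I_j}W/(w_{I_j}\ell)$, so $X=\sum_j Z_j$. A direct computation gives $\Ex[Z_j]=\sum_i (w_i/W)\cdot a_iW/(w_i\ell)=S/\ell$, confirming $\Ex[X]=S$. The analogous one-sided trick is to observe that $Z_j\le BW/\ell$ (because $a_i/w_i\le B$), so
\[
\Ex[Z_j^2]\ \le\ (BW/\ell)\,\Ex[Z_j]\ =\ BWS/\ell^2.
\]
The lower bound $a_i\ge w_i$ then gives $W\le S$, so $\Var(X)\le\ell\,\Ex[Z_j^2]\le BS^2/\ell$, and Chebyshev closes it exactly as before.

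The one subtle point is the choice of second-moment bound: the naive estimate $\Ex[Y_j^2]\le B^2$ would give $\Var(X)\le B^2m^2/\ell$, which does not scale favorably with $S$. The proof works precisely because we combine the pointwise upper bound ($Y_j\le B$, or $Z_j\le BW/\ell$) with the mean, turning a square into a linear functional of $S$; the lower bounds $a_i\ge 1$ and $a_i\ge w_i$ are then exactly what is needed to convert the residual factors of $m$ or $W$ into~$S$. This is the reason the hypotheses are stated asymmetrically as $a_i\in[1,B]$ and $a_i\in[w_i,Bw_i]$.
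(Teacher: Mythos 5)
Your proof is correct and follows essentially the same route as the paper: compute $\Ex[X]=S$, bound the second moment by combining the pointwise upper bound $a_i\le Bw_i$ with the mean, use $a_i\ge w_i$ (resp.\ $a_i\ge 1$) to get $W\le S$ (resp.\ $m\le S$), and finish with Chebyshev. The only cosmetic difference is that the paper proves just the importance-sampling case and treats basic sampling as the special case $w_i=1$, whereas you write both out.
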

\begin{proof}
We will prove the more general statement.
First note that $\Ex[X] = \ell \sum_{i=1}^m (a_i \cdot W/(w_i\ell))\cdot w_i/W = S$.
Also, $\Var[X] \le \ell \sum_{i=1}^m (a_i \cdot W/(w_i\ell))^2\cdot w_i/W =
(1/\ell) \sum_{i=1}^m a_i^2W/w_i \le (1/\ell) \sum_{i=1}^m B a_i W  = BSW/\ell \le B S^2/\ell \le (\eps S/2)^2$.
By Chebyshev's inequality, $\Pr[|X-S|\ge \eps S]\le 1/4$.
\end{proof}

\begin{theorem}
Let $d$ be a constant.
Given a set $S$ of $n$ axis-aligned boxes in $\R^d$, 
we can compute an $O(\log\log \OPT)$-approximation to the size of the minimum piercing set for $S$ in $\OO(n)$ time w.h.p.\footnote{
With high probability, i.e., with probability $1-O(1/n^c)$ for an arbitrarily large constant $c$.
} by a static Monte Carlo randomized algorithm.
\end{theorem}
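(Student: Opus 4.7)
The plan is to speed up the algorithm of Theorem~\ref{thm:pierce:box} via random sampling, exploiting the fact that we only need the size of the piercing set as a number, not the set itself. Running that algorithm with $\delta$ set to a small constant produces a piercing set $P'$ of size $O(\log\log\OPT)\cdot\OPT$ (the $(1/\delta)^d$ factor becomes an absolute constant), but costs $O(n^{1+O(\delta)})$ time. The crucial observation is that $|P'| = \sum_v |P_v|$ where $v$ ranges over the type-$d$ subproblems at the leaves of the $b$-ary divide-and-conquer tree and each $|P_v|$ is the piercing set size returned by Lemma~\ref{lem:pierce:box} for subproblem $v$. By Fact~\ref{fact:samp}, this sum can be estimated to a $(1\pm\eps)$ factor by solving only a small random sample of subproblems.

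Concretely, I would first build the full $b$-ary divide-and-conquer tree in $\OO(n)$ time, enumerating the non-empty type-$d$ subproblems together with their box-counts $n_v$ (so that $\sum_v n_v = n$). By the structure of Lemma~\ref{lem:pierce:box}, each $|P_v|$ lies in $[1, b^{O(1)}]$, so taking each $a_v = |P_v|$ and applying basic random sampling (Fact~\ref{fact:samp}, first form) yields a $(1\pm\eps)$-estimator of $\sum_v |P_v|$ using $\ell = O(B/\eps^2)$ samples with $B = b^{O(1)}$. For each sampled $v$, I would compute $|P_v|$ via Lemma~\ref{lem:pierce:box} in $O(n_v + b^{O(1)})$ time, then return the scaled sample average. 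Taking the median over $O(\log n)$ independent trials boosts the success probability to $1 - n^{-c}$ for any constant $c$, as desired.

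The main obstacle is the tension in choosing $b$: a constant $\delta$ (i.e., $b = n^\delta$) keeps the approximation ratio at $O(\log\log\OPT)$ but makes $B = b^{O(1)} = n^{O(\delta)}$ polynomial, so too many samples are needed; conversely, a polylogarithmic $b$ keeps $B$ polylog but inflates the approximation ratio by $(\log_b n)^d$. Bridging this tension is where the sampling pays off: using the non-uniform choice of $b$ as a function of local input size alluded to in Remark~\ref{rmk:pierce}, the recursion depth can be cut to $(\log\log n)^d$, and with an importance-sampling variant weighted by a quickly computable per-subproblem upper bound on $|P_v|$, the extra $(\log\log n)^d$ factor can be folded into the sampling variance (absorbed by the $\eps$ slack) rather than into the multiplicative approximation ratio. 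I expect the careful execution of this balance, together with the enumeration of non-empty subproblems in $\OO(n)$ time and the bookkeeping to keep per-sample cost under control, to be the most delicate part of the proof.
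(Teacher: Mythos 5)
Your first two paragraphs are exactly the paper's proof: decompose the output size as $\sum_v |P'_v|$ over the type-$d$ subproblems, note each term lies in $[1,\OO(b^{d})]$, estimate the sum with $O(B/\eps^2)$ uniform samples via Fact~\ref{fact:samp}, and boost by taking the median of $O(\log n)$ trials. That part is correct.

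The ``main obstacle'' in your third paragraph, however, does not exist, and the machinery you propose to resolve it is both unnecessary and unsound. Take $b=n^{\delta}$ for a sufficiently small \emph{constant} $\delta$, exactly as in Theorem~\ref{thm:pierce:box}. Then $B=b^{O(1)}=n^{O(\delta)}$, so the estimator needs $n^{O(\delta)}$ samples; each sampled subproblem $v$ costs $\OO(n_v+b^{O(1)})=\OO(n_v+n^{O(\delta)})$ to solve, and since each box lands in exactly one type-$d$ subproblem we have $\sum_v n_v\le n$. The total is therefore $\OO(n+n^{O(\delta)}\cdot n^{O(\delta)})=\OO(n)$ once $\delta$ is small enough, and the approximation ratio is $O((1/\delta^d)\log\log\OPT)=O(\log\log\OPT)$. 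Your worry that $n^{O(\delta)}$ samples is ``too many'' seems to assume a polylogarithmic sample budget, but the budget here is $\OO(n)$, so a sublinear number of samples each with sublinear cost (beyond the telescoping $n_v$ terms) is perfectly fine. Worse, the proposed fix---taking $b$ polylogarithmic and ``folding the extra $(\log\log n)^d$ factor into the sampling variance''---cannot work: the $(\log_b n)^d$ factor multiplies the approximation ratio of the \emph{algorithm whose output you are estimating} (it arises from taking the union of piercing sets across the $O(\log_b n)$ levels), not the variance of the estimator. Random sampling only estimates $|P'|$; it cannot repair the fact that $|P'|$ itself would then be $\Theta((\log\log n)^d)\cdot\log\log\OPT\cdot\OPT$ in the worst case, which would fail the stated $O(\log\log\OPT)$ bound. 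Deleting the third paragraph and keeping $b=n^{\delta}$ with $\delta$ a small constant gives the correct proof.
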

\begin{proof}
In the algorithm in the proof of Theorem~\ref{thm:pierce:box}, the size of the returned piercing set can be expressed as a sum of the sizes of piercing sets for a number of type-$d$ subproblems handled by Lemma~\ref{lem:pierce:box}.  The size of the piercing set for each type-$d$ subproblem lies in the range from 1 to $B:=O(b^{d})$ (after removing empty subproblems).  We can approximate the sum to within a constant factor by summing over a random sample of $O(B)$ terms
(see Fact~\ref{fact:samp}), with error probability at most $1/4$ (which can be lowered by
by repeating for logarithmically many trials and returning the median).
Thus, the number of calls to Lemma~\ref{lem:pierce:box} is $\OO(b^{d})$.
The total running time is now $\OO(n+b^{O(1)})$, which is $\OO(n)$ by setting $b=n^\delta$ for a sufficiently small constant $\delta$.
\end{proof}

\begin{theorem}%\label{thm:pierce:fat}
Let $d$ and $c$ be constants.
Given a set $S$ of $n$ objects in $\R^d$ of constant description complexity from a $c$-fat collection $\CCC$, 
we can compute an $O(1)$-approximation to the size of the minimum piercing set for $S$ in $\OO(n)$ time  w.h.p.\ by a static Monte Carlo randomized algorithm.
\end{theorem}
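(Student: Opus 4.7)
The plan is to mirror the preceding sampling theorem for boxes, substituting Lemma~\ref{lem:pierce:fat} for Lemma~\ref{lem:pierce:box} and the shifted quadtree recursion of Theorem~\ref{thm:pierce:fat} for the $b$-ary interval-tree recursion of Theorem~\ref{thm:pierce:box}. Recall that the algorithm of Theorem~\ref{thm:pierce:fat}, over the $d+1$ shifts $v_j$, returns a piercing set whose total size equals $X := \sum_u |P_u|$, where $u$ ranges over all non-empty nodes of the $d+1$ recursion trees and $P_u$ is the piercing set computed by Lemma~\ref{lem:pierce:fat} at $u$. Each such non-empty subproblem satisfies the hypothesis of Lemma~\ref{lem:pierce:fat}, and the fatness argument in its proof gives $|P_u|\in [1, O(b)]$. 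Moreover $\OPT\le X \le O(\log_b n)\cdot\OPT = O(1/\delta)\cdot\OPT$, so for constant $\delta$ we have $X=\Theta(\OPT)$ and an $O(1)$-factor estimate of $X$ is an $O(1)$-approximation of $\OPT$.

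First, I would construct the $d+1$ recursion trees of Theorem~\ref{thm:pierce:fat} but without invoking Lemma~\ref{lem:pierce:fat} at any node: only the Frederickson-style partitions of Fact~\ref{fact:part} are performed, together with linked lists recording which input object lives at each node. Since each level processes each object in $O(1)$ time and there are $O(\log_b n)=O(1)$ levels, this takes $\OO(n)$ time; store pointers to all $m$ non-empty subproblems in a flat array so that uniform random access costs $O(1)$. Second, draw a random sample $R$ of $\ell:=\lceil 4B/\eps^2\rceil$ subproblems (with replacement) from this array, with $B=O(b)$ and, say, $\eps=1/2$, and for each $u\in R$ run the full Lemma~\ref{lem:pierce:fat} algorithm on the objects at $u$ to produce $|P_u|$. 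Form the estimator $\tilde X := (m/\ell)\sum_{u\in R}|P_u|$ as in Fact~\ref{fact:samp}. Third, repeat the sampling step $\Theta(\log n)$ times and output the median; this is a constant-factor approximation of $X$, and hence of $\OPT$, with probability $1-O(n^{-c})$.

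For the running time, the partition step is $\OO(n)$. Each invocation of Lemma~\ref{lem:pierce:fat} on a sampled $u$ costs $\OO(n_u+b^{O(1)})$, where $n_u$ is the local input size. Under uniform sampling over $m$ subproblems whose sizes sum to at most $n$, the expected per-trial total of the $n_u$-costs is $\ell\cdot (n/m)\le n$ whenever $\ell\le m$; when $\ell>m$ we simply process all $m$ subproblems deterministically in $\sum_u n_u\le n$ time. Summing over the $\Theta(\log n)$ trials, the expected overall time is $\OO(n+\ell\cdot b^{O(1)})=\OO(n+b^{O(1)})$, which is $\OO(n)$ for $b=n^\delta$ with a sufficiently small constant $\delta$. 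A standard Markov-based restart-on-overrun converts this expected bound into the claimed high-probability bound.

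The main obstacle is controlling the cost of a single sampled call to Lemma~\ref{lem:pierce:fat}, since a subproblem near the root of a recursion tree may contain $\Omega(n)$ objects, and a naive single unlucky sample could already blow the $\OO(n)$ budget. The saving feature is that Lemma~\ref{lem:pierce:fat} is linear rather than superlinear in the local input size, so the expected work of a uniformly chosen sample is only $n/m$, and aggregating $\ell$ samples telescopes to $O(n)$ in expectation. A secondary technical point, required for Fact~\ref{fact:samp} to apply, is that the sampled terms must lie in $[1,O(b)]$; this is arranged by excluding empty subproblems from the sampling population, which is straightforward to do during the tree construction.
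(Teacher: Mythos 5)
Your proposal is correct and follows essentially the same route as the paper, which simply applies the random-sampling estimator of Fact~\ref{fact:samp} to the per-subproblem piercing-set sizes arising in the shifted-quadtree recursion of Theorem~\ref{thm:pierce:fat}, exactly mirroring the box case. Your extra care about the $\OO(n_u+b^{O(1)})$ cost of a sampled call on a large subproblem (resolved via the expectation-plus-Markov argument and the fallback when $\ell>m$) is a detail the paper leaves implicit, and you handle it correctly.
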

\begin{proof}
    This follows by a similar modification to the proof of Theorem~\ref{thm:pierce:fat} with random sampling.
\end{proof}

\begin{theorem}%\label{thm:indep:box}
Given a set $S$ of $n$ axis-aligned rectangles in $\R^2$, 
we can compute an $O(1)$-approximation to the size of the maximum independent set for $S$ in $\OO(n)$ time  w.h.p.\ by a static Monte Carlo randomized algorithm.

If the rectangles in $S$ are weighted, we can do the same for an $O(\log\log n)$-approximation to the weight of the maximum-weight independent set.
\end{theorem}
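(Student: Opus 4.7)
The goal is to adapt the sampling speedup used in Appendix~\ref{app:samp} for MPS to the algorithm underlying Theorem~\ref{thm:indep:box}. Setting $b=n^\delta$ for a sufficiently small constant $\delta>0$, we first build, in $\OO(n)$ time, the doubly-nested $b$-ary tree from the proof of Theorem~\ref{thm:indep:box}: each rectangle of $S$ is stored at a unique tree node, defining exactly one type-$2$ subproblem to which Lemma~\ref{lem:indep:box} may be applied. While Theorem~\ref{thm:indep:box}'s algorithm outputs a ``max over outer levels of sum over outer nodes of (max over inner levels of sum over inner nodes)'' quantity, replacing both maxes by sums inflates this value by at most a factor of $O((\log_b n)^2)=O(1)$ (each ``max'' is sandwiched between $\frac{1}{\log_b n}\cdot\text{sum}$ and $\text{sum}$). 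So it suffices to estimate the flat sum $\Sigma=\sum_i a_i$ over all non-empty type-$2$ subproblems $i$, where $a_i$ is the cardinality/weight returned by Lemma~\ref{lem:indep:box} on the $i$-th subproblem.

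\textbf{Unweighted case.} The output of Lemma~\ref{lem:indep:box} is a subset of the $O(b^4)$ class representatives $\hat{S}_i$, so $a_i\in\{1,\ldots,O(b^4)\}$ for every non-empty subproblem. We apply the basic random-sampling variant of Fact~\ref{fact:samp} with $B=O(b^4)$ to estimate $\Sigma$ from $O(b^4/\eps^2)$ uniformly drawn non-empty subproblems, invoking Lemma~\ref{lem:indep:box} on each; boosting the success probability to $1-n^{-c}$ by taking the median of $O(\log n)$ independent trials. The resulting estimator approximates $\Sigma$ to within a $(1\pm\eps)$ factor w.h.p., hence is an $O((\log_b n)^2)=O(1)$ absolute-constant approximation of the true MIS size.

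\textbf{Weighted case via importance sampling.} Let $w_i$ denote the maximum rectangle weight in the $i$-th type-$2$ subproblem and precompute all $w_i$ and a sampling prefix-sum table in a single $O(n)$ sweep. Since the heaviest rectangle alone forms an independent set of weight $w_i$, the subproblem's true optimum is $\ge w_i$, so Lemma~\ref{lem:indep:box}'s $O(\log\log b)$-approximation guarantee gives $a_i\ge w_i/O(\log\log b)$; on the other side, $a_i\le|\hat{S}_i|\cdot w_i = O(b^4)w_i$. Setting $w'_i=w_i/O(\log\log b)$, we have $a_i\in[w'_i,\,O(b^4\log\log b)\,w'_i]$, and the importance-sampling variant of Fact~\ref{fact:samp} with $B=O(b^4\log\log b)$ estimates $\Sigma$ using $\OO(b^{O(1)})$ subproblems drawn with probability proportional to $w_i$. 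Combined with median boosting, this yields a value within $O(\log\log n)$ of the true MIS weight w.h.p.\ for constant $\delta$.

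\textbf{Running time and main obstacle.} The total number of Lemma~\ref{lem:indep:box} calls is $\OO(b^{O(1)})$; each call on a subproblem with $n_i$ rectangles costs $\OO(n_i+b^{O(1)})$, and since distinct type-$2$ subproblems hold disjoint rectangle sets, $\sum_{\text{sampled}} n_i\le n$, giving total cost $\OO(n + b^{O(1)})=\OO(n)$ for a sufficiently small constant $\delta$. The only step requiring real care is the weighted case: justifying the lower bound $a_i\ge w_i/O(\log\log b)$ from the approximation guarantee of the black-box algorithm in Lemma~\ref{lem:indep:box}, pinning down the width $B=O(b^4\log\log b)$ in the importance-sampling bound, and confirming that the distribution $\{w_i/\sum_j w_j\}$ can be sampled from in $\OO(1)$ time using the precomputed prefix-sum table.
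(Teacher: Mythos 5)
Your proposal is correct and follows essentially the same route as the paper's Appendix~\ref{app:samp}: decompose via the tree of Theorem~\ref{thm:indep:box} into type-$2$ subproblems, bound each subproblem's returned value in a range of width $b^{O(1)}$ (resp.\ $b^{O(1)}$ times the heaviest weight $w_i$), and apply Fact~\ref{fact:samp} with median boosting; collapsing the maxima over the $O((\log_b n)^2)$ level pairs into a single flat sum, and using $B=O(b^4)$ rather than the tighter $O(b^2)$, only cost constant factors. If anything you are slightly more careful than the paper on one point: in the weighted case the black box of Lemma~\ref{lem:indep:box} only guarantees $a_i\ge w_i/O(\log\log b)$ rather than $a_i\ge w_i$, and your rescaling to $w_i'=w_i/O(\log\log b)$ with $B=O(b^4\log\log b)$ correctly patches this (the paper instead implicitly assumes the minor modification of always returning at least the heaviest singleton).
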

\begin{proof}
In the algorithm in the proof of Theorem~\ref{thm:indep:box} (after minor modification), the size/weight of returned independent set can be expressed as the maximum of the sizes/weights of $O((\log_b n)^d)$ independent sets, each of which is a sum of the sizes/weights of independent sets for a number of type-$d$ subproblems handled by Lemma~\ref{lem:indep:box}.  In the unweighted case, the size of the independent set for each type-$d$ subproblem lies in the range from 1 to $B := O(b^{d})$ (after removing empty subproblems).  In the weighted case, the weight of the independent set for the $i$-th type-$d$ subproblem lies between $w_i$ and $Bw_i$ where $w_i$ is the largest rectangle weight in the $i$-th subproblem.  We can approximate the sum to within a constant factor by summing over $O(B)$ terms, via basic random sampling in the unweighted case or importance sampling in the weighted case (see Fact~\ref{fact:samp}), with error probability at most $1/4$  (which can be lowered by
by repeating for logarithmically many trials and returning the median).  Thus, the number of calls to Lemma~\ref{lem:indep:box} is $\OO(b^{d})$. 
The total running time is now $\OO(n+b^{O(1)})$, which is $\OO(n)$ by setting $b=n^\delta$ for a sufficiently small constant $\delta$.
\end{proof}

\begin{theorem}%\label{thm:indep:fat}
Let $d$ and $c$ be constants. % and $\delta>0$ be a parameter.
Given a set $S$ of $n$ weighted objects in $\R^d$ of constant description complexity from a $c$-fat collection $\CCC$, 
we can compute an $O(1)$-approximation to the weight of the maximum-weight independent set for $S$ in $\OO(n)$ time  w.h.p.\ by a static Monte Carlo randomized algorithm.
\end{theorem}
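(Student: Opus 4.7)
The plan is to adapt the algorithm of Theorem~\ref{thm:indep:fat} with importance sampling in exactly the same way that the weighted rectangle version was adapted in the previous theorem of this appendix. The static algorithm of Theorem~\ref{thm:indep:fat} builds a $b$-ary quadtree tree (after the constant number of shifts from Fact~\ref{fact:shift}) and, for each of the $O(\log_b n)$ levels, returns the largest-weight independent set among those levels. The independent set at a single level is a \emph{sum} of the independent sets produced by Lemma~\ref{lem:indep:fat} over the type-$d$ subproblems at that level (since objects at distinct nodes lie in disjoint cells). Our only modification is that at each level, instead of solving every subproblem, we estimate this sum by importance sampling.

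For the $i$-th type-$d$ subproblem at a given level, let $a_i$ denote the weight of the independent set produced by Lemma~\ref{lem:indep:fat}, and let $w_i$ be the maximum weight of an object in that subproblem. Clearly $a_i \ge w_i$ (the heaviest object alone is an independent set). On the other hand, inside Lemma~\ref{lem:indep:fat} the set $\hat S$ of class representatives has size $b^{O(1)}$, so the optimum independent set uses at most $b^{O(1)}$ objects and therefore $a_i \le b^{O(1)}\cdot w_i$. Thus $a_i \in [w_i, B w_i]$ for $B = b^{O(1)}$, which is exactly the regime for importance sampling in Fact~\ref{fact:samp}. We draw $\ell = O(B/\eps^2)$ samples with probabilities $w_i/W$ (where $W=\sum_i w_i$, easily maintained while building the tree), solve each sampled subproblem by Lemma~\ref{lem:indep:fat}, and form the estimator $X=\sum_{a_i\in R}a_i\,W/(w_i\ell)$. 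Repeating this $O(\log n)$ times and taking the median yields a $(1\pm\tfrac12)$-approximation to the level's total weight with high probability. Finally, we output the maximum of the $O(\log_b n)$ level-estimators; this gives an $O(1)$ approximation to the weight of the MIS (using that the returned value of Theorem~\ref{thm:indep:fat} is already an $O(\log_b n)\cdot O(1) = O(1/\delta)$-approximation, which we absorb into the $O(1)$ by setting $b = n^\delta$ for a sufficiently small constant $\delta$).

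For the running time, each sampled type-$d$ subproblem with $n_i$ objects is solved in $\OO(n_i+b^{O(1)})$ by Lemma~\ref{lem:indep:fat}. The sum of $n_i$ over all $O(\log n)$ trials and all samples at one level is $\OO(B\cdot \max_i n_i)$ in the worst case, but since $\max_i n_i \le n$ is not enough, we use that the $O(n/b)$ bound on subproblem size per level together with $B=b^{O(1)}$ and the choice $b=n^{\delta}$ with $\delta$ small makes the per-level cost $\OO(b^{O(1)}\cdot n^{1-\Omega(1)}) = \OO(n)$; summed over $O(1/\delta)$ levels this remains $\OO(n)$. To handle locating objects and maintaining the $w_i$'s within this time budget, we use the same quadtree construction as in the proof of Theorem~\ref{thm:indep:fat}, which already runs in $O(n)$ time.

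The main obstacle I anticipate is confirming the range-ratio bound $a_i/w_i \le b^{O(1)}$ rigorously: this uses the fact that inside Lemma~\ref{lem:indep:fat} the optimum independent set is drawn from at most $b^{O(1)}$ representative objects. The rest is bookkeeping: amplifying the median-of-trials sampling to succeed with probability $1-1/n^c$ for any constant $c$, and verifying that combining the multiplicative $(1\pm\tfrac12)$ sampling error on each level's sum with the max over $O(\log_b n)$ levels still yields an overall $O(1)$-approximation to the global maximum-weight independent set weight.
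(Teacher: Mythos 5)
Your proposal is correct and follows essentially the same route as the paper, which proves this theorem by the identical modification of Theorem~\ref{thm:indep:fat}: per-level importance sampling of the type-$d$ subproblems with range ratio $B=b^{O(1)}$ (justified exactly as you do, via $|\hat S|\le b^{O(1)}$), median-of-$O(\log n)$-trials amplification, taking the maximum over the $O(\log_b n)$ levels, and setting $b=n^\delta$. The only nitpick is that the Lemma's output satisfies $a_i\ge w_i/O(1)$ rather than $a_i\ge w_i$ (it is an $O(1)$-approximation, not the optimum), but this is absorbed into the constant in $B$.
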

\begin{proof}
    This follows by a similar modification to the proof of Theorem~\ref{thm:indep:fat} with importance sampling.
\end{proof}

\section{MVC: Fast Static Algorithms}

\subsection{Disks in $\R^2$}\label{app:disks}

In this subsection,
we give a static, near-linear-time, $(1+O(\eps))$-approximation algorithm $\AAA$ for MVC for disks under the promise that $n=O(\OPT)$.
As we are aiming for a $(1+O(\eps))$-factor approximation, 
we may tolerate an additive error of $O(\eps n)$
because of the promise.  MVC with $O(\eps n)$ additive error reduces
to MIS with $O(\eps n)$ additive error, by complementing the solution.

{\sf PTAS}s are known for MIS for disks, but allowing $O(\eps n)$ additive error, there are
actually {\sf EPTAS}s that run in near linear time.  For example, we can adapt an approach by Chan~\cite{Chan03}
based on divide-and-conquer via separators.
Specifically, we can use the following variant of Smith and Wormald's geometric separator theorem~\cite{SmithW98}:
%(see the appendix for a proof):  
%It is not
%difficult to see that the original $n^{O(1/\eps^2)}$ running time can be reduced to $\OO(2^{O(1/\eps^2)}n)$
%when we  allow $O(\eps n)$ additive error.  

\begin{lemma}[Smith and Wormald's separator]\label{lem:sep}
Given $n$ fat objects in a constant dimension $d$, there is an axis-aligned hypercube $B$, such that the
number of objects inside $B$ and the number of objects outside $B$ are both at most $(1-\beta)n$ for some constant $\beta>0$
(dependent on $d$),
and the objects intersecting $\partial B$ can be stabbed by $O(n^{1-1/d})$ points.  Furthermore, $B$ can be constructed in $O(n)$ time.
%Given $n$ disks in $\R^2$, there is an axis-aligned square $B$, such that the
%number of  disks inside $B$ and the number of disks outside $B$ are both at most $4n/5$,
%and the set of all disks intersecting $\partial B$ can be stabbed by $O(\sqrt{n})$
%points.  Furthermore, $B$ can be computed in $\OO(n)$ time.
\end{lemma}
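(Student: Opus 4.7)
The plan is to construct the separating hypercube by combining a centerpoint argument for the balance condition with an averaging argument over a one-parameter family of concentric nested hypercubes, and to use the fatness property to produce the $O(n^{1-1/d})$-size piercing set for the crossings.

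First, I would pick one representative point inside each of the $n$ objects and, in $O(n)$ time, compute an axis-aligned ``approximate centerpoint'' $p \in \R^d$ of these representatives, i.e., a point such that every axis-aligned open halfspace through $p$ contains at most $(1-\beta')n$ representatives for some dimension-dependent constant $\beta' > 0$. Coordinate medians along each axis suffice (with $\beta'=1/2$). From this, I would identify a ``balance-safe'' interval $[s_0, 2s_0]$ of side lengths such that every concentric hypercube $B(p,s)$ with $s \in [s_0, 2s_0]$ contains between $\beta n$ and $(1-\beta)n$ representatives, for a smaller constant $\beta > 0$; such an interval exists because the representative count, as a function of $s$, changes by one per event and the two extremes $s \to 0$ and $s \to \infty$ straddle the target range. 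This location can be found in $O(n)$ time by sorting coordinates of the representatives along each axis once.

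Next, inside $[s_0, 2s_0]$, consider $L = \Theta(n^{1/d})$ concentric candidate hypercubes $B_1 \subset \cdots \subset B_L$ at $p$, with side lengths spaced uniformly at increments $\Delta = s_0/L$. Each $B_i$ already satisfies the balance condition; it remains to pick one with few boundary crossings. I would charge each object $o$ to those $B_i$ whose boundary it crosses; by nestedness, the number of such $B_i$ is $O(\diam(o)/\Delta + 1)$. On the other side, I would cover each boundary $\partial B_i$ by $O(L^{d-1}) = O(n^{1-1/d})$ tiles of side $\Delta$; by the fatness property, the objects of diameter at least $\Delta$ that meet a given tile $T$ are pierced by $O(1)$ points, because $\diam(T) = \Theta(\Delta)$ and they qualify for the $c$-fatness guarantee at scale $T$. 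Objects of diameter smaller than $\Delta$ can cross only boundaries whose distance from $p$ is within $\Delta$ of theirs, so they contribute to the crossing count of $O(1)$ boundaries on average across $i$. Averaging the piercing contribution across $i = 1, \ldots, L$, some $B_i$ is crossed by objects admitting a piercing set of size $O(L^{d-1}) = O(n^{1-1/d})$, and we output this $B_i$ as $B$.

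Finally, to implement this in $O(n)$ time, I would maintain the objects-inside count and the piercing tally as $s$ sweeps monotonically from $s_0$ to $2s_0$, processing each object's entry/exit event once; with pre-sorted axis coordinates this yields a single linear pass across all $L$ candidates. The main obstacle is the joint step where the averaging must simultaneously yield the balance property and the $O(n^{1-1/d})$ piercing bound on a single $B_i$; handling small-diameter objects (which are not covered by the fatness argument at scale $\Delta$) correctly is the delicate point and is the crux of the Smith--Wormald theorem. Once this is established, the output $B = B_i$ satisfies both conditions and the lemma follows.
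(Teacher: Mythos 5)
Your overall architecture---a one-parameter family of $\Theta(n^{1/d})$ nested concentric hypercubes, an averaging argument charging small-diameter objects to the $O(1)$ boundaries they can cross, and the fatness property to pierce the large-diameter crossers with $O(1)$ points per boundary tile of side $\Delta$---is essentially the paper's argument. The difference, and the genuine gap, is in how you guarantee that \emph{every} member of the family is balanced. You fix a center $p$ (a coordinate-wise median) and assert that a ``balance-safe'' interval $[s_0,2s_0]$ of constant multiplicative width exists because the count of representatives in $B(p,s)$ is monotone and passes from $0$ to $n$. Monotonicity only gives you a single value $s^*$ at which the count lands in $[\beta n,(1-\beta)n]$ (and even that can fail under ties); it does not give you an interval of width comparable to $s_0$. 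Concretely, place $n/2^d$ representatives near each vertex of the unit hypercube centered at the origin: the coordinate medians put $p$ at the origin, every open axis-aligned halfspace through $p$ contains at most $n/2$ points, yet $B(p,s)$ contains $0$ points for $s<1$ and all $n$ for $s>1$, so no concentric hypercube at $p$ is balanced at all. More generally the set of balanced side lengths can be an interval of relative width $O(1/n)$; then either your $L$ candidates are spaced $\Delta \ll s_0/L$ apart (and each boundary needs $\omega(L^{d-1})$ tiles of side $\Delta$, destroying the piercing bound for large objects), or they leave the balanced range. The centerpoint/median property for halfspaces simply does not transfer to concentric hypercubes.

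The paper's fix is to abandon the fixed center and instead let $B_0$ be the \emph{smallest} hypercube (anywhere) containing at least $\frac{n}{2^d+1}$ center points, with side length $r$, and to grow the family from side length $r$ to $2r$ about the center of $B_0$. The lower bound on the inside-count holds throughout because every $B_t\supseteq B_0$; the upper bound holds because each $B_t$ has side length $<2r$ and is covered by $2^d$ hypercubes of side length $<r$, each of which, by minimality of $B_0$, contains fewer than $\frac{n}{2^d+1}$ points. This is the missing idea: minimality of $B_0$ is what forces balance across the \emph{entire} constant-multiplicative-width range, which your averaging step then exploits exactly as you describe. (A secondary point: computing this $B_0$ exactly in $O(n)$ time is itself nontrivial, which is why the paper rounds the centers to a constant-size grid and accepts an $O(n/C)$ perturbation of the counts; your sorting-based preprocessing is also not $O(n)$ as stated, though that is easily repaired.) The part of your argument you flag as the crux---handling the small-diameter objects---is actually fine; the real crux is the balance of the whole family.
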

\begin{proof}
(The following is a modification of a proof in~\cite{ChanH24}, which is based on Smith and Wormald's original proof~\cite{SmithW98}.)

For each object, first pick an arbitrary ``center'' point inside the object.
Let $B_0$ be the smallest hypercube that contains at least $\frac{n}{2^d+1}$ 
center points.  Let $r$ be the side length of $B_0$.
Let $h$ be a parameter to be set later.
For each $t\in\{\frac 1h, \frac2h,\ldots, \frac{h-1}{h}\}$,
let $B_t$ be the hypercube with a scaled copy of $B_0$
with the same center and side length $(1+t)r$.
Since $B_t$ can be covered by $2^d$ quadtree boxes of side length $<r$,
the number of center points inside $B_t$ is at most $\frac{2^d n}{2^d+1}$.

An object of diameter $\le r/h$ intersects $\partial B_t$ for at most $O(1)$ choices of $t$.
Thus, we can find a value $t$ in $O(n)$ time such that $\partial B_t$ intersects $O(n/h)$ objects
of diameter $\le r/h$.  On the other hand, the objects of diameter $> r/h$ intersecting $\partial B_t$
can all be stabbed by $O(h^{d-1})$ points, because of fatness.  Set $h=n^{1/d}$, and $B_t$ satisfies the desired properties
with $\beta=\frac{2^d}{2^d+1}$.

One remaining issue is how to find $B_0$ quickly. 
Let $C$ be a sufficiently large constant.  Form a $C\times\cdots\times C$ grid, so that the number of center points between any two consecutive parallel grid hyperplanes is $n/C$.  This takes $O(C n)$ time by 
a linear-time selection algorithm.  Round each center point to its nearest grid point.  The rounded
point set is a multiset with $O(C^d)$ distinct elements.
Redefine $B_0$ as the smallest hypercube that contains at most $\frac{n}{2^d+1}$ rounded center points (multiplicity included),
which can now be computed in $C^{O(1)}$ time.  For any box, the number of center points inside the box
changes by at most $O(n/C)$ after rounding.  So, we just need to increase $\beta$ by $O(1/C)$.
\IGNORE{
For each object, first pick a ``reference'' point inside the object.
Compute the smallest quadtree box $B^*$ (a hypercube) that contains at least $2^d\beta n$ 
reference points; this can be found in $\OO(n)$ time by constructing and traversing the
(compressed) quadtree for the $n$ reference points.
Now, $B^*$ has $2^d$ child boxes (of half the diameter), and one of the child boxes, denoted by
$B_0$, contains at least $\beta n$ reference points.  Let $r$ be the diameter of $B_0$.
Let $b$ be a parameter to be set later.
For $t\in\{\frac 1b, \frac2b,\ldots, \frac{b-1}{b}$,
let $B_t$ be the hypercube with a scaled copy of $B_0$
with the same center and diameter $(1+t)r$.
Since $B_t$ can be covered by $3^d$ quadtree boxes congruent to $B_0$,
we see that the number of reference points inside $B_t$ is $6^d\beta n=(1-\beta)n$,
by choosing $\beta_d=1/(6^d+1)$.

An object of diameter $\le r/b$ intersects $\partial B_t$ for at most $O(1)$ choices of $t$.
Thus, we can find some $t$ in $O(n)$ time such that $\partial B_t$ intersects $O(n/b)$ objects
of diameter $\le r/b$.  On the other hand, the objects of diameter $> r/b$ intersecting $\partial B_t$
can be stabbed by $O(b^{d-1})$ points because of fatness.  Set $b=n^{1/d}$, and $B_t$ satisfies the desired properties.
}    
\end{proof}

The MIS algorithm is simple: we just recursively compute an independent set for the disks inside $B$
and an independent set for the disks outside $B$, and take the union of the two sets.
If $n$ is below a constant $b$, we solve the problem exactly by brute force in $2^{O(C)}$ time.
(This is analogous to Lipton and Tarjan's original {\sf EPTAS} for independent set for planar graphs~\cite{LiptonT77}.)

Since the optimal independent set can have at most $O(\sqrt{n})$
disks intersecting $\partial B$, the total additive error satisfies a recurrence of the form
\[ E(n) \le  \left\{\begin{array}{ll}
 \max_{n_1,n_2\le (1-\beta)n:\ n_1+n_2\le n}(E(n_1)+E(n_2)+O(\sqrt{n})) & \mbox{if $n\ge b$}\\
 0 & \mbox{if $n<b$.}
 \end{array}\right.\]
This yields $E(n)=O(n/\sqrt{b})$.  We set $b=1/\eps^2$.
The running time of the resulting static algorithm $\AAA$ is $T(n)=\OO(2^{O(1/\eps^2)}n)$.

\subsection{Rectangles in $\R^2$}\label{app:rect}

In this subsection, we give a static, near-linear-time, $(\frac32+O(\eps))$-approximation algorithm for MVC for rectangles, by adapting Bar-Yehuda, Hermelin, and Rawitz's previous, polynomial-time, $(\frac32+O(\eps))$-approximation algorithm~\cite{bar2011minimum}.

\paragraph{Triangle-free case.}
We first start with the case when the intersection graph is triangle-free, or equivalently, when
the maximum depth of the rectangles is at most~2.  (The depth of a point $q$ among a set of rectangles refers to 
the number of rectangles containing $q$; the maximum depth refers to the maximum over all $q\in\R^2$.)

We will design a static algorithm $\AAA$ for MVC for rectangles, under 
the promise that $n\le (2+O(\eps))\,\OPT$.  

%We follow the method by Bar-Yehuda, Hermelin, and Rawitz~\cite{bar2011minimum}, which computes a near-$\frac32$-approximation for the MVC for rectangles in polynomial time. We show how to implement their algorithm in near-linear time.

We say that a rectangle $s$ \emph{dominates} another rectangle $s'$ if
$\partial s$ intersects $\partial s'$ four times, with $s$ having larger height than $s'$.
Bar-Yehuda, Hermelin, and Rawitz's algorithm proceeds as follows:

\begin{enumerate}
\item Decompose the input set $S$ into 2 subsets $S_1$ and $S_2$, such that there are no dominating pairs within each subset~$S_i$.

The existence of such a decomposition follows easily from Dilworth's theorem, but for an explicit construction,
 we can just define $S_1$ to contain all rectangles in $S$ that are not dominated
by any other rectangle in $S$, and define $S_2$ to be $S\setminus S_1$.
Correctness is easy to see (since the maximum depth is~2).

We can compute $S_1$ (and thus $S_2$) by performing $O(n)$ orthogonal range queries, after lifting each rectangle to 
a point in $\R^4$.  In fact, computing $S_1$ corresponds to the \emph{maxima} problem in $\R^4$, for which
$O(n\log n)$-time algorithms are known~\cite{ChanLP11}.

\item For each $i\in\{1,2\}$, compute a vertex cover $\XX_i$ of $S_i$ which approximates the minimum with additive
error at most $\eps n$.

Since we tolerate $\eps n$ additive error, it suffices to approximate the MIS of $S_i$
with $\eps n$ additive error.

Because $S_i$ has no dominating pairs, it forms a \emph{pseudo-disk} arrangement
(each pair's boundaries intersect at most twice).
It is known~\cite{bar2011minimum, KedemLPS86} that the intersection graph of any set of pseudo-disks
that have maximum depth 2 and have no containment pair is in fact planar.
(We can easily eliminate containment pair, by removing rectangles that contain another rectangle;
and we can detect such rectangles again by orthogonal range queries.)
So, we can use a known near-linear-time {\sf EPTAS} for MIS for planar graphs; for example, Lipton and Tarjan's
divide-and-conquer algorithm via planar-graph separators runs in $\OO(2^{O(1/\eps^2)}n)$ time~\cite{LiptonT77}.

\item Return $\XX$, the smaller of the two sets: $\XX_1\cup S_2$ and $\XX_2\cup S_1$.
\end{enumerate}

Let $\XX^*$ be the minimum vertex cover.
Then 
\begin{eqnarray*}
 |\XX| &\le& \tfrac{1}{2}(|\XX_1|+|S_2| + |\XX_2|+|S_1|)\\
 &\le& \tfrac{1}{2} (|\XX^*\cap S_1| + |\XX^*\cap S_2| + |S_1|+|S_2|) + \eps n
 \ =\ \tfrac{1}{2} (|\XX^*| + n) + \eps n\ \le\ (\tfrac32+O(\eps))|\XX^*|
\end{eqnarray*}
under the assumption that $n\le (2+O(\eps))|\XX^*|$.  The running time is
$T(n)=\OO(2^{O(1/\eps^2)}n)$.

At this point, if we apply Theorem~\ref{thm:main} with $c=\frac32+O(\eps)$, $\gamma=\eps$, and $\delta=\eps^3$ (and $\tau_0(n)=O(\log^{O(1)}n)$ as noted in Section~\ref{sec:specific}), we obtain:

\begin{corollary}\label{cor:rect:trifree}
There is a dynamic data structure for $n$ rectangles in $\R^2$
that maintains a $(\frac32+O(\eps))$-approximation of the minimum vertex cover
of the intersection graph, under insertions and deletions,
in $O(2^{O(1/\eps^2)}\log^{O(1)}n)$ amortized time, under the assumption that
the intersection graph is triangle-free.
\end{corollary}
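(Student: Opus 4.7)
The plan is to verify the two hypotheses of Theorem~\ref{thm:main} for the class of triangle-free rectangle arrangements and then invoke that framework as a black box with appropriately tuned parameters. First I would identify the dynamic intersection detection oracle $\DS_0$: a standard multi-level orthogonal range tree on the four coordinates defining each rectangle supports insertion, deletion, and ``does the query rectangle hit any stored rectangle?'' queries in $\tau_0(n) = O(\log^{O(1)} n)$ amortized time. Second, I would take the static algorithm $\AAA$ to be exactly the triangle-free procedure built in the paragraphs preceding the corollary: it returns a $(\tfrac{3}{2}+O(\eps))$-approximate vertex cover in $T(n) = \OO(2^{O(1/\eps^2)}\,n)$ time under the promise $n \le (2+O(\eps))\OPT$, by splitting into two pseudo-disk families via the dominance maxima computation and running Lipton--Tarjan's planar EPTAS on each.

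Next I would check that triangle-freeness is preserved at every place it is needed inside the framework. The framework's only other use of the input structure is through Lemma~\ref{lem:kernel}, whose kernel $K$ is a subset of the current object set $S$. Since triangle-freeness is hereditary, the induced graph $G[K]$ on the kernel is again triangle-free, so $\AAA$ may be applied to it; moreover $\OPT(G[K]) \le \OPT(G)$, so the size bound $|K| \le (2+O(\gamma))\,\OPT(G[K])$ demanded by $\AAA$'s promise is inherited from Lemma~\ref{lem:kernel}.

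Finally I would instantiate Theorem~\ref{thm:main} with $c = \tfrac{3}{2}+O(\eps)$, $\gamma = \Theta(\eps)$, and $\delta = \eps^3$. The theorem's approximation ratio then becomes $c\bigl(1 + O(\sqrt{\delta/\gamma} + \eps)\bigr) = (\tfrac{3}{2}+O(\eps))(1+O(\eps)) = \tfrac{3}{2}+O(\eps)$, and the amortized update time works out to
\[
\OO\!\left(\frac{1}{\delta^2\eps}\,\tau_0(n) + \frac{1}{\eps}\,\frac{T(n)}{n}\right)
\;=\; O\!\left(\frac{1}{\eps^{7}}\log^{O(1)} n + \frac{2^{O(1/\eps^2)}}{\eps}\right)
\;=\; O\bigl(2^{O(1/\eps^2)}\log^{O(1)} n\bigr),
\]
which matches the statement. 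I do not expect a genuine obstacle here: the corollary is essentially a bookkeeping verification that the triangle-free static algorithm slots cleanly into the general framework, and the only mild subtlety is confirming that hereditariness of triangle-freeness makes the kernel step legitimate.
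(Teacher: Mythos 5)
Your proposal is correct and follows essentially the same route as the paper: it plugs the triangle-free static algorithm $\AAA$ (with $T(n)=\OO(2^{O(1/\eps^2)}n)$) and the polylogarithmic orthogonal-range-searching oracle $\DS_0$ into Theorem~\ref{thm:main} with $c=\tfrac32+O(\eps)$, $\gamma=\Theta(\eps)$, $\delta=\eps^3$, exactly as the paper does. Your added remark that triangle-freeness is hereditary, so the kernel $G[K]$ remains a valid input for $\AAA$, is a correct (and slightly more careful) observation that the paper leaves implicit.
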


\paragraph{General case.}
We now design our final static algorithm $\AAA$ for MVC for rectangles that avoids the triangle-free assumption, again by building on
Bar-Yehuda,  Hermelin, and Rawitz's approach:

\begin{enumerate}
\item Remove vertex-disjoint triangles $T_1,\ldots,T_\ell$ so that
the remaining set $S' := S\setminus (T_1\cup\cdots \cup T_\ell)$ is
triangle-free.

We can implement this step greedily in polynomial time, but a faster approach is via a plane sweep.
Namely, we modify the standard sweep-line algorithm to computing the maximum depth of $n$ rectangles in $\R^2$ (similar to Klee's measure problem)~\cite{preparata2012computational}.
The algorithm uses a data structure for a 1D problem: maintaining the maximum depth of intervals in $\R^1$,
subject to insertions and deletions of intervals.  Simple modification of standard search trees achieve
$O(\log n)$ time per insertion and deletion.  As we sweep a vertical line $\ell$ from left to right, we maintain
the maximum depth of the $y$-intervals of the rectangles intersected by $\ell$.  When $\ell$ hits
the left side of a rectangle, we insert an interval.  When $\ell$ hits the right side of a rectangle,
we delete an interval.  When the maximum depth becomes 3, we remove the 3 intervals containing the 
maximum-depth point, which correspond to a triangle in the intersection graph, and continue the sweep.
The total time is $O(n\log n)$.

\item Compute a vertex cover $\XX'$ of $S'$ which is a $(\frac32+O(\eps))$-approximation to the minimum.
This can be done by Corollary~\ref{cor:rect:trifree} in $\OO(2^{O(1/\eps^2)}n)$ time (we actually don't need
the full power of a dynamic data structure, since we just want a static algorithm for this step).

\item Return $\XX = \XX'\cup T$ with $T := T_1\cup\cdots\cup T_\ell$.
\end{enumerate}

Let $\XX^*$ be the minimum vertex cover.  
The key observation is that $\XX^*$ must contain at least 2 of the 3 vertices in each triangle $T_i$,
and so $|\XX^*\cap T|\ge \frac23 |T|$.
Thus,
\[ |\XX|\ =\ |\XX'|+|T|\ \le\ (\tfrac32+O(\eps))|\XX^*\cap S'| + \tfrac32|\XX^*\cap T|\ \le\ (\tfrac32+O(\eps))|\XX^*|.
\]
The running time of our final static algorithm $\AAA$ is $T(n)=\OO(2^{O(1/\eps^2)}n)$.

\end{document}